\documentclass[a4paper,11pt]{article}
\pdfoutput=1
\usepackage{jheppub}
\usepackage[a4paper,left=2.05cm,right=2.05cm,top=2.35cm,bottom=2.35cm]{geometry}
\usepackage{amsthm}
\usepackage{booktabs}
\usepackage{array}
\usepackage{mathtools}
\usepackage{longtable}
\usepackage{pdflscape}
\usepackage{url}
\newcommand{\cO}{\mathcal O}
\newcommand{\bP}{\mathbb P}
\newcommand{\bZ}{\mathbb Z}
\newcommand{\bC}{\mathbb C}
\newcommand{\bQ}{\mathbb Q}
\newcommand{\bF}{\mathbb F}
\newcommand{\HF}{\operatorname{HF}}
\newcommand{\HS}{\operatorname{HS}}

\DeclareMathOperator{\rank}{rank}
\DeclareMathOperator{\coker}{coker}

\newtheorem{theorem}{Theorem}

\newtheorem{lemma}{Lemma}
\newtheorem{conjecture}{Conjecture}

\title{\boldmath Hilbert Functions
and Line Bundle Cohomology on CICY Threefolds}

\author{Juntao Wang}
\affiliation{Beijing Institute of Mathematical Sciences and Applications,
Beijing 101408, China}
\emailAdd{jtwang.bimsa@gmail.com}

\abstract{Line bundle cohomology on complete intersection Calabi--Yau
threefolds is an important input in string phenomenology. Previous work,
based on direct extrapolation and on machine-learning analyses of finite data
sets, has shown that these cohomology dimensions often admit chamber-wise
polynomial descriptions on the Picard lattice. In this paper we explain this
structure through the Hilbert functions associated with the non-trivial maps
in the Koszul spectral sequence. Once Bott--Borel--Weil theory determines the
non-vanishing ambient cohomology groups, the rank defects of the relevant
Koszul maps are governed by Hilbert functions of cokernel or kernel
modules. This turns many empirical chamber formulae into explicit
rank-defect statements, with proofs that are either analytic or certified on
finite boxes. Using this approach, we recover analytically almost all of the
piecewise formulae appearing in the existing literature. For the remaining
cases, we prove finite-box certificate theorems for infinite families of line
bundles in the specified regions. We also identify new wall structures on
certain CICYs which refine the chamber decompositions suggested by
finite-range data.
Finally, we use the same framework to construct line bundle cohomology formula
libraries for two CICY threefolds which had not previously been analysed in
this way. These results suggest that Hilbert-function methods can provide
promising faster inputs for future string model-building scans based on line
bundle constructions.}

\keywords{Calabi--Yau manifolds, sheaf cohomology, Hilbert functions, string model building}

\begin{document}
\maketitle
\flushbottom

\section{Introduction}

Line bundle and bundle-valued cohomology are basic inputs in string model
building. For example, in heterotic compactifications, they determine spectra,
vector-like matter, singlets and many other physics quantities. They are
needed in large-scale scans of monad bundle models
\cite{AndersonHeLukas2007,AndersonHeLukas2008,AndersonGrayHeLukas2010,HeLeeLukas2010,HeKreuzerLeeLukas2011,BuchbinderConstantinLukas2014}
and line bundle sum models on complete intersection Calabi--Yau threefolds in
products of projective spaces (CICYs)
\cite{CandelasEtAl1988,GreenHubsch1987,AndersonGrayLukasPalti2011,AndersonGrayLukasPalti2012,AndersonConstantinGrayLukasPalti2014,GrootNibbelinkLoukasRuehle2015},
toric hypersurfaces \cite{HeLeeLukasSun2014}, elliptically fibered
Calabi--Yau threefolds \cite{BraunBrodieLukas2018}, and generalized CICYs
\cite{AndersonApruzziGaoGrayLee2015,LarforsPassaroSchneider2021}.
Line-bundle cohomology also appears in related model-building problems,
including Yukawa and spectral questions and many other questions
\cite{BraunHeOvrut2013,GrayWang2019,ConstantinLeungLukasNutricati2026,Liu2022Thesis,CveticGarciaEtxebarriaHalverson2011}.

The practical importance of these computations has led to dedicated software
and search strategies.  Existing CICY computations for monad model-building
and line-bundle sum model-building use Koszul-sequence methods developed in
the early algorithmic heterotic literature cited above, while cohomCalg
implements methods for
line-bundle cohomology on toric Calabi--Yau manifolds
\cite{BlumenhagenJurkeRahnRoschy2010Algorithm,BlumenhagenJurkeRahnRoschy2012Applications}.
More recent model-building searches use reinforcement learning, genetic
algorithms, and quantum annealing to navigate large spaces of line-bundle or
monad data
\cite{LarforsSchneider2020Explore,ConstantinHarveyLukas2022,AbelConstantinHarveyLukas2022,AbelConstantinHarveyLukasNutricati2024}.
These developments make the same bottleneck increasingly visible: one must
evaluate many cohomology dimensions, often for large families of line
bundles, and the expensive part is usually the rank of the induced Koszul maps.

For CICYs, the standard calculation resolves \(\mathcal O_X(k)\) by the
ambient Koszul complex, computes ambient cohomology by the
Bott--Borel--Weil theorem and the K\"unneth formula, and then determines the
ranks of the maps induced by the defining equations
\cite{Bott1957,Anderson2008Thesis}.  The ambient dimensions are explicit
binomial expressions, but the
induced maps are often large and their ranks vary with the line-bundle vector
\(k\).  Direct rank computation is reliable, but it can become expensive in
many cases, making systematic scans very time consuming.

Several recent studies have approached this bottleneck from different
directions, and substantial evidence has been gathered that these ranks are
highly structured.  Chamber-wise formulae were found for several CICYs in
\cite{ConstantinLukas2019}, and for favourable codimension-two CICYs in
\cite{LarforsSchneider2019}.  It was also shown that line-bundle cohomology
formulae can be reconstructed from finite data by machine-learning methods
\cite{KlaewerSchlechter2019,BrodieConstantinDeenLukas2020}.  Line-bundle
cohomology under flops and across extended Kahler cones was studied in
\cite{BrodieConstantinLukas2022Flops}.  In \cite{Constantin2024},
line-bundle cohomology generating functions are constructed from
Hilbert--Poincare series associated with coordinate rings and Cox rings; see also
\cite{BrodieConstantinGrayLukasRuehle2021RecentDevelopments} for recent
developments in line-bundle cohomology and applications to string
phenomenology.  The common outcome is that line-bundle cohomology is often
governed by piecewise polynomial formulae.

Empirically, the main phenomenon is therefore the appearance of piecewise
chamber structures with Hilbert-function-like behaviour in the line-bundle
variables.  These chambers are polynomial in nature, but they are separated
by walls where additional corrections appear and where naive maximal-rank
expectations fail.  Some of these walls are already forced by the
Bott--Borel--Weil decomposition of the ambient cohomology groups, while
others are not visible from the ambient cohomology data alone.

The point of this paper is to explain this phenomenon from the perspective
of Hilbert functions.  After Bott--Borel--Weil identifies the nonzero
ambient cohomology rows, Serre duality converts top cohomology into dual
polynomial pieces.  The relevant Koszul differential then becomes a
homogeneous map of graded or multigraded modules over a polynomial ring in
the remaining variables,
\begin{equation}
  \Phi:\bigoplus_j R(-\alpha_j)
  \longrightarrow
  \bigoplus_\ell R(-\beta_\ell).
  \label{eq:intro-graded-module-map}
\end{equation}
For a fixed line bundle, the matrix appearing in the CICY computation is the
degree \(b=b(k)\) component \(\Phi_b\).  Hence, if \(M=\coker\Phi\), then
\begin{equation}
  \dim\coker(\Phi_b)=\HF(M,b),
  \qquad
  \rank(\Phi_b)
  =
  \dim\!\left(\bigoplus_\ell R_{b-\beta_\ell}\right)-\HF(M,b).
  \label{eq:intro-rank-defect-hf}
\end{equation}
The same dictionary applies to endpoint maps, where kernel Hilbert functions
appear instead of cokernel Hilbert functions.  In this way, chamber formulae
on the Picard lattice can be interpreted as Hilbert functions of remaining
modules evaluated along degrees depending affinely on \(k\).

This viewpoint explains several structural features of known formulae.  A
free resolution of the graded module determines a Hilbert series whose
numerator encodes the syzygies responsible for rank defects.  Extracting
coefficients then produces the piecewise polynomial contributions and gives
an algebraic explanation for their wall behaviour.  A key advantage is that
the kernel or cokernel is treated as a graded module, so its Hilbert series
encodes all degrees at once and therefore controls infinitely many line
bundles.  This often reveals structures which are difficult to see from
finite-box formula extrapolation alone; several such refinements appear
below.

Analytic minimal resolutions of the residual graded modules are still
difficult to obtain in many cases.  However, the Hilbert numerator can also be
certified by Gr\"obner-basis computations or by rank certificates for
finitely many matrices, without losing the cohomological information needed
here.  This gives theorem-level finite-box statements on specified regions,
often leaving some coordinates free and hence covering infinitely many line
bundles.  These computations also indicate where further analytic structure
should emerge.

From a computational perspective, a certified Hilbert function replaces large
families of Koszul rank computations.  Once the graded module is
understood, cohomology in the corresponding chamber is obtained by
coefficient extraction or closed formulae.  This provides a practical route
toward efficient line-bundle cohomology libraries for CICYs and, potentially,
for related toric Calabi--Yau threefolds and their freely acting quotients.

The results of this paper are as follows.  First, we recover analytically
almost all of the known chamber formulae treated here by identifying the
corresponding Hilbert functions in the remaining variables.  Second, the
Hilbert numerators reveal additional structures not visible from
source--target dimensions or finite-range polynomial fitting, including
hidden subwalls inside fixed ambient-cohomology regions and delayed wall
contributions beyond naive maximal-rank expectations.  Third, we obtain new
line-bundle cohomology formulae for CICY examples not previously treated in
this framework, combining analytic Hilbert-series methods with replayable
finite-box certificates.

The paper is organised as follows.  Section~\ref{sec:koszul-to-hilbert}
introduces the Koszul spectral sequence for line bundles and explains
the Hilbert-function rank-defect dictionary.  Section~\ref{sec:hf-to-cohomology}
applies this dictionary to known hypersurface and codimension-two formulae.
Section~\ref{sec:new-formulas-rank-defects} presents new wall structures and
new formula packages.  The appendices contain the technical reductions,
exactness statements, and certificate formats used in the finite-box
arguments.

\section{From CICY Koszul maps to Hilbert functions}
\label{sec:koszul-to-hilbert}

Line bundle cohomology on a CICY is computed via the Koszul resolution on the ambient space. After the Bott–Borel–Weil theorem determines which ambient cohomology groups are nonzero, the remaining task is to compute the ranks of the maps induced by the defining equations. In each degree, these maps are graded components of homogeneous maps between graded modules over the Cox ring of the ambient space. Their rank defects in a given degree are therefore given by the value of the Hilbert function of the corresponding cokernel module in that degree. This observation links the Koszul spectral sequence to the chamber-by-chamber formulae for Hilbert functions studied below.

\subsection{The Koszul spectral sequence}

Let
\begin{equation}
  A=\prod_{i=1}^m \bP^{n_i}
\end{equation}
and let \(X\subset A\) be a smooth complete intersection cut out by a
regular sequence
\begin{equation}
  f_1,\ldots,f_c,\qquad \deg(f_\alpha)=d_\alpha\in \bZ^m .
\end{equation}
\(X\) is a CICY threefold when
\[
  \dim X=\sum_i n_i-c=3,\quad
  \sum_{\alpha=1}^c d_{\alpha i}=n_i+1
  \quad\text{for each factor }\bP^{n_i}.
\]
For \(k=(k_1,\ldots,k_m)\in\bZ^m\), write
\[
  \cO_X(k)=\cO_A(k)|_X .
\]
The ambient Koszul resolution of this line bundle is
\begin{equation}
0\to
\bigwedge^c E^\vee\otimes \cO_A(k)
\to \cdots \to
E^\vee\otimes \cO_A(k)
\to \cO_A(k)\to \cO_X(k)\to 0,
\label{eq:general-koszul}
\end{equation}
where
\begin{equation}
  E=\bigoplus_{\alpha=1}^c \cO_A(d_\alpha).
\end{equation}
Equivalently, the \(p\)-th Koszul term is
\begin{equation}
K_p(k)=
\bigoplus_{\substack{I\subset\{1,\ldots,c\}\\ |I|=p}}
\cO_A\Big(k-\sum_{\alpha\in I}d_\alpha\Big),
\qquad 0\leq p\leq c.
\label{eq:kp}
\end{equation}
The differential \(K_p(k)\to K_{p-1}(k)\) is the signed contraction with
\((f_1,\ldots,f_c)\).

Taking hypercohomology gives the standard Koszul spectral sequence
\begin{equation}
E_1^{-p,q}=H^q(A,K_p(k)),
\qquad 0\le p\le c,
\label{eq:e1-tableau}
\end{equation}
converging to
\begin{equation}
E_1^{-p,q}=H^q(A,K_p(k))
\quad\Longrightarrow\quad
H^{q-p}(X,\cO_X(k)).
\label{eq:general-spectral-sequence}
\end{equation}
The first differential is induced by the Koszul differential,
\begin{equation}
d_1:E_1^{-p,q}\longrightarrow E_1^{-p+1,q}.
\label{eq:d1-definition}
\end{equation}
More generally,
\begin{equation}
d_r:E_r^{-p,q}\longrightarrow E_r^{-p+r,q-r+1},
\label{eq:dr-definition}
\end{equation}
and
\begin{equation}
E_{r+1}^{-p,q}
=
\frac{
\ker\!\left(
d_r:E_r^{-p,q}\to E_r^{-p+r,q-r+1}
\right)
}{
\operatorname{im}\!\left(
d_r:E_r^{-p-r,q+r-1}\to E_r^{-p,q}
\right)
}.
\label{eq:er-plus-one-definition}
\end{equation}
Thus the higher differentials are not extra choices: they are the connecting
maps forced by the same Koszul differential after the preceding kernels and
images have been taken.

The stable page gives a filtration of \(H^i(X,\cO_X(k))\), with graded
pieces \(E_\infty^{-p,i+p}\).  In particular,
\begin{equation}
h^i(X,\cO_X(k))
=
\sum_p \dim E_\infty^{-p,i+p}.
\label{eq:cohomology-from-einfty}
\end{equation}
Consequently, computing line bundle cohomology separates into two parts:
first determining the ambient groups \(H^q(A,K_p(k))\), and then determining
the ranks of the differentials between them.

The first part is explicit.  On a projective factor \(\bP^n\),
Bott--Borel--Weil gives
\begin{equation}
H^j(\bP^n,\cO_{\bP^n}(a))\neq 0
\end{equation}
only in the following cases:
\begin{equation}
\begin{array}{c|c|c}
\text{range of }a & \text{nonzero cohomology} & \text{dimension}\\
\hline
a\geq 0 & H^0 & \binom{a+n}{n}\\
-n\leq a\leq -1 & \text{none} & 0\\
a\leq -n-1 & H^n & \binom{-a-1}{n}.
\end{array}
\label{eq:overview-projective-space-cohomology-table}
\end{equation}
By the K\"unneth formula, the cohomology of a line bundle on \(A\) is the tensor product
of the corresponding factorwise cohomology groups.  As \(k\) varies, the
inequalities in \eqref{eq:overview-projective-space-cohomology-table} divide the Picard lattice into corresponding ambient-cohomology regions.

Inside one such region, every nonzero ambient group is a tensor product of
polynomial spaces and dual polynomial spaces.  We use
\[
H^0(\bP^n,\cO(a))\simeq \bC[x_0,\ldots,x_n]_a
\qquad (a\ge0),
\]
and, by Serre duality,
\[
H^n(\bP^n,\cO(-a-n-1))^\vee
\simeq \bC[x_0,\ldots,x_n]_a
\qquad (a\ge0).
\]
After choosing monomial bases, the maps between nonzero ambient groups
become concrete polynomial multiplication maps induced by the defining
equations \(f_\alpha\).  The dimensions of kernels and cokernels of these
maps are the nontrivial input in the cohomology calculation.

\subsection{A concrete codimension-two example with a
\texorpdfstring{\(d_2\)}{d2} map}
\label{subsec:explicit-d2-example}

We next recall a simple example in which a higher differential can be written
explicitly as a multiplication map.  Consider the codimension-two CICY
\begin{equation}
X=
\left[
\begin{array}{c|cc}
\bP^1_x & 1 & 1\\
\bP^4_y & 2 & 3
\end{array}
\right]
\subset A=\bP^1_x\times\bP^4_y .
\label{eq:sec2-d2-example-cicy}
\end{equation}
Write the two defining equations as
\begin{equation}
F=x_0 f_0(y)+x_1 f_1(y),\qquad
G=x_0 g_0(y)+x_1 g_1(y),
\label{eq:sec2-FG}
\end{equation}
where \(f_0,f_1\) have degree two and \(g_0,g_1\) have degree three in the
\(y\)-variables.  We compute the cohomology of
\[
L=\cO_X(0,b),\qquad b\ge5.
\]
The Koszul terms are
\begin{equation}
K_0=\cO_A(0,b),\qquad
K_1=\cO_A(-1,b-2)\oplus\cO_A(-1,b-3),\qquad
K_2=\cO_A(-2,b-5).
\label{eq:sec2-d2-koszul-terms}
\end{equation}
Since \(H^\bullet(\bP^1,\cO(-1))=0\), the middle column has no ambient
cohomology.  The only nonzero entries of the \(E_1\)-page are
\begin{align}
E_1^{0,0}
&=H^0(A,K_0)
\simeq H^0(\bP^4,\cO(b)),\\
E_1^{-2,1}
&=H^1(A,K_2)
\simeq H^1(\bP^1,\cO(-2))\otimes H^0(\bP^4,\cO(b-5)).
\label{eq:sec2-poly-representatives}
\end{align}
Let \(\eta\) denote the standard generator of
\(H^1(\bP^1,\cO(-2))\).  On the standard affine cover of \(\bP^1\), it may
be represented by
\begin{equation}
\eta=\frac{1}{x_0x_1}.
\label{eq:sec2-eta-representative}
\end{equation}
Multiplication by either \(x_0\) or \(x_1\) sends this class to a
coboundary in \(\cO(-1)\).  This is why the \(d_1\)-image through the
acyclic middle column vanishes on cohomology, while still leaving a
nontrivial connecting map
\begin{equation}
d_2:E_2^{-2,1}\longrightarrow E_2^{0,0}.
\label{eq:sec2-example-d2}
\end{equation}

A direct \v{C}ech computation gives, up to an overall sign,
\begin{equation}
d_2(\eta\,h)
=
(f_0g_1-f_1g_0)\,h,
\label{eq:sec2-resultant-d2}
\end{equation}
for \(h\in H^0(\bP^4,\cO(b-5))\).  Thus the higher differential is
multiplication by the degree-five polynomial
\begin{equation}
\Delta=f_0g_1-f_1g_0
\in H^0(\bP^4,\cO(5)),
\label{eq:sec2-resultant}
\end{equation}
the determinant of the \(2\times2\) coefficient matrix of \(F\) and \(G\) in
the \(\bP^1\)-variables.  Equivalently, if
\[
R=\bC[y_0,\ldots,y_4],
\]
then the relevant map is the degree-\(b\) part of
\begin{equation}
R(-5)\stackrel{\cdot\Delta}{\longrightarrow}R,
\end{equation}
namely
\begin{equation}
m_\Delta(b):
H^0(\bP^4,\cO(b-5))\longrightarrow H^0(\bP^4,\cO(b)),
\qquad
h\longmapsto \Delta h.
\label{eq:sec2-resultant-map}
\end{equation}
Since \(R\) is a domain and \(\Delta\ne0\) for a general complete
intersection, this multiplication map is injective.  Hence
\begin{equation}
H^0(X,\cO_X(0,b))
\simeq
H^0(\bP^4,\cO(b))/
\Delta H^0(\bP^4,\cO(b-5)),
\qquad
H^i(X,\cO_X(0,b))=0\quad (i>0),
\label{eq:sec2-example-cohomology}
\end{equation}
and
\begin{equation}
h^0(X,\cO_X(0,b))
=
\binom{b+4}{4}-\binom{b-1}{4}.
\label{eq:sec2-example-h0}
\end{equation}

This example illustrates the general pattern.  The spectral-sequence
calculation produces a finite-dimensional map for each value of \(b\), but
these maps are not unrelated.  They are the graded pieces of one homogeneous
map of \(R\)-modules.  If
\[
\Delta=\sum_{|\alpha|=5} c_\alpha y^\alpha,
\]
then the matrix of \(m_\Delta(b)\) has columns indexed by monomials
\(y^\mu\) with \(|\mu|=b-5\), rows indexed by monomials \(y^\nu\) with
\(|\nu|=b\), and entries
\begin{equation}
\big(m_\Delta(b)\big)_{\nu,\mu}
=
\begin{cases}
c_{\nu-\mu},&\nu-\mu\in\bZ_{\ge0}^5,\ |\nu-\mu|=5,\\
0,&\text{otherwise}.
\end{cases}
\label{eq:sec2-resultant-matrix}
\end{equation}
Thus varying \(b\) amounts to taking different graded parts of a fixed
multiplication map.

\subsection{Graded modules and rank defects}

We now formulate the preceding observation in the language used in the rest
of the paper.  Let \(R\) be the Cox ring of the ambient space.  For a single projective space this
is just
\[
R=\bC[z_0,\ldots,z_n],
\qquad \deg z_\mu=1,
\]
with \(R_d\simeq H^0(\bP^n,\cO(d))\).  For a product of projective spaces it
is multigraded:
\[
R=\bC[z_{i,\mu}\mid 1\le i\le s,\ 0\le \mu\le n_i],
\qquad
\deg z_{i,\mu}=e_i\in\bZ^s,
\]
and
\[
R_{(d_1,\ldots,d_s)}
\cong
H^0\!\left(
\bP^{n_1}\times\cdots\times\bP^{n_s},
\cO(d_1,\ldots,d_s)
\right)
\]
whenever all \(d_i\ge0\).

Inside a fixed ambient-cohomology region, the relevant Koszul differential is a degree
piece of a homogeneous map
\begin{equation}
\Phi:
F=\bigoplus_a R(-\alpha_a)
\longrightarrow
G=\bigoplus_\ell R(-\beta_\ell).
\label{eq:sec2-module-map}
\end{equation}
For a given line bundle \(k\), the actual finite-dimensional matrix is the
degree \(\mathbf b=\mathbf b(k)\) piece
\begin{equation}
\Phi_{\mathbf b}:F_{\mathbf b}\longrightarrow G_{\mathbf b}.
\label{eq:sec2-degree-piece}
\end{equation}
Here \(\mathbf b(k)\) is an affine function of the line bundle parameters in
the chamber.  If
\[
M=\coker\Phi,
\]
then \(M\) is a graded \(R\)-module and
\[
M_{\mathbf b}
=
G_{\mathbf b}/\Phi_{\mathbf b}(F_{\mathbf b})
=
\coker(\Phi_{\mathbf b}).
\]
Its Hilbert function is
\begin{equation}
\HF(M,\mathbf b)=\dim_\bC M_{\mathbf b}.
\label{eq:sec2-hf-definition}
\end{equation}
Therefore
\begin{equation}
\rank\Phi_{\mathbf b}
=
\dim G_{\mathbf b}-\HF(M,\mathbf b).
\label{eq:sec2-rank-defect}
\end{equation}
This is the rank-defect principle.  The failure of the matrix
\(\Phi_{\mathbf b}\) to have full target rank is measured by the Hilbert
function of one fixed graded module.  Kernels are obtained from the same
identity:
\begin{equation}
\dim\ker\Phi_{\mathbf b}
=
\dim F_{\mathbf b}-\dim G_{\mathbf b}+\HF(M,\mathbf b).
\label{eq:sec2-kernel-from-cokernel}
\end{equation}
Thus every kernel or cokernel contribution in the Koszul spectral sequence
can be expressed in terms of Hilbert functions of graded modules in the remaining variables.

A free resolution packages these Hilbert functions uniformly.  Suppose
\begin{equation}
0\to F_s\to F_{s-1}\to\cdots\to F_1\to F_0\to M\to0,
\qquad
F_i=\bigoplus_j R(-\gamma_{ij}).
\label{eq:sec2-free-resolution}
\end{equation}
Then additivity of dimensions in each degree gives
\begin{equation}
\HS(M;\mathbf t)
=
\HS(R;\mathbf t)
\sum_i(-1)^i\sum_j \mathbf t^{\gamma_{ij}} .
\label{eq:sec2-hs-from-resolution}
\end{equation}
The polynomial
\begin{equation}
N_M(\mathbf t)=\sum_i(-1)^i\sum_j \mathbf t^{\gamma_{ij}}
\label{eq:sec2-hilbert-numerator}
\end{equation}
is the Hilbert numerator.  Once \(N_M\) is known, the required dimensions
are obtained by coefficient extraction from \(\HS(M;\mathbf t)\).

In computations it is often unnecessary, and sometimes impractical, to
display a full minimal free resolution.  It is enough to certify the Hilbert
function or Hilbert numerator.  One standard route is to compute an initial
module: after a term order has been fixed, Gr\"obner degeneration preserves
Hilbert functions.  If \(\operatorname{in}(M)\) denotes the monomial module
defined by the leading terms of a Gr\"obner basis for the presentation of
\(M\), then
\[
\HF(M,\mathbf b)=\HF(\operatorname{in}(M),\mathbf b).
\]
This is the computational commutative-algebra form of the rank-defect
principle used in the finite-box certificates below.

The example of the previous subsection corresponds to
\[
M=R/(\Delta).
\]
Indeed,
\begin{equation}
M_b=(R/(\Delta))_b=R_b/\Delta R_{b-5},
\label{eq:sec2-example-module-piece}
\end{equation}
and hence
\begin{equation}
h^0(X,\cO_X(0,b))
=
\HF(R/(\Delta),b).
\label{eq:sec2-h0-as-hf}
\end{equation}
The free resolution
\begin{equation}
0\longrightarrow R(-5)
\stackrel{\cdot\Delta}{\longrightarrow}
R\longrightarrow R/(\Delta)\longrightarrow0
\label{eq:sec2-delta-resolution}
\end{equation}
gives
\begin{equation}
\HS(R/(\Delta);t)
=
\frac{1-t^5}{(1-t)^5}.
\label{eq:sec2-delta-hilbert-series}
\end{equation}
Extracting the coefficient of \(t^b\) recovers
\begin{equation}
\HF(R/(\Delta),b)
=
\binom{b+4}{4}-\binom{b-1}{4},
\label{eq:sec2-coefficient-extraction}
\end{equation}
with the convention that \(R_{b-5}=0\) for \(b<5\).  This is the simplest
instance of a chamber formula arising from a Hilbert function.

\subsection{Elementary rank-defect models}

We finish with three small algebraic models.  They are not meant as new
results; their role is to fix the notation and to illustrate the types of
graded modules which occur later.

\paragraph{One equation.}
Let
\[
R=\bC[z_0,\ldots,z_n],
\]
and let \(H\in R_d\) be a nonzero homogeneous polynomial.  At degree \(b\),
multiplication by \(H\) gives
\begin{equation}
\mu_b:
R_{b-d}\longrightarrow R_b,
\qquad
u\longmapsto uH .
\label{eq:sec2-scalar-degree-map}
\end{equation}
The rank defect is
\begin{equation}
D_H(b):=\dim\coker\mu_b=\dim R_b-\rank\mu_b .
\label{eq:sec2-scalar-rank-defect-direct}
\end{equation}
Since \(R\) is a domain, \(\mu_b\) is injective whenever \(R_{b-d}\ne0\).
Thus
\begin{equation}
D_H(b)=
\binom{b+n}{n}
-\binom{b-d+n}{n},
\label{eq:sec2-one-equation-defect}
\end{equation}
where the second term is omitted when \(b<d\).  Equivalently,
\[
\coker\mu_b=(R/(H))_b,
\]
and
\begin{equation}
0\longrightarrow R(-d)
\stackrel{\cdot H}{\longrightarrow}
R\longrightarrow R/(H)\longrightarrow0
\label{eq:sec2-one-equation-resolution-new}
\end{equation}
gives
\begin{equation}
\HS(R/(H);t)=\frac{1-t^d}{(1-t)^{n+1}},
\qquad
\HF(R/(H),b)
=
\binom{b+n}{n}-\binom{b-d+n}{n}.
\label{eq:sec2-one-equation-hf-new}
\end{equation}

\paragraph{A scalar complete intersection.}
Let
\[
R=\bC[x_0,x_1,x_2],
\]
and let \(A,B\in R_2\) be two generic quadrics.  Consider
\begin{equation}
\Phi_b:R_{b-2}^{\oplus2}\longrightarrow R_b,
\qquad
(u,v)\longmapsto uA+vB .
\label{eq:sec2-model-scalar-map}
\end{equation}
The image is the degree-\(b\) part of the ideal \((A,B)\), so
\[
\coker\Phi_b=(R/(A,B))_b.
\]
Since \(A,B\) form a regular sequence, the Koszul resolution is
\begin{equation}
0\longrightarrow R(-4)
\longrightarrow R(-2)^2
\stackrel{(A\ B)}{\longrightarrow} R
\longrightarrow R/(A,B)\longrightarrow0 .
\label{eq:sec2-model-scalar-resolution}
\end{equation}
Hence
\begin{equation}
\HS(R/(A,B);t)
=
\frac{(1-t^2)^2}{(1-t)^3},
\end{equation}
and
\begin{equation}
\HF(R/(A,B),b)
=
\binom{b+2}{2}
-2\binom{b}{2}
+\binom{b-2}{2}.
\label{eq:sec2-model-scalar-hf}
\end{equation}
Thus
\[
\rank\Phi_b
=
\binom{b+2}{2}-\HF(R/(A,B),b).
\]
For \(b\ge4\), the Hilbert function is the constant \(4\), the length of
the complete intersection of two quadrics in \(\bP^2\).  The appearance of
the final term in \eqref{eq:sec2-model-scalar-hf} is already a simple
example of a chamber wall in degree.

\paragraph{A vector-valued cokernel.}
Finally, active maps need not have scalar targets.  Let
\[
R=\bC[x,y,z],
\]
and consider
\begin{equation}
\Phi:R(-1)^3\longrightarrow R^2,
\qquad
(u,v,w)\longmapsto (xu+yv,\;xv+yw).
\label{eq:sec2-model-vector-map}
\end{equation}
Equivalently,
\[
\Phi=
\begin{pmatrix}
x&y&0\\
0&x&y
\end{pmatrix}.
\]
At degree \(b\) this gives
\[
\Phi_b:R_{b-1}^{\oplus3}\longrightarrow R_b^{\oplus2}.
\]
The cokernel is a genuine module \(M=\coker\Phi\), not a quotient ring
\(R/I\).  The kernel of \(\Phi\) is generated by
\[
(y^2,-xy,x^2),
\]
and hence
\begin{equation}
0\longrightarrow R(-3)
\stackrel{(y^2,-xy,x^2)}{\longrightarrow}
R(-1)^3
\stackrel{\Phi}{\longrightarrow}
R^2
\longrightarrow M
\longrightarrow0 .
\label{eq:sec2-model-vector-resolution}
\end{equation}
Therefore
\begin{equation}
\HS(M;t)=
\frac{2-3t+t^3}{(1-t)^3},
\end{equation}
so
\begin{equation}
\HF(M,b)
=
2\binom{b+2}{2}
-3\binom{b+1}{2}
+\binom{b-1}{2}.
\label{eq:sec2-model-vector-hf}
\end{equation}
Consequently,
\[
\rank\Phi_b=\dim R_b^{\oplus2}-\HF(M,b).
\]
This is the prototype for the vector-valued active maps appearing later:
the rank defect is still a Hilbert function, but the relevant object is a
cokernel module rather than a scalar quotient ring.

\section{From Hilbert functions to line bundle cohomology}
\label{sec:hf-to-cohomology}
In this section, we apply the mechanism explained in the last section to examples studied in the literature. We will recover most of their formula analytically and put those where new structure pops up in the next section.  We will first discuss the hypersurface CICYs and then the codimension two cases.

\subsection{Hypersurface formulae}
\label{subsec:sec3-hypersurface-formulae}

The hypersurface examples are the cleanest testing ground for the method in this paper,
because the Koszul resolution has at most one nontrivial multiplication map.
Nevertheless, they already display non-trivial relationships between rank phenomena and Hilbert funcitons.
We first treat the \(\bP^1\times\bP^n\) family as the model calculation;
then we discuss the bicubic and two multigraded hypersurface examples.

\subsubsection{\texorpdfstring{\(\bP^1\times\bP^n\) hypersurfaces}{P1 x Pn hypersurfaces}}
\label{subsec:sec3-p1pn}

Let
\[
        X=\{f=0\}\subset \bP^1_x\times \bP^n_y
\]
be a smooth hypersurface of bidegree \((d,e)\), and write
\[
        f=\sum_{i=0}^{d}x_0^{d-i}x_1^i f_i(y),
        \qquad
        f_i\in S_e,\qquad
        S=\bC[y_0,\ldots,y_n].
\]
This hypersurface family is one of the examples treated by
\cite{Constantin2024}. We consider the line bundle sector
\(\cO_X(-r,b)\), with \(r\geq 2\), where the Koszul map is nontrivial.  The
ambient sequence is
\[
        0\longrightarrow
        \cO_{\bP^1\times\bP^n}(-r-d,b-e)
        \xrightarrow{\ f\ }
        \cO_{\bP^1\times\bP^n}(-r,b)
        \longrightarrow
        \cO_X(-r,b)
        \longrightarrow 0
\]
On this cohomology row, the map induced by multiplication by \(f\), after
Serre duality on the \(\bP^1\)-factor, becomes a graded map over \(S\).

Choose the monomial bases
\[
        H^1(\bP^1,\cO(-r-d))\simeq H^0(\bP^1,\cO(r+d-2))^\vee,
        \qquad
        H^1(\bP^1,\cO(-r))\simeq H^0(\bP^1,\cO(r-2))^\vee .
\]
Contraction by
\(\sum x_0^{d-i}x_1^i f_i\) gives the map
\[
        \Phi_r:S(-e)^{r+d-1}\longrightarrow S^{r-1}.
\]
With respect to the ordered monomial bases, this map is represented by the matrix
\begin{equation}
  \operatorname{Syl}_{r,d}(f_0,\ldots,f_d)
  =
  \begin{pmatrix}
    f_0&f_1&\cdots&f_d&0&\cdots&0\\
    0&f_0&f_1&\cdots&f_d&\ddots&\vdots\\
    \vdots&\ddots&\ddots&\ddots&&\ddots&0\\
    0&\cdots&0&f_0&f_1&\cdots&f_d
  \end{pmatrix},
  \label{eq:p1pn-banded-sylvester}
\end{equation}
with \(r-1\) rows and \(r+d-1\) columns.  By analogy with the Sylvester
coefficient matrices used in resultant theory \cite{CoxLittleOShea1998}, we
call \eqref{eq:p1pn-banded-sylvester} the banded Sylvester matrix associated
to \(f_0+\cdots+f_dz^d\).  We define
\[
        M_r=\coker \Phi_r .
\]

When \(1\leq d\leq n\) and \(f_0,\ldots,f_d\) form a regular sequence in
\(S\), the matrix \(\Phi_r\), represented by the banded Sylvester matrix
\eqref{eq:p1pn-banded-sylvester}, satisfies the hypotheses of the
Buchsbaum--Rim complex recalled in Appendix~\ref{app:buchsbaum-rim}.  The
resulting resolution is
\begin{equation}
0\longrightarrow B_{d+1}\longrightarrow\cdots\longrightarrow B_2
\longrightarrow S(-e)^{r+d-1}\xrightarrow{\Phi_r}S^{r-1}
\longrightarrow M_r\longrightarrow0 .
\label{eq:sec3-p1pn-br-resolution}
\end{equation}
The resulting Hilbert series is
\begin{equation}
\HS(M_r;t)=
\frac{(1-t^e)^{d+1}}{(1-t)^{n+1}}
\sum_{j=0}^{r-2}(r-1-j)\binom{d+j-1}{d-1}t^{ej}.
\label{eq:sec3-p1pn-Mr}
\end{equation}
Write
\[
  P_n(m)=
  \begin{cases}
    \binom{m+n}{n},& m\ge0,\\
    0,& m<0.
  \end{cases}
\]
Under this regular-sequence hypothesis, for the remaining line bundles
\[
  L=\cO_X(-r,b),\qquad r\ge2,\quad b\ge0,
\]
the hypersurface long exact sequence gives
\begin{equation}
  h^1(X,L)=\HF(M_r,b),
  \label{eq:sec3-p1pn-linebundle-h1}
\end{equation}
and the neighbouring \(H^0\) contribution is
\begin{equation}
  h^0(X,L)_{\mathrm{res}}
  =
  (r+d-1)P_n(b-e)-(r-1)P_n(b)+\HF(M_r,b).
  \label{eq:sec3-p1pn-linebundle-h0res}
\end{equation}
The Serre-dual range is obtained from
\[
  K_X\simeq\cO_X(d-2,e-n-1),
\]
and the remaining finite \(\bP^1\)-boundary line bundles are computed
directly from the same ambient long exact sequence.  The formula above agrees
with the generating-function formula for \(\bP^1\times\bP^n\) hypersurfaces
in \cite{Constantin2024}.

The Calabi--Yau threefold
\[
X=
\left[
\begin{array}{c|c}
\bP^1&2\\
\bP^3&4
\end{array}
\right],
\]
is a special case of the preceding calculation.  Write a line bundle as
\(\cO_X(k_1,k_2)\), and set
\[
  \operatorname{ind}(k_1,k_2)=\chi(X,\cO_X(k_1,k_2))
  =\frac{1}{3}\bigl(6k_1k_2^2+k_2^3+6k_1+11k_2\bigr).
\]
Write the defining polynomial as
\begin{equation}
f=x_0^2A+x_0x_1B+x_1^2C,
\quad A,B,C\in R_4,
\quad R=\bC[y_0,y_1,y_2,y_3].
\label{eq:sec3-24-binary-quadratic}
\end{equation}
The only nontrivial wall in this example appears in the negative
\(\bP^1\)-branch.  Put
\[
  k_1=-r,\qquad k_2=b,\qquad r\ge1,\quad b>0.
\]
Our method gives
\begin{equation}
  F_r(b):=
  [t^b]\frac{(1-t^4)^3}{(1-t)^4}
  \sum_{j=0}^{r-2}(r-1-j)(j+1)t^{4j},
  \qquad
  h^1(X,\cO_X(-r,b))=F_r(b),
  \label{eq:sec3-24-Fr}
\end{equation}
where the sum is empty for \(r=1\), hence \(F_1(b)=0\).  This agrees with
the direct long-exact-sequence computation for \(\cO_X(-1,b)\).
The same sequence gives
\[
  h^0(X,\cO_X(-r,b))
  =
  \operatorname{ind}(-r,b)+F_r(b).
\]
This identity is used only in the displayed range \(b>0\).  On the boundary
\(b=0\), there is no bottom-\(\bP^3\) source row, and the sequence instead
gives \(h^0=0\) and \(h^1=r-1\), matching the \(k_2=0\) line of
\cite{ConstantinLukas2019}.
Coefficient extraction gives
\begin{equation}
F_r(b)=
\begin{cases}
-\operatorname{ind}(-r,b), & 0<b<4r,\\
r+1-\operatorname{ind}(-r,4r), & b=4r,\\
\dfrac{32}{3}r(r^2-1), & b>4r.
\end{cases}
\label{eq:sec3-24-Fr-wall}
\end{equation}
After substituting \(r=-k_1\) and \(b=k_2\), this is exactly the wall term
in \cite{ConstantinLukas2019}.  The remaining chambers either involve no
actual map or are obtained from this one by Serre duality, so we do not
repeat them here.

\subsubsection{\texorpdfstring{The bicubic \(\bP^2\times\bP^2[3,3]\)}{The bicubic P2 x P2[3,3]}}
\label{subsec:sec3-bicubic}
Let
\[
  X\subset \bP^2_x\times\bP^2_y,\qquad [X]=(3,3),
\]
and write
\begin{equation}
  f(x,y)=\sum_{|\alpha|=3}x^\alpha F_\alpha(y),
  \qquad
  F_\alpha\in S_3,\qquad
  S=\bC[y_0,y_1,y_2].
  \label{eq:bicubic-expansion}
\end{equation}
Consider the chamber
\[
  L=\cO_X(-r,b),\qquad r\geq3,\quad b\geq0 .
\]
The hypersurface sequence is
\[
0\to
\cO_A(-r-3,b-3)
\stackrel{f}{\longrightarrow}
\cO_A(-r,b)
\to L\to0 .
\]
Both \(\bP^2_x\)-degrees, \(-r-3\) and \(-r\), are in top cohomology.
On the \(\bP^2_y\)-factor, only bottom cohomology contributes in the target;
the shifted source lies in the bottom row for \(b\ge3\).  With the convention
\(S_{b-3}=0\) for \(b<3\), the
bottom-\(\bP^2_y\) part of the long exact sequence is
\begin{equation}
0\to H^1(X,L)\to
H^2(\bP^2_x,\cO(-r-3))\otimes S_{b-3}
\stackrel{\mu_{r,b}}{\longrightarrow}
H^2(\bP^2_x,\cO(-r))\otimes S_b
\to H^2(X,L)\to0 .
\label{eq:bicubic-short-row}
\end{equation}
Thus
\[
  h^1(X,L)=\dim\ker\mu_{r,b},
  \qquad
  h^2(X,L)=\dim\coker\mu_{r,b}.
\]

After Serre duality on the first \(\bP^2\), the map \(\mu_{r,b}\) is the
degree-\(b\) part of
\begin{equation}
  \Phi_r:S(-3)^{\binom{r+2}{2}}
  \longrightarrow
  S^{\binom{r-1}{2}} .
  \label{eq:bicubic-map}
\end{equation}
Rows are indexed by monomials \(x^\beta\) with \(|\beta|=r-3\), columns by
monomials \(x^\gamma\) with \(|\gamma|=r\), and
\begin{equation}
  (\Phi_r)_{\beta,\gamma}
  =
  \begin{cases}
  F_{\gamma-\beta},& \gamma_i\geq\beta_i\ \text{for all }i,\\
  0,&\text{otherwise}.
  \end{cases}
  \label{eq:bicubic-matrix-entry}
\end{equation}
For example, when \(r=4\), the rows are \(x_0,x_1,x_2\), the columns are
the degree-four monomials in \(x\), and
\[
\begin{pmatrix}
F_{300}&F_{210}&F_{201}&F_{120}&F_{111}&F_{102}&F_{030}&F_{021}&F_{012}&F_{003}&0&0&0&0&0\\
0&F_{300}&0&F_{210}&F_{201}&0&F_{120}&F_{111}&F_{102}&0&F_{030}&F_{021}&F_{012}&F_{003}&0\\
0&0&F_{300}&0&F_{210}&F_{201}&0&F_{120}&F_{111}&F_{102}&0&F_{030}&F_{021}&F_{012}&F_{003}
\end{pmatrix}.
\]
This finite graded convolution matrix controls the chamber formula: its
degree-\(b\) kernel gives \(h^1\), and its degree-\(b\) cokernel gives
\(h^2\).

Set \(M_r=\coker\Phi_r\).  We compute the Hilbert function of \(M_r\) by
showing that one critical degree map is an isomorphism.  For each fixed
\(r\), let \(\Delta_r(f)\) be the determinant of
\[
  (\Phi_r)_r:
  S_{r-3}^{\binom{r+2}{2}}
  \longrightarrow
  S_r^{\binom{r-1}{2}} .
\]
The source and target have the same dimension.  To prove that
\(\Delta_r\) is not the zero polynomial in the bicubic coefficients, it is
enough to evaluate it at one convenient coefficient point.  We choose the
special, singular bicubic
\[
  f_0=(x_0y_0+x_1y_1+x_2y_2)^3
  =
  \sum_{|\alpha|=3}\binom{3}{\alpha}x^\alpha y^\alpha ,
\]
so \(F_\alpha(y)=\binom{3}{\alpha}y^\alpha\).  The singularity of this
special member is irrelevant here: it is used only as a nonvanishing
certificate for the determinant polynomial.

On the polynomial ring in the \(x\)- and \(y\)-variables define
\[
  e=\sum_{i=0}^2 x_i\frac{\partial}{\partial y_i},
  \qquad
  \mathsf f=\sum_{i=0}^2 y_i\frac{\partial}{\partial x_i},
  \qquad
  h=\sum_i x_i\frac{\partial}{\partial x_i}
    -\sum_i y_i\frac{\partial}{\partial y_i}.
\]
These operators satisfy the \(\mathfrak{sl}_2\) relations
\[
  [h,e]=2e,\qquad [h,\mathsf f]=-2\mathsf f,\qquad [e,\mathsf f]=h .
\]
Using divided powers \(X^\gamma=x^\gamma/\gamma!\) in the \(x\)-variables,
\[
  \mathsf f^3(X^\gamma y^\delta)
  =
  \sum_{\substack{\alpha\leq\gamma\\ |\alpha|=3}}
  \binom{3}{\alpha}X^{\gamma-\alpha}y^{\delta+\alpha}.
\]
This is exactly the degree-\(r\) matrix \((\Phi_r)_r\) for \(f_0\), up to
diagonal rescaling of bases.  Indeed \(|\gamma|=r\), \(|\delta|=r-3\),
\(\beta=\gamma-\alpha\), and the target monomial has degrees
\(|\beta|=r-3\), \(|\delta+\alpha|=r\).

Inside the total-degree \(2r-3\) subspace, the source
\(\operatorname{Sym}^r_x\otimes\operatorname{Sym}^{r-3}_y\) is the
\(h\)-weight \(3\) subspace, and the target
\(\operatorname{Sym}^{r-3}_x\otimes\operatorname{Sym}^{r}_y\) is the
\(h\)-weight \(-3\) subspace.  Over \(\bC\), in any finite-dimensional
\(\mathfrak{sl}_2\)-representation, \(\mathsf f^3\) is an isomorphism between
the weight-\(3\) and weight-\(-3\) spaces in each irreducible summand: every
irreducible summand which contains weight \(3\) also contains weight \(-3\),
and the three lowering steps between them have nonzero coefficients.
Therefore
\[
  \Delta_r(f_0)\neq0 .
\]
It follows that \(\Delta_r\) is not identically zero.  For each fixed
\(r\), the condition \(\Delta_r(f)\neq0\) defines a nonempty Zariski open
subset of the bicubic coefficient space.  Intersecting with the smooth
locus, for this fixed \(r\), a general smooth bicubic has \((\Phi_r)_r\)
an isomorphism.  For a statement uniform in all \(r\ge3\), the correct
statement is that it holds for a very general smooth bicubic.

Assume now that \((\Phi_r)_r\) is an isomorphism.  If \(b<r\) and
\(v\in\ker(\Phi_r)_b\), choose a nonzero \(u\in S_{r-b}\).  Since
\(\Phi_r\) is an \(S\)-module map,
\[
  (\Phi_r)_r(uv)=u(\Phi_r)_b(v)=0 .
\]
The degree-\(r\) map is injective and the source is free, so \(uv=0\);
hence \(v=0\).  Thus \((\Phi_r)_b\) is injective for \(b<r\).  Since
\(M_r\) is generated by the degree-zero generators coming from the target and
\((M_r)_r=0\), all higher pieces vanish:
\[
  (M_r)_b=S_{b-r}(M_r)_r=0,\qquad b\ge r .
\]
Thus \((\Phi_r)_b\) is surjective for \(b\ge r\).

Consequently \(M_r\) is a finite-length graded \(S\)-module, generated in
degree \(0\), and its last possible nonzero degree is \(r-1\).  Its Hilbert
function is read directly from the injective/surjective dichotomy above.
Throughout the coefficient formulae below, terms of the form
\(\binom{m+2}{2}\) coming from \(S_m\) are interpreted as \(0\) for \(m<0\);
equivalently, \(\binom{b-1}{2}=0\) for \(b<3\), and similarly for the shifted
binomials below.
For \(b<r\), the map \((\Phi_r)_b\) is injective, so
\[
  \HF(M_r,b)
  =
  \binom{r-1}{2}\binom{b+2}{2}
  -
  \binom{r+2}{2}\binom{b-1}{2}.
\]
For \(b\geq r\), the map is surjective, so \(\HF(M_r,b)=0\).  Equivalently,
\[
  \sum_{b\geq0}\HF(M_r,b)t^b
  =
  \sum_{b=0}^{r-1}
  \left[
  \binom{r-1}{2}\binom{b+2}{2}
  -
  \binom{r+2}{2}\binom{b-1}{2}
  \right]t^b .
\]
Summing this finite Hilbert function gives
\begin{equation}
  \HS(M_r;t)=
  \frac{
  \binom{r-1}{2}
  -\binom{r+2}{2}t^3
  +\frac32(r+2)(r-1)t^{r+1}
  -\frac32(r+1)(r-2)t^{r+2}}
  {(1-t)^3}.
  \label{eq:bicubic-hs}
\end{equation}
Extracting coefficients from \eqref{eq:bicubic-hs}, using
\[
  \frac{1}{(1-t)^3}=\sum_{m\ge0}\binom{m+2}{2}t^m,
\]
gives
\begin{align}
  D_r(b):=\HF(M_r,b)
  &=
  \binom{r-1}{2}\binom{b+2}{2}
  -\binom{r+2}{2}\binom{b-1}{2}
  \nonumber\\
  &\quad
  +\frac32(r+2)(r-1)\binom{b-r+1}{2}
  -\frac32(r+1)(r-2)\binom{b-r}{2},
  \label{eq:bicubic-defect}
\end{align}
where the same zero convention applies to each shifted binomial.  The long
exact sequence
\eqref{eq:bicubic-short-row} gives
\begin{align}
  h^2(X,\cO_X(-r,b))&=D_r(b),\label{eq:bicubic-h2}\\
  h^1(X,\cO_X(-r,b))&=
  \binom{r+2}{2}\binom{b-1}{2}
  -\binom{r-1}{2}\binom{b+2}{2}
  +D_r(b).
  \label{eq:bicubic-h1}
\end{align}
Equivalently, in the interior range,
\[
  3\le b<r:\quad
  h^1=0,\qquad
  h^2=\frac32(r-b)(rb-2),
\]
while for \(b\ge r\),
\[
  h^2=0,\qquad
  h^1=\frac32(b-r)(rb-2).
\]
The values \(b=1,2\) are already included by the convention
\(S_{b-3}=0\).  At \(b=0\), the same formula gives
\(D_r(0)=\binom{r-1}{2}\) in \(h^2\), while the omitted top
\(\bP^2_y\)-row contributes
\[
  h^3(X,\cO_X(-r,0))=\binom{r+2}{2}.
\]
Together with the effective chamber, the finite \(r=1,2\) boundary cases,
computed directly from the same hypersurface sequence, the \(x/y\)-symmetric
chamber, and Serre duality, this recovers the bicubic formula of
\cite{ConstantinLukas2019} for a very general bicubic.

\subsubsection{Two multigraded hypersurface examples}
\label{subsec:sec3-multigraded-hypersurface-examples}

For the next two hypersurfaces we analyze only the chambers which have clean
analytic descriptions.  The chambers with new structure will be discussed in
the next section.

\paragraph*{The Picard-number-three hypersurface
\texorpdfstring{\(\bP^1\times\bP^1\times\bP^2[2,2,3]\)}{P1 x P1 x P2[2,2,3]}.}
\label{subsec:sec3-p1p1p2-223}

Write the defining equation of the hypersurface
\[
  X\subset \bP^1_x\times\bP^1_y\times\bP^2_z,\qquad [X]=(2,2,3),
\] as a quadratic in the
first \(\bP^1\):
\[
  f=x_0^2A+x_0x_1B+x_1^2C,\qquad
  A,B,C\in H^0(\bP^1_y\times\bP^2_z,\cO(2,3)).
\]
For
\[
  L=\cO_X(-a,b,c),\qquad a\ge2,
\]
the row contributing to \(H^0(X,L)\), after Serre duality on
\(\bP^1_x\), is the bidegree \((b,c)\) part of the map
\[
  \Phi_a:R(-(2,3))^{a+1}\longrightarrow R^{a-1},
  \qquad
  R=\bC[y_0,y_1,z_0,z_1,z_2],
\]
whose rows are shifted copies of \((A,B,C)\).  This finite matrix is the
degree-\((b,c)\) part of a single map on the factor
\[
  B=\bP^1_y\times\bP^2_z
\]
namely
\[
  \cO_B(-2,-3)^{a+1}\longrightarrow \cO_B^{a-1},
  \qquad
  (s_0,\ldots,s_a)\longmapsto
  (As_i+Bs_{i+1}+Cs_{i+2})_{i=0}^{a-2}.
\]
Let \(K_a\) be the kernel of this map:
\[
0\to
K_a\to
\cO_B(-2,-3)^{a+1}
\to
\cO_B^{a-1}.
\]
After twisting by \(\cO_B(b,c)\) and taking \(H^0(B,-)\), this gives exactly
the kernel of the finite degree-\((b,c)\) Toeplitz matrix above.  Assume
that the common zero locus of a general triple \((A,B,C)\) has codimension
\(3\) in \(B\).  Under this generality assumption, \(K_a\) is resolved by
the Buchsbaum--Rim complex of this band matrix.  Indeed, after evaluating at
a point of \(B\), the rows are shifted copies of the scalar triple
\((A(p),B(p),C(p))\).  They are linearly independent unless
\(A(p)=B(p)=C(p)=0\), and codimension \(3\) is the Buchsbaum--Rim exactness
condition for an \((a-1)\times(a+1)\) matrix; see
Appendix~\ref{app:buchsbaum-rim}.  Hence
\[
0\to
\cO_B(-2a-2,-3a-3)^{a-1}
\to
\cO_B(-2a,-3a)^{a+1}
\to K_a\to0 .
\]
Thus \(h^0(X,L)\) in the negative-\(\bP^1_x\) chamber is
\[
  h^0(X,\cO_X(-a,b,c))=h^0(B,K_a(b,c)).
\]
Twist the displayed resolution by \(\cO_B(b,c)\).  With
\[
  u=b-2a,\qquad v=c-3a,
\]
we get
\[
0\to
\cO_B(u-2,v-3)^{a-1}
\to
\cO_B(u,v)^{a+1}
\to
K_a(b,c)\to0 .
\]
For this branch, set
\[
  E_a(u,v)=
  \ker\!\left[
  H^1\!\left(B,\cO_B(u-2,v-3)^{a-1}\right)
  \longrightarrow
  H^1\!\left(B,\cO_B(u,v)^{a+1}\right)
  \right],
\]
where the map is induced by the first arrow of the displayed sequence.  The
consequences are:
\[
\begin{array}{c|c|c}
\text{range} & h^0(X,\cO_X(-a,b,c)) & \text{source of the value}\\
\hline
c<3a
& 0
& H^0(B,\cO_B(u,v))=0
\\[2mm]
c=3a
& (a+1)\max\{b-2a+1,0\}
& H^0(B,\cO_B(u,0))^{a+1}
\\[2mm]
c>3a,\ b\ge 2a-1
& \chi(\cO_X(-a,b,c))+9a(a^2-1)
& \text{Euler characteristic}
\\[2mm]
c>3a,\ b<2a-1
& \dim E_a(b-2a,c-3a)
& \text{remaining }H^1\text{-kernel}
\end{array}
\]
Thus the first three lines are read directly from the displayed sequence,
while the last line is the kernel branch.  We return to that branch
in Subsection~\ref{subsec:new-7880-negative-p1-box}, where structured
initial-module certificates give a finite-box formula for fixed
\((a,b)\)-ranges and arbitrary \(c\).
In the notation \(k_1=-a\), \(k_2=b\), \(k_3=c\), the wall line
\(c=3a\) and the stable line \(c>3a,\ b\ge2a-1\) are exactly the
negative-\(k_1\) wall and stable pieces of the \(k_3>0\) formula in
\cite{ConstantinLukas2019}.  The final \(E_a\)-line is the complementary remaining
kernel range; here we keep the reduction rather than replacing it by a
closed chamber polynomial.

\paragraph*{The tetraquadric outside the two-negative chamber.}
\label{subsec:sec3-tetra-outside-two-negative}

The tetraquadric
\[
  X\subset(\bP^1)^4,\qquad [X]=(2,2,2,2),
\]
has defining equation \(f\in H^0((\bP^1)^4,\cO(2,2,2,2))\).  For
\(L_k=\cO_X(k_1,k_2,k_3,k_4)\), the hypersurface sequence is
\[
0\to
\cO_A(k_1-2,k_2-2,k_3-2,k_4-2)
\xrightarrow{\cdot f}
\cO_A(k_1,k_2,k_3,k_4)\to L_k\to0.
\]
Here we discuss the one-negative case with \(k_1\le -2\) and
\(k_2,k_3,k_4\ge2\).  The stable two-negative chamber is treated in
Section~\ref{subsec:new-7862}.  The other stable one-negative chambers are
obtained by permuting the four \(\bP^1\)-factors, while the map-free
boundary cases are handled separately.  Take \(k_1=-r\), \(r\ge2\), and write
\[
  f=x_{10}^2A+x_{10}x_{11}B+x_{11}^2C,\qquad
  A,B,C\in R_{(2,2,2)}
\]
for the coordinate ring \(R\) of the remaining three \(\bP^1\)'s.  The
active map is
\[
  \Phi_r:R(-(2,2,2))^{r+1}\longrightarrow R^{r-1}.
\]
Explicitly,
\[
  \Phi_r(u_0,\ldots,u_r)
  =
  (Au_0+Bu_1+Cu_2,\ldots,
  Au_{r-2}+Bu_{r-1}+Cu_r).
\]
Let \(\delta=(2,2,2)\), and let
\[
  M_r=\operatorname{coker}\big(R(-\delta)^{r+1}\to R^{r-1}\big)
\]
be the graded cokernel, and let \(\mathcal M_r\) be the corresponding
cokernel sheaf on \(\mathcal B=(\bP^1)^3\).  In degree
\((k_2,k_3,k_4)\), the source
and target of the active
long-exact-sequence map have dimensions
\[
  (r+1)\prod_{j=2}^4(k_j-1),
  \qquad
  (r-1)\prod_{j=2}^4(k_j+1),
\]
so the degreewise cokernel of \(\Phi_r\) is exactly the correction term added
to the Euler difference.  Write
\[
  \chi_r(k_2,k_3,k_4)
  =
  (r+1)\prod_{j=2}^4(k_j-1)
  -(r-1)\prod_{j=2}^4(k_j+1).
\]
It remains to compute the degreewise cokernel.  For a general triple
\(A,B,C\), the common zero set
\[
  \{A=B=C=0\}\subset \mathcal B
\]
has codimension \(3\).  Away from this set the band matrix has full row rank,
so the Buchsbaum--Rim criterion applies.  Hence \(\mathcal M_r\) has the
exact Buchsbaum--Rim complex
\[
0\to
\cO_{\mathcal B}(-(r+1)\delta)^{r-1}
\to
\cO_{\mathcal B}(-r\delta)^{r+1}
\to
\cO_{\mathcal B}(-\delta)^{r+1}
\to
\cO_{\mathcal B}^{r-1}
\to
\mathcal M_r
\to0 .
\]
Taking the alternating sum of the Hilbert series of the vector-bundle terms
in this complex gives
\[
  \mathcal H_r(t_2,t_3,t_4)
  =
  \frac{
  (r-1)
  -(r+1)t^\delta
  +(r+1)t^{r\delta}
  -(r-1)t^{(r+1)\delta}}
  {(1-t_2)^2(1-t_3)^2(1-t_4)^2},
  \qquad
  t^\delta=t_2^2t_3^2t_4^2 .
\]
For a fixed degree \((k_2,k_3,k_4)\), we shift the four terms above by
\((k_2,k_3,k_4)\) and evaluate each \(\bP^1\)-factor by Bott's formula.
This gives the following function.
After permuting the last three factors, assume \(2\le k_2\le k_3\le k_4\).
The degreewise cokernel is
\[
F_r(k_2,k_3,k_4)=
\begin{cases}
(r-1)(8r^2+6r+k_4-1),
& k_2=k_3=2r,\ k_4\ge2r,\\
8r(r^2-1),
& k_2\ge2r-1,\ k_3>2r,\\
\max\{0,-\chi_r(k_2,k_3,k_4)\},
& \text{otherwise}.
\end{cases}
\]
Thus
\[
\begin{aligned}
  h^1(X,\cO_X(-r,k_2,k_3,k_4))
  &=F_r(k_2,k_3,k_4),\\
  h^0(X,\cO_X(-r,k_2,k_3,k_4))
  &=(r+1)\prod_{j=2}^4(k_j-1)
    -(r-1)\prod_{j=2}^4(k_j+1)
    +F_r(k_2,k_3,k_4).
\end{aligned}
\]
Moreover \(h^2=h^3=0\) in this stable one-negative range.
This is exactly the stable one-negative tetraquadric formula of
\cite{ConstantinLukas2019}, after permuting the four \(\bP^1\)-factors if
necessary: their \(h^1\) is the same \(F_r\), and their \(h^0\) is the same
ambient Euler difference plus \(F_r\).  The positive-part wall formula in
Ref.~\cite{ConstantinLukas2019} belongs to the two-negative chamber, not to
this one-negative range.

The boundary \(k_1=0\) is also completely explicit.  For \(b,c,d\ge0\),
\[
 h^0(X,\cO_X(0,b,c,d))
 =
 (b+1)(c+1)(d+1)
 +
 u(b-2)u(c-2)u(d-2),
 \qquad
 u(m)=\max\{m+1,0\}.
\]
This is the \(h^0\) part of the boundary formulae.  The boundary
higher-cohomology terms come from the same ambient cohomology check; for example
\[
  h^1(X,\cO_X(0,0,c,d))=(c-1)(d-1)\qquad(c,d>0).
\]
Permuting the four factors gives the other one-negative and boundary
pieces.  Thus, up to permutations and Serre duality, the stable two-negative
chamber is the only remaining tetraquadric chamber with a nontrivial
active map.

\subsection{Codimension-two examples}
\label{subsec:sec3-codim-two-examples}
In this section, we study the examples in \cite{LarforsSchneider2019} and we can see our method can recover all of their formulae analytically.
\subsubsection{\texorpdfstring{The quadratic \(\bP^1\times\bP^4\) family}{The quadratic P1 x P4 family}}
\label{subsec:sec3-7806}

The first several codimension-two cases include the following three configurations
\[
  X_{\alpha,\beta}=
  \left[
  \begin{array}{c|cc}
  \bP^1&0&2\\
  \bP^4&\alpha&\beta
  \end{array}
  \right],
  \qquad
  (\alpha,\beta)=(3,2),(2,3),(4,1),
\]
which are CICYs \(7806,7882,7888\).  Write
\[
  g\in S_\alpha,\qquad
  f=x_0^2A+x_0x_1B+x_1^2C,\qquad
  A,B,C\in S_\beta,
  \qquad S=\bC[y_0,\ldots,y_4].
\]
We first consider the one-negative chamber, \(L=\cO_X(-r,m)\), \(r\ge1\), \(m>0\).  Here the active
map is the degree-\(m\) part of
\[
  \Phi_r:
  S(-\alpha)^{r-1}\oplus S(-\beta)^{r+1}
  \longrightarrow S^{r-1},
\]
with components
\[
  (\Phi_r(u,v))_i=g\,u_i+A\,v_i+B\,v_{i+1}+C\,v_{i+2},
  \qquad 0\le i\le r-2.
\]
Let \(M_r^{(\alpha,\beta)}=\coker\Phi_r\).  The cohomology in this
chamber is
\[
  h^1(X_{\alpha,\beta},\cO(-r,m))
  =\HF(M_r^{(\alpha,\beta)},m),
  \qquad
  h^0=\chi+\HF(M_r^{(\alpha,\beta)},m),
\]
and \(h^q=0\) for \(q\ge2\).  The proof is the same row-homology argument
as for hypersurfaces; the leftmost Koszul homology has the wrong total
degree and does not contribute to sheaf cohomology.
When \(r=1\), the target \(S^{r-1}\) is zero; we use the convention
\(M_1^{(\alpha,\beta)}=0\), so \(h^1=0\) in this \(m>0\) chamber.

The Hilbert series is analytic on the open locus where
\[
  g,A,B,C
\]
form a regular sequence.  Removing the \(g\)-block first gives the
binary-quadratic Buchsbaum--Rim module
\[
  T_r:S(-\beta)^{r+1}\to S^{r-1}.
\]
Since \(A,B,C\) are a regular sequence, its cokernel has Hilbert series
given by the \((d,e,n)=(2,\beta,4)\) case of the hypersurface model.  The
extra equation \(g\) is then a nonzerodivisor on that
module, so
\[
  \HS(M_r^{(\alpha,\beta)};t)
  =
  \frac{(1-t^\alpha)(1-t^\beta)^3}{(1-t)^5}
  \sum_{j=0}^{r-2}(j+1)(r-1-j)t^{\beta j}.
\]
One can use this Hilbert function for the remaining variables to explain all three chamber
formulae:
\[
\begin{array}{c|c|c|c}
\text{CICY} & (\alpha,\beta) & \text{wall} & \text{stable }h^1\\
\hline
7806 & (3,2) & m=2r & 4r(r^2-1)\\
7882 & (2,3) & m=3r & 9r(r^2-1)\\
7888 & (4,1) & m=r & \dfrac{2}{3}r(r^2-1).
\end{array}
\]
The stable entry is the length contribution
\[
  \alpha\beta^3
  \sum_{j=0}^{r-2}(j+1)(r-1-j)
  =
  \frac{\alpha\beta^3}{6}r(r^2-1),
\]
which gives the three values in the table.
For \(0<m<\beta r\), the Hilbert function for the remaining variables cancels the Euler
characteristic:
\[
  h^0=0,\qquad h^1=-\chi .
\]
At the wall \(m=\beta r\), the value of \(h^1\) is the stable value in the
table reduced by \(r-1\).  For \(m\ge \beta r+1\), the correction Hilbert
function has stabilized to the length contribution displayed in the table.
The horizontal boundary \(m=0\) is not part of the \(m>0\) chamber above and direct calculation gives
\[
  h^0=0,\qquad h^1=r-1,\qquad h^2=0,\qquad h^3=r+1 .
\]
The half-plane \(m<0\) follows by Serre duality.

\subsubsection{\texorpdfstring{The other two \(\bP^1\times\bP^4\) cases}{The other two P1 x P4 cases}}
\label{subsec:sec3-larfors-five-p1p4}

The two remaining \(\bP^1\times\bP^4\) cases of
\cite{LarforsSchneider2019} are
\[
\begin{array}{c@{\qquad\qquad}c}
7858:\ 
\left[
\begin{array}{c|cc}
\bP^1&1&1\\
\bP^4&3&2
\end{array}\right]
&
7885:\ 
\left[
\begin{array}{c|cc}
\bP^1&1&1\\
\bP^4&4&1
\end{array}\right].
\end{array}
\]
We write them uniformly as \((\alpha,\beta)=(3,2)\) and
\((\alpha,\beta)=(4,1)\), respectively.  Let
\[
  R=\bC[y_0,\ldots,y_4],
\]
and write the two defining equations as:
\[
  p=x_0A_0+x_1A_1,\qquad
  q=x_0B_0+x_1B_1,
  \qquad
  A_i\in R_\alpha,\quad B_i\in R_\beta .
\]
We work on the open locus where \(X\) is smooth and
\(A_0,A_1,B_0,B_1\) form a homogeneous regular sequence in \(R\).

Consider the one-negative line bundle
\[
  L=\cO_X(-r,m),\qquad r\ge2,\qquad m>0 .
\]
The case \(r=1\) is a low-rank boundary case: the target \(R^{r-1}\) in
the band module below vanishes, so it is not included in this uniform
one-negative calculation.
The Koszul sequence on \(A=\bP^1\times\bP^4\) is
\[
0\to
\cO_A(-r-2,m-\alpha-\beta)
\to
\cO_A(-r-1,m-\alpha)\oplus\cO_A(-r-1,m-\beta)
\to
\cO_A(-r,m)
\to
L\to0 .
\]
Taking the active \(\bP^1\) top-cohomology row turns multiplication by
\(p\) and \(q\) into the degree-\(m\) part of
\[
  \Phi_{r;\alpha,\beta}:
  R(-\alpha)^r\oplus R(-\beta)^r
  \longrightarrow R^{r-1}.
\]
For example, for \(r=4\),
\[
  \Phi_{4;\alpha,\beta}=
  \begin{pmatrix}
    A_0&A_1&0&0&B_0&B_1&0&0\\
    0&A_0&A_1&0&0&B_0&B_1&0\\
    0&0&A_0&A_1&0&0&B_0&B_1
  \end{pmatrix}.
\]
Thus the correction term is the Hilbert function of the cokernel
\[
  N_r^{(\alpha,\beta)}=\coker\Phi_{r;\alpha,\beta}.
\]

The resolution input is the following filtered form, proved in
Appendix~\ref{app:two-linear-band-module}.  The associated graded module is
a direct sum of shifted copies of the complete-intersection quotient:
\[
  \operatorname{gr}N_r^{(\alpha,\beta)}
  \simeq
  \bigoplus_{\substack{i,j\ge0\\ i+j\le r-2}}
  \left(R/(A_0,A_1,B_0,B_1)\right)(-\alpha i-\beta j)
  ^{\oplus(r-1-i-j)} .
\]
Since \(A_0,A_1,B_0,B_1\) is a regular sequence,
\[
  \HS(R/(A_0,A_1,B_0,B_1);t)
  =
  \frac{(1-t^\alpha)^2(1-t^\beta)^2}{(1-t)^5}.
\]
Therefore
\begin{equation}
  \HS(N_r^{(\alpha,\beta)};t)
  =
  \frac{(1-t^\alpha)^2(1-t^\beta)^2}{(1-t)^5}
  \sum_{\substack{i,j\ge0\\ i+j\le r-2}}
  (r-1-i-j)t^{\alpha i+\beta j}.
\label{eq:ls-linear-HS}
\end{equation}
Equivalently, if
\[
  q_{\alpha,\beta}(s)
  =
  [t^s]\frac{(1-t^\alpha)^2(1-t^\beta)^2}{(1-t)^5},
  \qquad q_{\alpha,\beta}(s)=0\quad(s<0),
\]
then
\[
  \HF(N_r^{(\alpha,\beta)},m)
  =
  \sum_{\substack{i,j\ge0\\ i+j\le r-2}}
  (r-1-i-j)\,q_{\alpha,\beta}(m-\alpha i-\beta j).
\]
The cohomology in this chamber is
\[
  h^1(X,\cO_X(-r,m))=\HF(N_r^{(\alpha,\beta)},m),
  \qquad
  h^0=\chi(X,\cO_X(-r,m))+\HF(N_r^{(\alpha,\beta)},m),
\]
and \(h^q=0\) for \(q\ge2\).

Let us compare this directly with the formulae of
\cite{LarforsSchneider2019}.  For \(7858\), put
\[
  \chi_{3,2}(-r,m)
  =
  -3rm^2-2r+\frac{5}{6}m^3+\frac{25}{6}m .
\]
The corresponding one-negative formula in \cite{LarforsSchneider2019} for
\(h^1\) is
\begin{equation}
h^1(X,\cO_X(-r,m))=
\begin{cases}
0, & m<0,\\
r-1, & m=0,\\
-\chi_{3,2}(-r,m), & 0<m<2r,\\
-\chi_{3,2}(-r,m)+1, & m=2r,\\
-\chi_{3,2}(-r,m)
+\dfrac32(m-2r)\bigl(3+(m-2r)^2\bigr),
& 2r<m<3r,\\
6r(r^2-1), & m\ge3r .
\end{cases}
\label{eq:sec3-ls-7858-piecewise}
\end{equation}
Our expression gives the same function because
\[
  \HF(N_r^{(3,2)},m)
  =
  \sum_{\substack{i,j\ge0\\ i+j\le r-2}}
  (r-1-i-j)\,q_{3,2}(m-3i-2j).
\]
For \(0<m<2r\), this coefficient sum equals
\(-\chi_{3,2}(-r,m)\).  At \(m=2r\) one shifted degree in the remaining variables hits the
first extra complete-intersection value, giving the \(+1\).  For
\(2r<m<3r\), the remaining non-stable shifted degrees give the displayed
cubic correction.  For \(m\ge3r\), every shifted degree in the sum has
stabilized to
\[
  \deg(A_0,A_1,B_0,B_1)=3^2\,2^2=36,
\]
and the triangular multiplicity
\(\sum_{i+j\le r-2}(r-1-i-j)=\binom{r+1}{3}\) gives
\[
  36\binom{r+1}{3}=6r(r^2-1).
\]

For \(7885\), put
\[
  \chi_{4,1}(-r,m)
  =
  -2rm^2-2r+\frac{5}{6}m^3+\frac{25}{6}m ,
\]
and
\[
  C(n)=[t^n]\frac{(1-t^4)^2}{(1-t)^5(1-t^3)},
  \qquad C(n)=0\quad(n<0).
\]
The corresponding formula in \cite{LarforsSchneider2019} is
\begin{equation}
h^1(X,\cO_X(-r,m))=
\begin{cases}
0, & m<0,\\
r-1, & m=0,\\
-\chi_{4,1}(-r,m), & 0<m<r,\\
-\chi_{4,1}(-r,m)+C(m-r), & r\le m<4r-2,\\
\frac{8}{3} r(r^2-1), & m\ge4r-2 .
\end{cases}
\label{eq:sec3-ls-7885-piecewise}
\end{equation}
Our expression gives the same function because
\[
  \HF(N_r^{(4,1)},m)
  =
  \sum_{\substack{i,j\ge0\\ i+j\le r-2}}
  (r-1-i-j)\,q_{4,1}(m-4i-j).
\]
For \(0<m<r\), this is just the index contribution
\(-\chi_{4,1}(-r,m)\).  In the intermediate region
\(r\le m<4r-2\), the non-stable shifted complete-intersection values are
packaged by the coefficient \(C(m-r)\).  For \(m\ge4r-2\), every shifted
degree has stabilized to
\[
  \deg(A_0,A_1,B_0,B_1)=4^2\,1^2=16,
\]
so the correction contribution is
\[
  16\binom{r+1}{3}=\frac83 r(r^2-1).
\]

Thus the displays \eqref{eq:sec3-ls-7858-piecewise} and
\eqref{eq:sec3-ls-7885-piecewise} are not separate ansatz formulae: they
are the coefficient expansions of the Hilbert series
\eqref{eq:ls-linear-HS}.  For \(r\ge2\) and \(m>0\), the full cohomology is
obtained from the same correction term by
\[
  h^\bullet(X,\cO_X(-r,m))
  =
  \bigl(\chi_{\alpha,\beta}(-r,m)+\HF(N_r^{(\alpha,\beta)},m),
  \HF(N_r^{(\alpha,\beta)},m),0,0\bigr).
\]
The \(m=0\) line in the displayed formulae is a separate boundary case; the
full cohomology vector is not obtained by substituting \(m=0\) into the
\(m>0\) formula above.  In particular,
\[
  h^3(X,\cO_X(-r,0))=h^0(X,\cO_X(r,0))=r+1 .
\]
The \(m<0\) side is obtained from the corresponding positive chamber by
Serre duality.

There is also an isolated \(d_2\)-type boundary in the two-pencil cases,
separate from the one-negative chamber above.  On the ray \(k_1=0\), the
middle Koszul terms lie in the \(\bP^1\) acyclic range, and the second-page map is
multiplication by
\[
  \Delta=A_0B_1-A_1B_0,\qquad \deg \Delta=\alpha+\beta=5.
\]
In the positive boundary row this is the map
\[
  R_{m-5}\xrightarrow{\cdot\Delta}R_m,
\]
with the convention \(R_n=0\) for \(n<0\).  Hence the boundary contribution
is the Hilbert function of \(R/(\Delta)\); the opposite boundary follows by
Serre duality.  This is again an ordinary cokernel of a concrete map, not a
separate chamber ansatz.

\subsubsection{Three \texorpdfstring{\(\bP^2\times\bP^3\) cases }{P2 x P3 endpoint rows}}
\label{subsec:sec3-larfors-p2p3}

The three remaining codimension-two examples in
\cite{LarforsSchneider2019} share a single mechanism.  Their
main wall structure, for \(k=1,2,3\), is governed by the same left-endpoint
kernel.  The examples \(7833\) and
\(7883\) also contain finer structures inside the
chamber which are detected
by the right-endpoint cokernel Hilbert numerator.

We write the three configurations uniformly as
\[
  X_k=
  \left[
  \begin{array}{c|cc}
  \bP^2_x&2&1\\
  \bP^3_y&k&4-k
  \end{array}
  \right],
  \qquad k=1,2,3,
  \]
corresponding to CICYs \(7833,7844,7883\).  Let
\[
  T=\bC[x_0,x_1,x_2],\qquad
  S=\bC[y_0,\ldots,y_3],
\]
and write the two equations as
\[
  F\in T_2\otimes S_k,\qquad
  G\in T_1\otimes S_{4-k}.
\]
Thus \(G\) is linear in the \(\bP^2_x\)-variables,
\[
  G=x_0g_0+x_1g_1+x_2g_2,\qquad g_i\in S_{4-k}.
\]

Consider \(L=\cO_{X_k}(-r,b)\), with \(r\ge3\) and \(b>0\).  The relevant chamber has one top
\(\bP^2_x\)-row; the one-top \(\bP^3_y\)-rows are Serre dual to it.  After
Serre duality on \(\bP^2_x\), we obtain the \(S\)-free complex
\[
  0
  \to
  T_{r-3}\otimes S(-4)
  \to
  T_{r-2}\otimes S(-k)
  \oplus
  T_{r-1}\otimes S(-(4-k))
  \to
  T_r\otimes S
  \to
  M_r^{(k)}
  \to0,
\]
where
\[
  T_j=H^0(\bP^2_x,\cO(j)),\qquad
  S_m=H^0(\bP^3_y,\cO(m)),
\]
and \(T_j=S_m=0\) for negative degrees.  Set
\[
  C_2=T_{r-3}\otimes S(-4),\qquad
  C_1=T_{r-2}\otimes S(-k)\oplus T_{r-1}\otimes S(-(4-k)),
  \qquad
  C_0=T_r\otimes S .
\]
Thus \(M_r^{(k)}=\coker(C_1\to C_0)\).  We apply the graded dual
\[
  D(-)=\operatorname{Hom}_S(-,S(-4))
\]
to \(C_2\to C_1\to C_0\), and then take degree \(b\).  This
reverses the arrows:
\[
\begin{aligned}
0\to&
T_r^\vee\otimes S_{b-4}\\
\to&
T_{r-2}^\vee\otimes S_{b+k-4}
\oplus
T_{r-1}^\vee\otimes S_{b-k}\\
\to&
T_{r-3}^\vee\otimes S_b
\to0 .
\end{aligned}
\]
The \(M_r^{(k)}\)-term is therefore retained as the left endpoint: applying
\(D\) to
\[
  C_1\to C_0\to M_r^{(k)}\to0
\]
identifies \(D(M_r^{(k)})_b\) with the left kernel of the dual row.  We
write
\[
  K_{r,k}(b)=
  \dim\ker\!\left(D(C_0)_b\to D(C_1)_b\right)
\]
for the left-endpoint dimension, and
\[
  R_{r,k}(b)=
  \dim\coker\!\left(D(C_1)_b\to D(C_2)_b\right)
\]
for the right-endpoint dimension.  The middle term is then locked by the row
Euler characteristic:
\[
  h^0_{\rm row}=K_{r,k}(b),\qquad
  h^1_{\rm row}=K_{r,k}(b)+R_{r,k}(b)-\chi(\cO_{X_k}(-r,b)),
  \qquad
  h^2_{\rm row}=R_{r,k}(b).
\]

The clean model is \(7844\), corresponding to \(k=2\).  In this case the
left endpoint has the free resolution
\begin{equation}
0\to
S(-6r-4)^{\binom{r-1}{2}}
\to
S(-6r-2)^{r^2}
\to
S(-6r)^{\binom{r+2}{2}}
\to
K_r^{(2)}
\to0 .
\label{eq:ls-7844-left-kernel-resolution}
\end{equation}
Consequently \(K_{r,2}(b)=0\) for \(0<b<6r\), the wall value is
\[
  K_{r,2}(6r)=\binom{r+2}{2},
\]
and the chamber \(b>6r\) is read from the shifted numerator
\[
  \binom{r+2}{2}-r^2t^2+\binom{r-1}{2}t^4 .
\]
Thus the main wall structure in \cite{LarforsSchneider2019},
\(m_1=-6m_0\), is the first nonzero degree of the left endpoint.
The same determinant computation in
Appendix~\ref{app:ls-7844-left-kernel}, shifts the first possible left
endpoint degree to \((8-k)r\) for \(X_k\).  Thus the corresponding walls are
\(b=7r,6r,5r\) for CICYs \(7833,7844,7883\), respectively.

For \(7844\), the finite diagonal strip comes from the right endpoint.  Let
\[
  P(n)=
  \begin{cases}
  \binom{n+3}{3},& n\ge0,\\
  0,& n<0,
  \end{cases}
  \qquad
  G_r(b)=
  \binom{r-1}{2}P(b)-r^2P(b-2)+\binom{r+2}{2}P(b-4).
\]
The endpoint calculation gives the Hilbert-series numerator of the
right-endpoint cokernel \(E_r\) in the form
\begin{equation}
  \HS(E_r;t)
  =
  \sum_{b=0}^{B_r}\max\{G_r(b)-K_{r,2}(b),0\}t^b,
  \qquad
  N(E_r;t)=(1-t)^4\HS(E_r;t).
\label{eq:ls-7844-right-end-numerator}
\end{equation}
For \(3\le r\le15\) the relevant right-end ranks are calculated exactly in
the finite range where the positive part can occur.  Thus the finite strip
in \(7844\) is the right endpoint \(R_{r,2}(b)\), while the wall and the
stable chamber above \(b=6r\) are controlled by the left endpoint and the row
Euler characteristic.

The neighbouring endpoint families \(7833\) and \(7883\) use the same row,
but they do have more structure along lines shown in the following table:
\[
\begin{array}{c|c|c}
\text{CICY} & \text{black-line relation} & \text{verified range}\\
\hline
7833 & r=2b-1 & 2\le b\le25\\
7833 & r=2b+1 & 4\le b\le25\\
7883 & (r,b)=(3q,2q) & 1\le q\le23 .
\end{array}
\]
On the two \(7833\) lines the shifted left endpoint has not yet appeared:
its first possible degree is \(7r\), while \(b<7r\) on both listed
diagonals.  The right-endpoint calculations in the package give
\(R_{r,1}(b)=0\) on the same finite ranges.  Hence both endpoint defects
vanish, and the row Euler characteristic gives
\[
  h^\bullet\bigl(X_{7833},\cO(-2b+1,b)\bigr)
  =
  \left(0,\frac{(b+2)(4b^2+7b-9)}6,0,0\right)
\]
for \(2\le b\le25\), and
\[
  h^\bullet\bigl(X_{7833},\cO(-2b-1,b)\bigr)
  =
  \left(0,\frac{(b-2)(4b^2-7b-9)}6,0,0\right)
\]
for \(4\le b\le25\).  For \(7883\), the shifted left endpoint starts in
degree \(5r=15q\), so it vanishes along \((r,b)=(3q,2q)\).  The
right-endpoint calculations give value \(1\) at \(q=1\) and vanishing for
\(2\le q\le23\).  Thus
\[
  h^\bullet\bigl(X_{7883},\cO(-3,2)\bigr)=(0,3,1,0),
\]
and
\[
  h^\bullet\bigl(X_{7883},\cO(-3q,2q)\bigr)
  =
  \left(0,\frac{q(q^2+5)}3,0,0\right)
  \qquad(2\le q\le23).
\]

The proof of the left-endpoint resolution and determinant shift is given in
the appendix.  The same appendix also explains the finite right-endpoint
checks used above: after quotienting \(D(C_1)_b\) by the image of
\(D(C_0)_b\), the induced map to \(D(C_2)_b\) is verified to have maximal
rank in the listed finite ranges.  The accompanying archive keeps the
specialization data and pivot/minor data needed to replay these checks.
These black-line pieces already appear in the formulae of
\cite{LarforsSchneider2019}; the point here is to identify them with the same
left--right endpoint mechanism.

\section{New structures and new formulas from Hilbert functions}
\label{sec:new-formulas-rank-defects}
In this section we use Hilbert functions in two ways.  First, we
show that they can detect new structures in line bundle cohomology which are
invisible to polynomial extrapolation or machine-learning fits on bounded
data.  Second, we show that the same Hilbert-function viewpoint can produce
analytic cohomology formulae on regions of the Picard lattice, and can reduce
the remaining cases to explicit finite-rank or finite-box verification. In either case,
we can get cohomology formula for infinitely many line bundles. 

\subsection{New structures from Hilbert functions}
\label{subsec:new-structures-hilbert-numerators}

We give three examples in which the ambient cohomology and the source--target
dimensions suggest a clean maximal-rank answer, but the Hilbert
numerator contains extra terms.  These terms produce additional walls inside
a fixed ambient-cohomology region.  The first two examples come from CICY
\(7880\): a one-sided top-\(\bP^2\) wall and a finite kernel
calculation on the negative-\(\bP^1\) branch.  The third example is a
bigraded boundary wall in the two-negative region of the tetraquadric
\(7862\).

\subsubsection{\texorpdfstring{CICY \(7880\): hidden top-\(\bP^2\) subwalls}{CICY 7880: hidden top-P2 subwalls}}
\label{subsec:new-7880}

We now return to the hypersurface
\[
  X_{7880}\subset \bP^1_x\times\bP^1_y\times\bP^2_z,
  \qquad [X_{7880}]=(2,2,3).
\]
Subsection~\ref{subsec:sec3-p1p1p2-223} has covered some chambers of this CICY.  Here
we focus on the top-\(\bP^2_z\) case, and we will see that there is a new
internal wall which is not visible from ambient cohomology signs
or from source--target dimensions.

Write the defining equation as
\[
  f=\sum_{\alpha,\beta=0}^{2}
  x_0^{2-\alpha}x_1^\alpha
  y_0^{2-\beta}y_1^\beta
  F_{\alpha\beta}(z_0,z_1,z_2),
  \qquad F_{\alpha\beta}\in H^0(\bP^2,\cO_{\bP^2}(3)).
\]
For
\[
  L=\cO_X(p,q,-c),\qquad p,q>0,\quad c\ge3,
\]
Serre duality on the \(\bP^2_z\)-factor turns the top-\(\bP^2_z\) row into
the degree-\(c\) part of
\[
 \Theta_{p,q}:
 S(-3)^{(p+1)(q+1)}
 \longrightarrow
 S^{(p-1)(q-1)},\qquad
 S=\bC[z_0,z_1,z_2],
\]
where an element of the source is written as an array
\[
  u=(u_{IJ})_{0\leq I\leq p,\;0\leq J\leq q},
  \qquad u_{IJ}\in S_{c-3}
\]
in degree \(c\), and \(u_{IJ}\) is understood to be zero if an index lies
outside this range.  The target is indexed by
\[
  0\leq i\leq p-2,\qquad 0\leq j\leq q-2,
\]
and the map is
\[
  (\Theta u)_{ij}
  =
  \sum_{\alpha,\beta=0}^{2}F_{\alpha\beta}\,
  u_{i+\alpha,j+\beta}.
\]
The plus sign in the subscripts is a consequence of the Serre-dual
transpose convention.  Before dualizing, multiplication by \(f\) sends a
source monomial indexed by \((i,j)\) to target monomials indexed by
\((i+\alpha,j+\beta)\).  After passing to the dual map, the target component
\((i,j)\) pairs with the source components \(u_{i+\alpha,j+\beta}\).
Let
\[
  M_{p,q}=\coker\Theta_{p,q}.
\]
Then the top-\(\bP^2_z\) contribution is
\[
  h^1(X,\cO_X(p,q,-c))=\HF(M_{p,q},c),
\]
\[
  h^2(X,\cO_X(p,q,-c))
  =
  (p+1)(q+1)\binom{c-1}{2}
  -(p-1)(q-1)\binom{c+2}{2}
  +\HF(M_{p,q},c),
\]
with \(h^0=h^3=0\) in this chamber.

The source and target dimensions suggest a maximal-rank prediction, but the
graded module contains additional graded pieces that are not visible from
these dimensions alone.  In this example we prove two generic one-sided wall
theorems:
\[
\begin{array}{c|c|c|c}
q & c & \HF(M_{p,q},c) & \text{first nonzero value}\\
\hline
2 & 6 & \max\{p-49,0\} & p=50\\
3 & 9 & 2\max\{p-101,0\} & p=102 .
\end{array}
\]
These two cases suggest an infinite one-sided series of hidden walls.  They
are internal to a single ambient-cohomology region and come from
free summands in the \(B=\bC[s,t]\)-linear module controlling the
\(p\)-direction.  In the following \(B\)-Hilbert functions, the grading is
the \(p-2\) grading:
\[
  B(-48)
  \quad\text{for }q=2,
  \qquad
  B(-100)^{\oplus2}
  \quad\text{for }q=3.
\]
Thus the numerator detects the delayed wall as a graded object, before one
performs any individual rank computation at the first nonzero degree.  This
provides a useful example of how Hilbert functions can track walls in line
bundle cohomology calculations.  We now treat the two cases in turn.

\begin{theorem}[A hidden \(q=2\) top-\(\bP^2\) subwall]
\label{thm:223-q2-hidden-wall}
For coefficients \(F_{\alpha\beta}\) in a nonempty Zariski-open subset of
the full coefficient space, and hence for a general smooth \(X_{7880}\), one
has
\[
  \HF(M_{p,2},6)=\max\{p-49,0\}.
\]
Equivalently,
\[
  h^1(X_{7880},\cO_X(p,2,-6))=\max\{p-49,0\},
\]
and
\[
  h^2(X_{7880},\cO_X(p,2,-6))
  =
  2p+58+\max\{p-49,0\}.
\]
Thus for \(p\ge50\) the active map is not of maximal rank, even though
the source of \(\Theta_{p,2}(6)\) has dimension larger than its target.
\end{theorem}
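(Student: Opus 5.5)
The plan is to dualize the degree-$6$ piece of $\Theta_{p,2}$, view the result as one graded map over $B=\bC[x_0,x_1]$ with $p$ as the grading, and show that its kernel is a free rank-one module $B(-48)$.

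\textbf{Dualizing.} For $q=2$ the target index $j$ takes only the value $0$, so
\[
  (\Theta u)_i=\sum_{\alpha,\beta}F_{\alpha\beta}\,u_{i+\alpha,\beta},
  \qquad 0\le i\le p-2 .
\]
Transposing $\Theta_{p,2}(6)$ undoes the Serre-dual convention. It gives the degree-$(p-2)$ piece of a single homogeneous map of free $B$-modules,
\[
  \varphi: S_6^\vee\otimes B(-2)\longrightarrow (S_3^\vee)^{\oplus3}\otimes B ,
\]
which is a $30\times 28$ matrix of binary quadrics. Under this identification $\HF(M_{p,2},6)=\dim(\ker\varphi)_{p-2}$. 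As a check, the $h^2$ formula in the statement is just $30(p+1)-28(p-1)+\HF=2p+58+\HF$, so everything reduces to computing $\ker\varphi$.

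\textbf{Pointwise structure and the rank of $\ker\varphi$.} Fix $x\in\bP^1$ and put $f_\beta(x)=\sum_\alpha x_0^{2-\alpha}x_1^\alpha F_{\alpha\beta}$, three ternary cubics. The fibre $\varphi(x)$ is the transpose of
\[
  S_3^{3}\to S_6,\qquad (u_0,u_1,u_2)\mapsto \sum_\beta f_\beta(x)u_\beta .
\]
The three Koszul syzygies $f_\beta e_\gamma-f_\gamma e_\beta$ live in degree $6$, so this map has rank at most $27$ for every $x$ and every $F$. This is the hidden rank defect: the $28$ columns of $\varphi$ are dependent over the fraction field of $B$, so $\varphi$ is not injective.

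For three general cubics, $(S/(f_0,f_1,f_2))_6$ is one-dimensional, since $28-30+3=1$ with only Koszul syzygies in degree $6$. Hence the generic rank is exactly $27$, and this is an open condition checked at a single coefficient point.

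Since $B$ is a two-dimensional regular ring and $\ker\varphi$ is a second syzygy module, $\ker\varphi$ is reflexive and therefore free. Its rank is $28-27=1$, so $\ker\varphi\cong B(-e)$ for some $e$, and then $\HF(M_{p,2},6)=\max\{p-2-e+1,0\}$.

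\textbf{Computing $e=48$.} This is the step I expect to be the main obstacle. A generator of $\ker\varphi$ is, pointwise, the unique functional on $S_6$ vanishing on $(f_0,f_1,f_2)(x)_6$. One writes it by Cramer's rule on the $27$-dimensional quotient of $S_3^3$ by the Koszul relations. The naive $27\times27$ minors have $x$-degree $2\cdot27=54$. Dividing by the determinant of the Koszul block, which has degree $3\cdot2=6$ in $x$, should give $54-6=48$.

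To make this rigorous I would pass to the sheaf version on $\bP^1$, using the exact sequence
\[
  0\to\cO(-6)^{\oplus 3}\to\cO(-2)^{\oplus 30}\to \cO^{\oplus 28}\to\mathcal Q\to0 ,
\]
and compute $\deg\mathcal Q$ from Chern classes. The remaining point is to show that the resulting section has no common factor for general $F$, so that $e$ is not smaller. I would certify this by one rank computation at a rational point: injectivity of $\varphi_{46}$ together with $\dim(\ker\varphi)_{48}=1$. This gives $e=48$, and semicontinuity makes it hold on a nonempty Zariski-open set of coefficients.

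\textbf{Conclusion.} With $e=48$ we get $\HF(M_{p,2},6)=\max\{p-49,0\}$. Substituting into the displayed $h^1$ and $h^2$ formulas for this chamber gives the statement.
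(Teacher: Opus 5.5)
Your outline follows the paper's proof. The transposed active map is a graded piece of $\operatorname{Hom}_B(Q_6,B)$ with $Q=R/(G_0,G_1,G_2)$; this module is rank-one reflexive, hence $B(-e)$; and $e$ is read off from the determinant of the $S$-degree-$6$ Koszul strand. However, the strand you wrote down is wrong, and with it your key step gives the wrong number. Each $G_\beta$ has bidegree $(2,3)$ in $(B,S)$. So the second Koszul term is $R(-4,-6)^{\oplus3}$, whose $S$-degree-$6$ part is $B(-4)^{\oplus3}$. The correct strand is
\[
0\to\cO(-4)^{\oplus3}\to\cO(-2)^{\oplus30}\to\cO^{\oplus28}\to\mathcal Q\to0,
\]
which gives $\deg\mathcal Q=60-12=48$. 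With your $\cO(-6)^{\oplus3}$, which is not even compatible with Koszul entries that are quadrics in $x$, the Chern count is $60-18=42$. That contradicts your own certificate $\dim(\ker\varphi)_{48}=1$. The $6$ in your heuristic $54-6$ is the degree $3\cdot(4-2)$ of a $3\times3$ minor of the Koszul block, not the twist of that term.

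The gradings also need fixing. With source $S_6^\vee\otimes B(-2)$, the transpose of $\Theta_{p,2}(6)$ is the degree-$p$ piece, and the kernel is $B(-50)$. Your formula $\dim(\ker\varphi)_{p-2}=\max\{p-1-e,0\}$ with $e=48$ is correct for the unshifted map $S_6^\vee\otimes B\to(S_3^\vee)^{\oplus3}\otimes B(2)$. That map is exactly the paper's $\operatorname{Hom}_B(Q_6,B)$.

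Once the twist is corrected, the only real difference from the paper is how you exclude torsion in $\mathcal Q$; torsion of length $\ell$ would give $e=48-\ell$.
\begin{itemize}
\item The paper asserts that $G_0,G_1,G_2$ stay regular after localizing at each height-one prime of $B$. It deduces that $Q_6$ is locally free in codimension one with $\det Q_6=B(48)$, and concludes $\operatorname{Hom}_B(Q_6,B)\simeq B(-48)$ because $B$ is a UFD.
\item You instead obtain $e\ge48$ from one rank computation at a rational point plus upper semicontinuity of $\dim(\ker\varphi)_{48}$. This is the same device the paper uses for $q=3$. The reverse bound $e\le48$ has to come from the corrected Chern count, on the open set where the Koszul block is fibrewise injective. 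Both halves are needed, and together they are sound.
\end{itemize}
If you want an analytic replacement for the certificate, what is actually required is $\operatorname{Tor}_1^B(Q_6,k(x))=0$ for every $x\in\bP^1$. That means $f_0(x),f_1(x),f_2(x)$ have no non-Koszul syzygies in degree $6$. This is automatic wherever they form a regular sequence. For general $F$ it also holds at the $54$ points where they share a zero $P$, because there the first Koszul homology is $(S/I_P)(-7)$, which vanishes in degree $6$.
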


The proof is given in Appendix~\ref{app:7880-q2-hidden-wall}.

For comparison, a direct matrix calculation at one integer specialization,
reduced modulo \(32003\) and checked again modulo \(32009\), gives the
following ranks, where
\(h^1=\dim\operatorname{tgt}-\operatorname{rank}\Theta_{p,2}(6)\) and
\(h^2=\dim\operatorname{src}-\operatorname{rank}\Theta_{p,2}(6)\):
\[
\begin{array}{c|c|c|c|c}
p&\dim\operatorname{src}&\dim\operatorname{tgt}
&\operatorname{rank}\Theta_{p,2}(6)&(h^0,h^1,h^2,h^3)\\
\hline
48&1470&1316&1316&(0,0,154,0)\\
49&1500&1344&1344&(0,0,156,0)\\
50&1530&1372&1371&(0,1,159,0)\\
51&1560&1400&1398&(0,2,162,0)\\
52&1590&1428&1425&(0,3,165,0)\\
60&1830&1652&1641&(0,11,189,0).
\end{array}
\]

\begin{theorem}[The \(q=3\) hidden wall]
\label{thm:223-q3-hidden-line}
For coefficients \(F_{\alpha\beta}\) in a nonempty Zariski-open subset of
the full coefficient space, and hence for a general smooth \(X_{7880}\), one
has
\[
  \HF(M_{p,3},9)=2\max\{p-101,0\}.
\]
Equivalently,
\[
  h^1(X_{7880},\cO_X(p,3,-9))=2\max\{p-101,0\},
\]
and
\[
  h^2(X_{7880},\cO_X(p,3,-9))
  =
  2p+222+2\max\{p-101,0\}.
\]
\end{theorem}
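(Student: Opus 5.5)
The plan is to follow the same strategy as for Theorem~\ref{thm:223-q2-hidden-wall}: hold \(q=3\) and \(c=9\) fixed, and treat the \(p\)-direction as a grading over \(B=\bC[s,t]\). First I would check the bookkeeping. In degree \(c=9\) the source of \(\Theta_{p,3}\) has dimension \(4(p+1)\dim S_6=112(p+1)\), and the target has dimension \(2(p-1)\dim S_9=110(p-1)\). Hence
\[
h^2-h^1=112(p+1)-110(p-1)=2p+222 .
\]
So the \(h^1\) and \(h^2\) statements are equivalent to \(\HF(M_{p,3},9)=2\max\{p-101,0\}\), and only the cokernel dimension needs to be proved.

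Next I would repackage the whole family \(\{\Theta_{p,3}(9)\}_{p\ge2}\) as one homogeneous map of free \(B\)-modules. The map \(\Theta\) is a convolution in the first index \(I\), with three matrix coefficients \(\Phi_\alpha\) for \(\alpha=0,1,2\). Each \(\Phi_\alpha\) maps \(U=(S_9)^{\oplus2}\) to \(V=(S_6)^{\oplus4}\), dually, and is assembled from the \(F_{\alpha\beta}\) by the same \(J\)-convolution. Undoing Serre duality on \(\bP^1_x\), the transpose of \(\Theta_{p,3}(9)\) is the degree-\((p-2)\) part of
\[
T=\sum_{\alpha=0}^2 s^{2-\alpha}t^{\alpha}\Phi_\alpha:\ B\otimes U\longrightarrow B(2)\otimes V ,
\]
with \(\dim U=110\) and \(\dim V=112\). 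Therefore \(\HF(M_{p,3},9)=\dim (\ker T)_{p-2}\). The module \(K=\ker T\) is a second syzygy-type submodule of a free module over the two-dimensional regular ring \(B\). It is therefore reflexive, hence free: \(K\simeq\bigoplus_i B(-a_i)\). Equivalently, the kernel bundle of the associated sheaf map on \(\bP^1\) splits. The theorem thus reduces to the claim that, for general \(F_{\alpha\beta}\),
\[
K\simeq B(-100)^{\oplus2}\qquad\text{(in the }p-2\text{ grading)},
\]
because then \(\dim K_{p-2}=2(p-101)\) for \(p\ge101\) and zero below.

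To pin down \(K\), I would work with the determinantal data of \(T\) together with a specialization certificate. The upper bound comes from semicontinuity. At one integer specialization, reduced modulo two primes as in the table after Theorem~\ref{thm:223-q2-hidden-wall}, I would compute \(\dim\ker T_{p-2}\) for \(p=101,102\) and a few further values. Any bound \(\dim\ker T_{p-2}\le h\) valid at one point holds on a nonempty Zariski-open set. Freeness then turns a few values into the whole Hilbert function. Suppose \(\dim K_{99}=0\) and \(\dim K_{100}=2\) generically, and \(K\) has rank exactly \(2\). Then all \(a_i\ge100\), and the only way to obtain dimension \(2\) in degree \(100\) with rank \(2\) is \(a_1=a_2=100\). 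The same open-set argument, intersected with the smooth locus, gives the statement for a general smooth \(X_{7880}\).

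The main obstacle is the lower bound, namely that \(K\) has rank at least \(2\) for \emph{all} coefficient choices. That is, \(T\) never has generic rank \(110\) over \(\operatorname{Frac}(B)\), even though the naive count \(110<112\) predicts injectivity. Kernel dimension is upper semicontinuous, so a specialization cannot prove this; it must be structural. My first attempt would be to exhibit the two generic syzygies explicitly, as \(\mathfrak{sl}_2\)- or Koszul-type combinations of the \(F_{\alpha\beta}\) in the \(y\)-direction, with \(q=3\) producing two copies. A second route would use a degree/\(c_1\) count on \(\bP^1\): the total degree of \(\det\) of the image sheaf is forced, and a rank-\(110\) image would violate it. Failing both, I would fall back on the certificate route. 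One shows that the \(110\times110\) minors of \(T\), as polynomials in \(s,t\) and the coefficients, vanish identically, via a symbolic rank computation over \(\bQ(F)\) at the generic point. Combined with the specialization bounds, this yields exactly \(B(-100)^{\oplus 2}\).
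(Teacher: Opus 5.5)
Your setup is correct up to the last step: the count $112(p+1)-110(p-1)=2p+222$, the identification $\HF(M_{p,3},9)=\dim(\ker T)_{p-2}$, freeness of $K=\ker T$ over $B$, and using a specialization to get $K_{99}=0$ on a nonempty open set all work.

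The gap is the hypothesis ``$\dim K_{100}=2$ generically''. A specialization only gives $\dim K_{100}\le 2$. The reverse inequality is a closed condition, and no computation at one coefficient point can certify it. Your ``main obstacle'' paragraph only targets $\operatorname{rank}K\ge 2$, and that does not close the gap. Rank $2$ with $K_{99}=0$ and $\dim K_{100}\le 2$ still allows $B(-100)\oplus B(-101)$, $B(-101)^{\oplus 2}$, $B(-100)\oplus B(-150)$, and so on. What is missing is a structural upper bound on the \emph{sum} of the generator degrees, i.e.\ the determinant of $K$. Your proposed route via a naive degree count also cannot detect the rank drop. An injective $T$ is numerically consistent: $\coker T$ would have rank $2$, $c_1=224$ and Hilbert polynomial $2p+222$.

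The paper's proof uses exactly the missing ingredient: the forced module structure of the two-row $B$-linear presentation, plus one minor in degree $99$. Concretely, as in its $q=2$ appendix argument, the $B$-dual of your $T$ is the $S$-degree-$9$ strand of the band $\begin{pmatrix}G_0&G_1&G_2&0\\0&G_0&G_1&G_2\end{pmatrix}$ over $B\otimes S$, where $G_\beta=s^2F_{0\beta}+stF_{1\beta}+t^2F_{2\beta}$. Thus $K=\operatorname{Hom}_B(Q_9,B)$, with $Q_9$ the cokernel of that strand. For general coefficients the Buchsbaum--Rim complex is exact after localizing at each height-one prime of $B$, and its strand is $0\to B(-6)^{4}\to B(-2)\otimes S_6^{\oplus 4}\to B\otimes S_9^{\oplus 2}\to Q_9\to 0$.

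The four Buchsbaum--Rim syzygies, built from the $2\times 2$ minors, give the rank drop you were after: $\operatorname{rank}Q_9=110-112+4=2$. The same strand gives $\det Q_9=B(224-24)=B(200)$. Hence $K\simeq B(-a_1)\oplus B(-a_2)$ with $a_1+a_2\le 200$, with equality when $Q_9$ has no codimension-one torsion. The single minor certifying $K_{99}=0$ gives $a_i\ge 100$, which forces $a_1=a_2=100$. So the only specialization input needed is $K_{99}=0$. Nothing in degree $100$ is required, and no symbolic rank computation over $\bQ(F)$.
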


\begin{proof}
Write
\[
  \HS(M_{p,3};t)=\frac{N_{p,3}(t)}{(1-t)^3}.
\]
The clean maximal-rank numerator in degree \(q=3\) is
\[
  N^{\rm clean}_{p,3}(t)
  =
  (2p-2)-4(p+1)t^3+(2p+222)t^9
  +(6p-390)t^{10}+(174-6p)t^{11}.
\]
The characteristic-zero certificate for the associated two-row
\(B=\bC[s,t]\)-linear presentation gives
\begin{equation}
  N_{p,3}(t)
  =
  N^{\rm clean}_{p,3}(t)
  +
  2\max\{p-101,0\}\,t^9(1-t)^3 .
  \label{eq:7880-q3-numerator}
\end{equation}
Equivalently, the hidden \(B\)-module contribution in the \(p-2\) grading is
\(B(-100)^{\oplus2}\).  Here \(K_{99}\) denotes the degree-\(99\) kernel
piece in this \(B\)-linear presentation; the required exactness condition is
\(K_{99}=0\).  We verify this open rank condition by a nonzero integer minor
in the corresponding degree-\(99\) presentation matrix, detected modulo
\(32003\).  The \(B(-100)^{\oplus2}\) module structure supplies the forced
upper bound, while the minor excludes further kernel contribution in the
relevant degree.  Hence the identity holds on a nonempty open set of cubic
coefficients.

Now take the coefficient of \(t^9\).  The clean numerator contributes zero:
\[
  [t^9]\frac{N^{\rm clean}_{p,3}(t)}{(1-t)^3}=0,
\]
while the correction contributes
\[
  [t^9]\frac{2\max\{p-101,0\}t^9(1-t)^3}{(1-t)^3}
  =
  2\max\{p-101,0\}.
\]
Therefore
\[
  \HF(M_{p,3},9)=2\max\{p-101,0\}.
\]
The displayed formula for \(h^2\) follows from the top-\(\bP^2\) row
identity above, since
\[
  (p+1)4\binom{8}{2}-(p-1)2\binom{11}{2}=2p+222.
\]
\end{proof}

In the \(B\)-degree variable \(\tau\), the hidden \(q=3\) contribution is
\[
  \HS_{\mathrm{hidden},3}(\tau)=\frac{2\tau^{100}}{(1-\tau)^2}.
  \label{eq:223-q3-hidden-series}
\]

Finite-field scans suggest the following general Hilbert-numerator pattern.

\begin{conjecture}[Hidden wall structure in top-\(\bP^2\) numerator series for \(7880\)]
\label{conj:7880-hidden-numerator-series}
For a general CICY \(X_{7880}\), the graded module \(M_{p,q}\) has, in
degree \(c=3q\), a hidden \(B=\bC[s,t]\)-free contribution
\[
  B(-(52q-56))^{\oplus(q-1)}
\]
in the \(p-2\) grading.
Equivalently,
\[
  \HS_{\mathrm{hidden},q}(\tau)
  =
  \frac{(q-1)\tau^{52q-56}}{(1-\tau)^2},
\]
and hence
\[
  \HF(M_{p,q},3q)
  =
  (q-1)\max\{p-(52q-55),0\}.
\]
The corresponding hidden wall begins at
\[
  p=52q-54 .
\]
\end{conjecture}

The conjecture should be read as a Hilbert-numerator statement, not as a
rank-fitting rule.  For \(q=2\) it is exactly
Theorem~\ref{thm:223-q2-hidden-wall}; for \(q=3\) it is
Theorem~\ref{thm:223-q3-hidden-line}.  For \(q=5\), finite-field rank
computations at the predicted
wall give
\[
  q=5,\qquad c=15,\qquad
  \HF(M_{p,5},15)=4\max\{p-205,0\}.
\]
They agree for the tested values
\[
  p=200,202,204,205,206,207,208,210,215,220
\]
over the two primes \(32003\) and \(32009\).  This is evidence for the same
Hilbert-numerator pattern, not a characteristic-zero proof.
The proof problem is now sharply formulated: identify, inside the
\(B\)-linear dual graded module at \(z\)-degree \(3q\), the free summand
\(B(-(52q-56))^{\oplus(q-1)}\).  Once this summand is proved, the wall
position follows immediately from the Hilbert function of \(B\).

\paragraph{What is certified here.}
Apart from the two one-sided hidden walls above, the completed finite box has
the expected maximal-rank Hilbert function: for every pair
\[
  2\le p,q\le14
\]
the two critical-rank checks described in
Appendix~\ref{app:critical-rank-rays} certify the maximal-rank,
positive-part answer for all \(c\ge0\).  Thus the box gives \(13^2=169\)
infinite \(c\)-rays, while Theorems~\ref{thm:223-q2-hidden-wall} and
\ref{thm:223-q3-hidden-line} account for the hidden walls outside this small
square.

The machine-readable evidence for the \(7880\) computations in this
subsection is included in the companion repository
\url{https://github.com/jtwangbimsa-dragon/cicylinecoh_alpha}.

\subsubsection{\texorpdfstring{CICY \(7880\): the negative-\(\bP^1\) kernel box}{CICY 7880: the negative-P1 kernel box}}
\label{subsec:new-7880-negative-p1-box}

We now return to the branch left open in
Subsection~\ref{subsec:sec3-p1p1p2-223}.  There the line bundle is
\[
  L=\cO_X(-a,b,c),\qquad a\ge2,
\]
and the only unresolved range was
\[
  c>3a,\qquad b<2a-1.
\]
The structured Hilbert-numerator certificates and the direct finite-field
cohomology checks for this \(7880\) one-negative box are included in the same
companion repository as above.
For fixed \(a\) and \(b\), the degree-\((b,c)\) Toeplitz map can be viewed as
one graded map over
\[
  S=\bC[z_0,z_1,z_2]:
\]
\begin{equation}
  \Phi_{a,b}:S(-3)^{(a+1)(b-1)}
  \longrightarrow S^{(a-1)(b+1)} .
  \label{eq:7880-negative-p1-fixed-b-map}
\end{equation}
Here the factor \(b-1\) is \(\dim H^0(\bP^1_y,\cO(b-2))\), the factor
\(b+1\) is \(\dim H^0(\bP^1_y,\cO(b))\), and the shift \(S(-3)\) accounts for the
degree \(3\) in the \(z\)-variables.  Thus the degree-\(c\) kernel of
\eqref{eq:7880-negative-p1-fixed-b-map} is exactly the kernel term
from the table in Subsection~\ref{subsec:sec3-p1p1p2-223}.

Let
\[
  E_{a,b}=\ker\Phi_{a,b},\qquad
  \HS(E_{a,b};t)=\frac{N_{a,b}(t)}{(1-t)^3}.
\]
Then, in this branch,
\begin{equation}
  h^0(X,\cO_X(-a,b,c))
  =
  [t^c]\frac{N_{a,b}(t)}{(1-t)^3},
  \qquad
  h^1=h^0-\chi_a(b,c),
  \label{eq:7880-negative-p1-kernel-coho}
\end{equation}
and \(h^2=h^3=0\), where
\[
  \chi_a(b,c)=(-3ab-ac+bc)c+2(-a+b)+3c .
\]

In the finite box
\[
  2\le a,b\le10,\qquad b<2a-1.
\]
we computed the Hilbert numerator \(N_{a,b}(t)\) of the kernel module
\(E_{a,b}\).  This is a finite statement in the variables \(a,b\), but for each
fixed pair it determines all degrees \(c\).  Put
\[
  H_{a,b}(c):=[t^c]\frac{N_{a,b}(t)}{(1-t)^3}.
\]
Let \(r=b-a\) and let
\(\epsilon_a=1\) if \(a\) is even and \(\epsilon_a=0\) if \(a\) is odd.
The numerator is zero unless
\[
  1\le r\le4,\qquad r+2\le a\le10-r.
\]
In the nonzero cases it is
\[
\begin{aligned}
N_{a,a+1}(t)
  &=2t^{3\binom{a+1}{2}},
  &&3\le a\le9,\\
N_{a,a+2}(t)
  &=(4-\epsilon_a)t^{3\lfloor (a+1)^2/4\rfloor}
    +\epsilon_a t^{3\lfloor (a+1)^2/4\rfloor+3},
  &&4\le a\le8,\\
N_{a,a+3}(t)
  &=(6-2\epsilon_a)t^{3\lfloor (a+1)^2/5\rfloor}
    +2\epsilon_a t^{3\lfloor (a+1)^2/5\rfloor+3},
  &&5\le a\le7,\\
N_{6,10}(t)
  &=9t^{24}-t^{27}.
\end{aligned}
\]
Thus, for
\[
  2\le a,b\le10,\qquad b<2a-1,\qquad c>3a,
\]
the one-negative branch is given by the simplified formula
\[
  h^\bullet(X,\cO_X(-a,b,c))
  =
  \bigl(H_{a,b}(c),\,H_{a,b}(c)-\chi_a(b,c),\,0,\,0\bigr).
\]
The numerator computation covers all \(61\) pairs in this finite box: \(45\)
have zero numerator and the \(16\) listed above have nonzero numerator.

\subsubsection{\texorpdfstring{CICY \(7862\): a certified two-negative finite box}{CICY 7862: a certified two-negative finite box}}
\label{subsec:new-7862}

The CICY \(7862\) has the following configuration
\[
  X_{7862}\subset(\bP^1)^4,\qquad [X]=(2,2,2,2).
\]
All chambers except the genuinely two-negative ones are handled by the
hypersurface mechanisms discussed before.  The remaining chamber has the form
\[
  L=\cO_X(-r,-s,a,b),\qquad r,s,a,b\geq2.
\]
The hypersurface long exact sequence reduces the computation to the active
bigraded map
\[
  \Phi_{r,s}(a,b):
  R_{a-2,b-2}^{(r+1)(s+1)}
  \longrightarrow
  R_{a,b}^{(r-1)(s-1)},
  \qquad
  R=\bC[u_0,u_1,v_0,v_1].
\]
Let
\[
  M_{r,s}
  =
  \coker\left(
  R(-2,-2)^{(r+1)(s+1)}
  \longrightarrow R^{(r-1)(s-1)}
  \right),
\]
and set \(F_{r,s}(a,b)=\HF(M_{r,s},(a,b))\).  Then
\begin{align}
  h^2(X,\cO_X(-r,-s,a,b))&=F_{r,s}(a,b),\nonumber\\
  h^1(X,\cO_X(-r,-s,a,b))
  &=(r+1)(s+1)(a-1)(b-1)
    -(r-1)(s-1)(a+1)(b+1)\nonumber\\
  &\quad+F_{r,s}(a,b),
  \label{eq:tetra-new-general-two-negative}
\end{align}
and \(h^0=h^3=0\).  Thus the entire question is the Hilbert function of
the vector-valued cokernel module \(M_{r,s}\).

For the tetraquadric this Hilbert function is genuinely bigraded.  For a
fixed pair \((r,s)\), computing an initial module of
\(\operatorname{im}\Phi_{r,s}\) gives the monomial standard basis, hence the
whole bigraded Hilbert series of \(M_{r,s}\).  Equivalently, it gives one
bigraded Hilbert numerator
\[
  N_{r,s}(u,v)=(1-u)^2(1-v)^2\HS(M_{r,s};u,v),
\]
whose coefficients determine all \(F_{r,s}(a,b)\), and hence all line bundles
\(\cO_X(-r,-s,a,b)\) with \(a,b\ge2\), through
\eqref{eq:tetra-new-general-two-negative}.  Thus a finite box in the
negative directions still covers infinitely many line bundles.  The useful
question is therefore not whether a few sampled maps have the expected rank,
but whether
\(N_{r,s}(u,v)\) contains extra boundary terms beyond the clean
positive-part expression.  The clean maximal-rank guess would give
\[
  F_{r,s}^{\rm clean}(a,b)
  =
  \max\{(r-1)(s-1)(a+1)(b+1)
  -(r+1)(s+1)(a-1)(b-1),0\}.
\]

Exact \(\bQ\)-initial-module computations give a clean finite range where the
positive-part expression is correct.  The rank-certificate package then
extends this to the finite box
\[
  2\le r,s\le9,\qquad a,b\ge2.
\]
In this box the only certified departures from the clean expression occur on
the diagonal pairs
\[
  (r,s)=(4,4),(6,6),(8,8).
\]

\begin{theorem}[Certified finite box for the tetraquadric two-negative chamber]
\label{thm:tetra-7862-finite-box-rs-2-9}
Let \(X\) be a general smooth tetraquadric.  For
\[
  L=\cO_X(-r,-s,a,b),\qquad 2\le r,s\le9,\qquad a,b\ge2,
\]
the cohomology is given by \eqref{eq:tetra-new-general-two-negative}, with
\[
  F_{r,s}(a,b)=F_{r,s}^{\rm clean}(a,b)+\Delta_{r,s}(a,b),
\]
where \(\Delta_{r,s}=0\) unless \((r,s)=(2m,2m)\) for
\(m=2,3,4\).  For these three diagonal cases,
\[
  \Delta_{2m,2m}(a,b)
  =
  (m-1){\bf 1}_{a=m}\max\{b-(46m-49),0\}
  +(m-1){\bf 1}_{b=m}\max\{a-(46m-49),0\}.
\]
\end{theorem}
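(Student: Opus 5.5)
The plan is to reduce the theorem, for each of the $64$ pairs $(r,s)$ in the box, to a statement about the bigraded Hilbert numerator $N_{r,s}(u,v)$. Once $N_{r,s}$ is known, the identity \eqref{eq:tetra-new-general-two-negative} gives all cohomology groups of $\cO_X(-r,-s,a,b)$ with $a,b\ge2$.

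\textbf{Reduction to a Hilbert-function statement.} The first step is to fix the Serre-dual identification on the two negative $\bP^1$ factors, exactly as in the one-negative tetraquadric computation. This makes $\Phi_{r,s}$ the two-index band map
\[
(\Phi u)_{ij}=\sum_{\alpha,\beta\le2}F_{\alpha\beta}\,u_{i+\alpha,j+\beta}
\]
with $F_{\alpha\beta}\in R_{(2,2)}$. The long exact sequence then shows that $h^0=h^3=0$ and that $h^1,h^2$ are as displayed. What remains is to prove $\HF(M_{r,s},(a,b))=F^{\rm clean}_{r,s}+\Delta_{r,s}$ for a general choice of $F_{\alpha\beta}$.

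\textbf{Semicontinuity.} For each fixed $(r,s,a,b)$, the rank of $\Phi_{r,s}(a,b)$ is lower semicontinuous in the coefficients. Hence it suffices to do two things:
\begin{itemize}
\item exhibit the predicted rank at one integer specialization (a nonzero minor is an open condition);
\item show that the predicted rank is also an upper bound on a nonempty open set.
\end{itemize}
For the clean cases the upper bound is automatic, because the clean value is the maximal possible rank. For the diagonal cases $(2m,2m)$ the upper bound has to come from a structural argument.

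\textbf{Making the box infinite in $(a,b)$.} To cover all $a,b\ge2$ for each $(r,s)$, I would proceed in two stages.
\begin{enumerate}
\item Compute an initial module of $\operatorname{im}\Phi_{r,s}$ over $\bQ$ (or over $\bF_p$ and then lift the nonvanishing of the relevant minors to characteristic zero). This gives $N_{r,s}$ and therefore the Hilbert function in all bidegrees at once.
\item Where a full Gr\"obner basis is too large, replace it by the critical-rank argument of Appendix~\ref{app:critical-rank-rays}. If $\Phi_{r,s}(a_0,b)$ is surjective, then multiplying by $R_{(1,0)}$ and using that $M_{r,s}$ is generated in degree $0$ shows $\Phi_{r,s}(a,b)$ is surjective for all $a\ge a_0$. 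Injectivity propagates downward by the domain argument used in the bicubic subsection. So only finitely many critical bidegrees, one on each side of the wall $(r-1)(s-1)(a+1)(b+1)=(r+1)(s+1)(a-1)(b-1)$, need certification, except along the boundary lines $a=\text{const}$ or $b=\text{const}$.
\end{enumerate}
The boundary lines are rays in one variable. I would treat them as in Theorems~\ref{thm:223-q2-hidden-wall} and~\ref{thm:223-q3-hidden-line}: fixing $a$ turns the map into a $\bC[v_0,v_1]$-linear presentation, whose Hilbert numerator is certified by a finite rank computation in a degree past the last syzygy.

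\textbf{The main obstacle: the diagonal cases.} The hard step is the three diagonal cases $(r,s)=(2m,2m)$ on the lines $a=m$ (and symmetrically $b=m$). Here one must prove that the hidden free summand $\bC[v_0,v_1](-(46m-50))^{\oplus(m-1)}$ (in a suitable shift) is really present, and that nothing else is. For the upper bound, I would try to construct $m-1$ explicit cokernel functionals, i.e.\ kernel vectors of the transpose, which are $\bC[v]$-linearly independent and appear in degree $46m-48$. Ideally these come from an $\mathfrak{sl}_2$ or symmetry argument in the $u$-variables at $u$-degree $m$, similar to the $\mathsf f^3$ argument for the bicubic. Failing that, I would exhibit the relation in the degree-$(46m-49)$ presentation via a certified left-kernel at a specialization and argue it persists generically by a determinantal identity. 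For the lower bound, that nothing more is lost, I would use a nonzero integer minor of the appropriate size in the critical degree, checked modulo $32003$ and lifted to $\bZ$. Together these pin down the numerator, and extracting coefficients yields $\Delta_{2m,2m}$.
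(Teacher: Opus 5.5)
Your architecture matches the paper's certificate proof everywhere except the diagonal rows. Both use initial modules and nonzero integer minors where the clean value is the maximal rank. Both propagate surjectivity upward and injectivity downward. Both use one-variable fixed-$a$ or fixed-$b$ presentations for the finitely many lines below the asymptotes of the wall. The genuine gap is the step you flag yourself, which the paper certifies by an exact rational fixed-$a$ kernel replay on the high-wall rows.

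At $(r,s,a)=(2m,2m,m)$ the fixed-$a$ presentation over $B=\bC[v_0,v_1]$ is $B(-2)^{N_1}\to B^{N_0}$, with $N_1=(2m+1)^2(m-1)$ and $N_0=(2m-1)^2(m+1)=N_1+2$. So the clean prediction is injectivity. The hidden term is a free submodule $B(-(46m-48))^{m-1}$ of $\ker\Phi$, i.e.\ extra $h^1$ classes, first appearing at $b=46m-48$ rather than $46m-49$. Kernel vectors of the transpose are the wrong object: they form the graded dual of $M$, on which $v$ lowers degree, so "free summand in degree $46m-48$" has no meaning there.

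More seriously, nothing in your plan shows this kernel exists for a general tetraquadric. Your fallback runs semicontinuity the wrong way. Since $\dim\ker\Phi_{2m,2m}(m,b)$ is upper semicontinuous in the coefficients, a kernel vector certified at one specialization says nothing about general coefficients. The ``determinantal identity'' you would invoke, identical vanishing of the size-$(N_1-m+2)$ minors over $\operatorname{Frac}B$, is the claim itself. A bicubic-style $\mathfrak{sl}_2$ argument also only produces nonvanishing determinants, which is the opposite inequality.

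The forcing comes from the K3 fibres of $\pi\colon X\to\bP^1_v$. Write $\bP^1_x,\bP^1_y$ for the two negative factors. Evaluating the fixed-$a$ matrix at a general $v$ gives, up to Serre duality, the map $H^2(\cO(-2m-2,-2m-2,m-2))\to H^2(\cO(-2m,-2m,m))$ on $\bP^1_x\times\bP^1_y\times\bP^1_u$ induced by $f_v$. Its kernel is $H^1(S_v,\cO_{S_v}(-mE))$, where $E=(2,2,-1)|_{S_v}$ satisfies $E^2=0$.

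The argument then runs as follows.
\begin{enumerate}
\item The hypersurface sequence gives $H^0(S_v,E)\cong H^1(\bP^1_u,\cO(-3))\cong\bC^2$, so $h^0(mE)\ge m+1$.
\item Riemann--Roch ($\chi(mE)=2$, $h^2(mE)=0$) and Serre duality then give $h^1(S_v,-mE)=h^0(mE)-2\ge m-1$.
\item Hence, for every smooth member, the kernel over $B$ has rank at least $m-1$. This kernel is free, being a second syzygy over a two-variable polynomial ring.
\item A single nonzero minor of size $N_1-m+1$ at your specialization then makes the rank exactly $m-1$ on an open set.
\end{enumerate}

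Even this does not fix the degree. Upper semicontinuity only gives generic generator degrees $\ge 46m-48$. You still need an independent computation, for instance $\deg R^1\pi_*\cO_X(-2m,-2m,m,0)$ on $\bP^1_v$, since this is the sheafified kernel. Alternatively you need an explicit construction of the $m-1$ syzygies valid for all coefficients. Only then does $\Delta_{2m,2m}$ follow as stated.
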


\begin{proof}[Proof sketch]
The finite-box certificate reduces the claim to \(2310\) finite rank checks,
all of which are closed in characteristic zero by the accompanying
rank-certificate package.  The maximal-rank rows are certified by minors or
fixed-\(a\) structural covers, while the eight non-maximal high-wall rows are
certified by exact rational fixed-\(a\) kernel replay; substituting the
resulting Hilbert function into \eqref{eq:tetra-new-general-two-negative}
gives the formula.
\end{proof}

The certificate package, including the rank data and replay artifacts used in
this proof, can be found 
\href{https://drive.google.com/drive/folders/1Y_R9r4OMLkBOgba3Jkvboay7OfgqEnl0?usp=sharing}{here}.

For \((r,s)=(4,4)\), writing
\[
  \HS^{\rm clean}_{4,4}(u,v)
  =
  \sum_{a,b\ge2}F^{\rm clean}_{4,4}(a,b)u^av^b ,
\]
the computation gives
\[
  \HS(M_{4,4};u,v)
  =
  \HS^{\rm clean}_{4,4}(u,v)
  +
  \frac{u^2v^{44}}{(1-v)^2}
  +
  \frac{u^{44}v^2}{(1-u)^2}.
\]
Equivalently, for a general smooth tetraquadric,
\begin{equation}
  F_{4,4}(a,b)
  =
  \max\{2[17(a+b)-8(ab+1)],0\}
  +{\bf 1}_{a=2}\max(b-43,0)
  +{\bf 1}_{b=2}\max(a-43,0).
  \label{eq:tetra-new-44-wall}
\end{equation}
Thus the two new subwalls are the boundary rays \(a=2,\ b\ge44\) and
\(b=2,\ a\ge44\).  For instance, at \((a,b)=(2,44)\) the clean formula
predicts \((h^1,h^2)=(0,140)\), while \eqref{eq:tetra-new-44-wall} gives
\((h^1,h^2)=(1,141)\).

The same certified finite-box theorem gives the diagonal pattern for
\(m=2,3,4\):
\[
  \HS(M_{2m,2m};u,v)
  =
  \HS^{\rm clean}_{2m,2m}(u,v)
  +(m-1)\left(
  \frac{u^m v^{46m-48}}{(1-v)^2}
  +
  \frac{u^{46m-48}v^m}{(1-u)^2}
  \right).
\]
For \(m>4\) the displayed expression remains a natural extrapolation from the
finite-field scans, but it is not part of the certified finite box above.

\subsection{\texorpdfstring{New formulae: CICY \(7707\) as a case study}{New formulae: CICY 7707 as a case study}}
\label{subsec:new-7707}
This subsection gives a complete worked example of the method on CICY
\(7707\).  The goal is to show how Hilbert-function calculations can be used to
derive new line-bundle cohomology formulae for a CICY that was not
previously covered by a closed formula package.  In practice the output has
two complementary parts.  Some chambers admit uniform analytic Hilbert-series
arguments, while other chambers are controlled by finite-box theorems proved
by exact computational commutative algebra.  The latter can still cover
a large and useful class of line bundles.  
Let
\[
X_{7707}=
\left[\begin{array}{c|cc}
\bP^1_x&1&1\\
\bP^1_y&0&2\\
\bP^3_z&3&1
\end{array}\right],
\]
and put
\[
  D_1=(1,0,3),\qquad D_2=(1,2,1).
\]
Write
\begin{equation}
  F=x_0A_0(z)+x_1A_1(z),\qquad
  G=x_0B_0(y,z)+x_1B_1(y,z),
  \label{eq:7707-FG}
\end{equation}
where
\[
  A_i\in H^0(\bP^3_z,\cO(3)),\qquad
  B_i\in H^0(\bP^1_y\times\bP^3_z,\cO(2,1)).
\]
Eliminating the \(\bP^1_x\)-coordinate gives
\begin{equation}
  \Delta=A_0B_1-A_1B_0
  \in H^0(\bP^1_y\times\bP^3_z,\cO(2,4)).
  \label{eq:7707-delta}
\end{equation}
The four Koszul shifts are
\[
  0,\qquad D_1=(1,0,3),\qquad D_2=(1,2,1),\qquad
  D_1+D_2=(2,2,4).
\]
The Koszul spectral sequence has
\[
  E_1^{-p,q}(k)
  =
  \bigoplus_{|I|=p}
  H^q\!\left(A,\cO_A(k-D_I)\right),
  \qquad
  A=\bP^1_x\times\bP^1_y\times\bP^3_z ,
\]
with \(D_I=\sum_{i\in I}D_i\), first differential induced by \(F,G\), and
\[
  d_r:E_r^{-p,q}\longrightarrow E_r^{-p+r,q-r+1}.
\]
Since the Koszul complex has length two, only \(d_1\) and \(d_2\) can occur.
For CICY \(7707\) we use the partition
\[
k_1,k_2\in \{\le -2,\ -1,\ 0,\ 1,\ \ge2\},
\qquad
k_3\in
\{\le -4,\ -3,\ -2,\ -1,\ 0,\ 1,\ 2,\ 3,\ \ge4\},
\]
which gives \(5\cdot5\cdot9=225\) coarse regions.  Most of these
regions contain no active rank problem.  We therefore list
only the ranges with active maps below.  
Throughout the range statements, set
\begin{equation}
  \chi_{7707}(r,s,t)
  =
  3rst+3rt^2+2st^2+2r+2s+\frac{t^3+11t}{3}.
  \label{eq:7707-euler-characteristic}
\end{equation}

\subsubsection*{\texorpdfstring{\(k=(-\ell,b,c)\) and \(k=(\ell,-b,-c)\)}{k=(-ell,b,c) and k=(ell,-b,-c)}}
In the range
\[
  k=(-\ell,b,c),\qquad \ell,b\ge2,\qquad c\ge4,
\]
and its Serre-dual range \(k=(\ell,-b,-c)\), the active rank defect is the
Hilbert function of the module \(M_{\ell,3}\) in
Lemma~\ref{lem:7707-p1x-row}.  Put
\[
  \rho^x_\ell(b,c)=\HF(M_{\ell,3};b,c),
\]
equivalently, \(\rho^x_\ell(b,c)\) is the coefficient of \(u^bv^c\) in the
Hilbert series displayed in \eqref{eq:7707-p1x-HS}.
The corresponding cohomology vectors are
\begin{align}
  h^\bullet\bigl(X,\cO_X(-\ell,b,c)\bigr)
  &=
  \bigl(\chi_{7707}(-\ell,b,c)+\rho^x_\ell(b,c),
  \rho^x_\ell(b,c),0,0\bigr),
  \label{eq:7707-p1x-cohomology}\\
  h^\bullet\bigl(X,\cO_X(\ell,-b,-c)\bigr)
  &=
  \bigl(0,0,\rho^x_\ell(b,c),
  \chi_{7707}(-\ell,b,c)+\rho^x_\ell(b,c)\bigr).
\end{align}

\subsubsection*{\texorpdfstring{\(k=(a,-\ell,c)\) and \(k=(-a,\ell,-c)\)}{k=(a,-ell,c) and k=(-a,ell,-c)}}
In the range
\[
  k=(a,-\ell,c),\qquad a,\ell\ge2,\qquad c\ge4,
\]
and its Serre-dual range \(k=(-a,\ell,-c)\), the active rank defect is the
Hilbert function of the Buchsbaum--Rim module \(\mathcal T_{\ell,F}\) in
Lemma~\ref{lem:7707-p1y-row}.  We write
\[
  \rho^y_\ell(a,c)=\HF(\mathcal T_{\ell,F};a,c),
\]
where the right-hand side is computed from the resolution
\eqref{eq:7707-p1y-resolution}.  The cohomology vectors are
\begin{align}
  h^\bullet\bigl(X,\cO_X(a,-\ell,c)\bigr)
  &=
  \bigl(\chi_{7707}(a,-\ell,c)+\rho^y_\ell(a,c),
  \rho^y_\ell(a,c),0,0\bigr)
  \label{eq:7707-p1y-cohomology}\\
  h^\bullet\bigl(X,\cO_X(-a,\ell,-c)\bigr)
  &=
  \bigl(0,0,\rho^y_\ell(a,c),
  \chi_{7707}(a,-\ell,c)+\rho^y_\ell(a,c)\bigr).
\end{align}

\subsubsection*{\texorpdfstring{\(k=(-a,-b,m)\) and \(k=(a,b,-m)\), \(m=1,2\)}{k=(-a,-b,m) and k=(a,b,-m), m=1,2}}
This is the simplest new \(\bP^3_z\)-top range.  In low positive
\(z\)-degree the equation of degree \(3\) in \(z\) is absent from the active
map, so the calculation is governed by the \(z\)-linear part of \(G\).
Write
\[
  S=\bP^1_x\times\bP^1_y,\qquad L=\cO_S(1,2),
\]
and expand
\[
  G=\sum_{i=0}^3z_iG_i(x,y),\qquad G_i\in H^0(S,L).
\]
For a general member the four sections define a kernel bundle
\[
0\longrightarrow M\longrightarrow \cO_S^4
\xrightarrow{\Gamma_G=(G_0,\ldots,G_3)}
L\longrightarrow0 .
\]
Here ``four'' refers to the four \(z\)-coefficients; each \(G_i\) is itself
a general element of the six-dimensional space \(H^0(S,L)\).
The \(m=1\) row is the ordinary multiplication map by the four coefficients
\((G_0,\ldots,G_3)\).  The \(m=2\) row is only one step larger: its source is
a symmetric \(4\times4\) matrix of coefficients, and the map contracts this
matrix with the column vector \((G_0,\ldots,G_3)\), producing four sections
of \(L\).  Equivalently, fibrewise a symmetric tensor \(uv\) is sent to
\[
  u\otimes\Gamma_G(v)+v\otimes\Gamma_G(u).
\]
The kernel is \(\mathcal F=\operatorname{Sym}^2M\).

\begin{theorem}[CICY \(7707\), \(k=(-a,-b,m)\) with \(m=1,2\)]
\label{thm:7707-range-m12-analytic}
Let \(X=X_{7707}\) be a general smooth member of CICY \(7707\).  For
\[
  k_m=(-a,-b,m),\qquad a,b\ge2,\qquad m\in\{1,2\},
\]
define \(q_m(a,b)\) by the generating series
\[
  \sum_{a,b\ge2}q_m(a,b)u^{a-2}v^{b-2}=\mathcal H_m(u,v).
\]
Then
\[
  \mathcal H_1(u,v)=2+u.
\]
For \(m=2\) one has
\[
  \mathcal H_2(u,v)
  =
  \frac{N_2(u,v)}{(1-u)^2(1-v)^2},
\]
with
\[
\begin{aligned}
N_2(u,v)=&
(14-16v+3v^4)(1-u)^2
+(16u-14u^2+u^6)(1-v)^2  \\
&\quad +(8uv+4u^2v)(1-u)^2(1-v)^2 .
\end{aligned}
\]
Equivalently, the first defect is supported at only two lattice points,
\[
q_1(a,b)=
\begin{cases}
2,&(a,b)=(2,2),\\
1,&(a,b)=(3,2),\\
0,&\text{otherwise},
\end{cases}
\]
while, with \([r]_+=\max\{r,0\}\),
\[
q_2(a,b)=
\begin{cases}
2a+10+[a-7]_+,&b=2,\ a\ge2,\\
18-2b,&a=2,\ 3\le b\le5,\\
b+3,&a=2,\ b\ge6,\\
8,&(a,b)=(3,3),\\
4,&(a,b)=(4,3),\\
0,&\text{otherwise}.
\end{cases}
\]
Put
\[
  \chi_m(a,b)
  =
  \binom{m+3}{3}(a-1)(b-1)
  -
  \binom{m+2}{3}a(b+1).
\]
The cohomology of \(\cO_X(k_m)\) is concentrated in the two middle degrees
and is given by
\[
  h^\bullet(X,\cO_X(-a,-b,m))
  =
  (0,q_m(a,b),q_m(a,b)+\chi_m(a,b),0),
\]
and Serre duality gives the companion formula
\[
  h^\bullet(X,\cO_X(a,b,-m))
  =
  (0,q_m(a,b)+\chi_m(a,b),q_m(a,b),0).
\]
\end{theorem}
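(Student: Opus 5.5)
Our plan follows the Koszul spectral sequence for \(\cO_X(-a,-b,m)\) on \(A=\bP^1_x\times\bP^1_y\times\bP^3_z\), followed by a cohomology computation on \(S=\bP^1_x\times\bP^1_y\) for a vector bundle. The Koszul terms are \(\cO_A(-a,-b,m)\), \(\cO_A(-a-1,-b,m-3)\oplus\cO_A(-a-1,-b-2,m-1)\) and \(\cO_A(-a-2,-b-2,m-4)\). For \(m\in\{1,2\}\), the \(z\)-degrees \(m-3\) and \(m-4\) lie in the acyclic range \(-3\le d\le -1\), so only \(K_0\) and the \(G\)-summand of \(K_1\) contribute. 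Both have \(x\)- and \(y\)-degrees \(\le -2\), so their cohomology sits in \(H^2(S,-)\otimes H^0(\bP^3,\cO(\cdot))\). The \(E_1\)-page therefore has a single row with one map
\[
  H^2\!\bigl(S,\cO_S(-a-1,-b-2)\bigr)\otimes\operatorname{Sym}^{m-1}\bC^4
  \longrightarrow
  H^2\!\bigl(S,\cO_S(-a,-b)\bigr)\otimes\operatorname{Sym}^{m}\bC^4 ,
\]
and \(d_2\) cannot occur. The map is induced by the \(z\)-linear expansion \(G=\sum z_iG_i\); the equation \(F\) drops out, as stated before the theorem. This places \(h^\bullet\) in degrees \(2\) and \(3\) of the complex, hence in \(h^1,h^2\) of \(X\). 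Its Euler characteristic is \(\chi_m\), computed from \(h^2(\cO_{\bP^1}(-n))=n-1\). It then remains to compute one kernel, which we call \(q_m\), and obtain the cokernel as \(q_m+\chi_m\). The Serre-dual companion follows from \(K_X\) trivial.

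To compute \(q_m\), we sheafify over \(S\). Dualizing with Serre duality on \(S\), the map becomes \(H^0\) of a twist of the sheaf map \(\operatorname{Sym}^m\cO_S^4\to \operatorname{Sym}^{m-1}\cO_S^4\otimes L\) given by contraction with \(\Gamma_G\). For general \(G\), \(\Gamma_G\) is surjective, since four general sections of the globally generated \(L=\cO(1,2)\) have no common zero on a surface. Hence \(M\) is a rank-three bundle with \(\det M=L^{-1}\), and the contraction sequence
\[
0\to\operatorname{Sym}^mM\to\operatorname{Sym}^m\cO^4\to\operatorname{Sym}^{m-1}\cO^4\otimes L\to 0
\]
is exact. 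The defect \(q_m(a,b)\) is then identified with \(h^0\) (equivalently \(h^1\)) of \(\operatorname{Sym}^mM\) twisted by \(\cO_S(a-2,b-2)\), up to the dualization bookkeeping. We check this normalisation against \(q_1\). For \(m=1\), the long exact sequence of \(0\to M\to\cO^4\to L\to0\), twisted by \(\cO(a-2,b-2)\), gives \(q_1\) as the cokernel of \(H^0(\cO(a-2,b-2))^4\to H^0(\cO(a-1,b))\). This cokernel is the Hilbert function of \(R/(G_0,\dots,G_3)\) in the shifted bidegree. Four general forms of bidegree \((1,2)\) in \(\bC[x_0,x_1,y_0,y_1]\) have a finite-length quotient, and its Hilbert function can be computed directly: \(\dim R_{(1,2)}=6\) against \(4\) generators gives \(2\), and bidegree \((2,2)\) gives \(1\). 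This yields \(\mathcal H_1=2+u\).

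For \(m=2\), we would build a resolution of \(\operatorname{Sym}^2M\), or equivalently of the bigraded module \(\coker\bigl(R(-1,-2)\otimes\operatorname{Sym}^1\to R\otimes\operatorname{Sym}^2\bigr)\) and its kernel, over the Cox ring \(R=\bC[x_0,x_1,y_0,y_1]\). The Hilbert series is then read off from the alternating sum of shifts, as in (\ref{eq:sec2-hs-from-resolution}). The structure of \(N_2\) suggests a decomposition into three pieces: a \(b=2\) edge series \((14-16v+3v^4)/(1-v)^2\), an \(a=2\) edge series \((16u-14u^2+u^6)/(1-u)^2\), and finitely many interior points \(8uv+4u^2v\). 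We would therefore compute the three pieces separately. The edges come from restricting to the boundary degrees, where one \(H^0(\bP^1,\cO(-\cdot))\) factor degenerates and the problem becomes a graded module over a single \(\bP^1\). The Buchsbaum--Rim/Eagon--Northcott complex for the \(4\times10\) contraction matrix should then give explicit numerators; the term \(u^6\) (the \([a-7]_+\) wall) is a delayed syzygy. The interior values at \((3,3)\) and \((4,3)\), together with vanishing elsewhere, would follow from a Castelnuovo--Mumford-type regularity bound for \(\operatorname{Sym}^2M\): once the kernel map is injective in one critical bidegree, the argument used for the bicubic propagates the vanishing by module structure.

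The main obstacle is the \(m=2\) edge series, in particular proving the delayed wall at \(a=8\) on \(b=2\). That requires controlling a genuine syzygy of the restricted module, not just maximal rank. We would first try to obtain it analytically from the splitting type of \(\operatorname{Sym}^2M\) restricted to the fibres of \(S\to\bP^1_y\). If this fails, we would fall back on the paper's certificate strategy: compute the initial module of the finitely presented edge module at one rational specialization. Hilbert functions are upper semicontinuous and are preserved under Gr\"obner degeneration, so this single computation certifies the numerator on a nonempty Zariski-open set of coefficients.
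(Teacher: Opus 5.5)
Your reduction matches the paper's. Only $K_0$ and the $G$-summand of $K_1$ survive, $h^1=\dim\ker d_1=q_m$, and Euler characteristic gives $h^2=q_m+\chi_m$. After Serre duality on $S=\bP^1_x\times\bP^1_y$, the exact contraction sequence gives $q_m=h^1(S,\operatorname{Sym}^mM(a-2,b-2))$. This is $h^1$, not ``$h^0$ (equivalently $h^1$)'': $h^0(S,\operatorname{Sym}^mM(a-2,b-2))$ is the kernel of the dualized map, i.e.\ $q_m+\chi_m$.

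\textbf{The gap.} The $m=2$ series, which is the substance of the theorem, is never derived, and the tools you propose for it do not work. The Buchsbaum--Rim/Eagon--Northcott complex of the $4\times10$ contraction matrix needs its maximal minors to have grade $\ge7$. That is impossible in $\bC[x_0,x_1,y_0,y_1]$. In fact, since the contraction is surjective on $S$, the minors vanish exactly on the irrelevant locus, which has codimension $2$.

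Your Gr\"obner fallback is one-sided. Cokernel dimension is upper semicontinuous in the coefficients, so one specialization only bounds the generic $q_2$ from above. That certifies values equal to $\max\{\dim\mathrm{tgt}-\dim\mathrm{src},0\}$. Your interior plan is therefore fine, since the interior is indeed maximal rank, with propagation from the square degrees $(a,b)=(3,4),(5,3)$. But it cannot certify excess, and the edges are exactly where excess occurs:
\begin{itemize}
\item for $b=2$ the expected value is $2a+10$, and $[a-7]_+$ is a forced kernel;
\item for $a=2$ the expected value is $[18-2b]_+$, yet $q_2(2,b)=b+3$ for $b\ge6$.
\end{itemize}
You have also swapped the edges. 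Since $u$ tracks $a-2$, $(14-16v+3v^4)/(1-v)^2$ is the $a=2$ edge and $(16u-14u^2+u^6)/(1-u)^2$ is the $b=2$ edge.

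\textbf{The missing idea (the paper's route).} Use Leray along the two rulings with explicit pushforward splittings. Writing $\mathcal F=\operatorname{Sym}^2M$:
\begin{itemize}
\item $p_{y*}\mathcal F\cong\cO_y(-4)^3$ and $R^1p_{y*}\mathcal F\cong\cO_y(4)$;
\item $p_{x*}\mathcal F\cong\cO_x(-6)$ and $R^1p_{x*}\mathcal F\cong\cO_x(2)^3$;
\item $p_{x*}(\mathcal F(0,1))\cong\cO_x(-4)^4$ with vanishing $R^1$.
\end{itemize}
These give $q_2(2,b)=3[5-b]_++b+3$, $q_2(a,2)=[7-a]_++3(a+1)$ and $q_2(a,3)=4[5-a]_+$.

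They also explain why the excess is forced:
\begin{itemize}
\item On each fibre $\bP^1_x\times\{y\}$, $M$ has rank $3$ and degree $-1$ inside $\cO^4$, so it is $\cO^2\oplus\cO(-1)$. Then $\mathcal F$ has a fibrewise $\cO(-2)$ summand, $R^1p_{y*}\mathcal F$ is a line bundle, and this produces the $b+3$.
\item On each fibre $\{x\}\times\bP^1_y$, four sections of $\cO(2)$ are linearly dependent. For general $G$ the relation (cubic minors in $x$) gives a line subbundle $\cO(-3,0)\subset M$. Its square $\cO(-6,0)\subset\mathcal F$ accounts for the kernel $h^0(\bP^1_x,\cO(a-8))$, i.e.\ the wall at $a=8$.
\end{itemize}
So the $b=2$ edge needs $p_x$, not the fibres of $S\to\bP^1_y$ that you mention. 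Once this structure is in place, finitely many minors legitimately certify the remaining open splitting conditions.

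\textbf{A smaller gap at $m=1$.} $R/(G_0,\ldots,G_3)$ is not of finite length: it equals $R_{(n,0)}$ in bidegree $(n,0)$. Two things are asserted rather than proved:
\begin{itemize}
\item the vanishing of $q_1$ away from two points, in particular for the square $8\times8$ and $12\times12$ maps at $(a,b)=(2,3),(4,2)$;
\item the value $q_1(3,2)=1$, which requires $R_{(1,0)}^4\to R_{(2,2)}$ to be injective.
\end{itemize}
Both follow from $p_{y*}M\cong\cO_y(-2)^2$ and $p_{x*}M\cong\cO_x(-3)$, as in the paper.
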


\begin{proof}
Put \(D=(a-2,b-2)\).  After Serre duality in the two negative
\(\bP^1\)-directions, the \(m=1\) active map is the global-section map
\[
  \phi_1(D):
  H^0(S,\cO_S(D)^4)\longrightarrow H^0(S,L(D)),
  \qquad (f_i)\longmapsto \sum_i f_iG_i .
\]
This is just the \(D\)-twist of the evaluation sequence
\[
  0\longrightarrow M(D)\longrightarrow \cO_S(D)^4
  \xrightarrow{\Gamma_G(D)}
  L(D)\longrightarrow0 .
\]
Since \(D\) and \(L(D)\) have nonnegative bidegrees on
\(\bP^1\times\bP^1\), the groups \(H^1(S,\cO_S(D)^4)\) and
\(H^1(S,L(D))\) vanish.  The long exact sequence therefore gives
\[
  \coker\phi_1(D)\simeq H^1(S,M(D)).
\]
Thus the first defect term is
\[
  q_1(a,b)=h^1(S,M(a-2,b-2)).
\]

For \(m=2\), set \(E=\bC^4\).  The active map is obtained by
contracting a symmetric \(4\times4\) coefficient matrix against the column
vector \((G_0,\ldots,G_3)\).  Fibrewise it is the map
\[
  \operatorname{Sym}^2E\otimes\cO_S
  \longrightarrow E\otimes L,\qquad
  uv\longmapsto u\otimes \Gamma_G(v)+v\otimes \Gamma_G(u).
\]
Its kernel is exactly \(\operatorname{Sym}^2M\).  Indeed, at a point of
\(S\) the surjection \(E\to L_s\) has kernel \(M_s\), and after choosing a
splitting \(E\simeq M_s\oplus L_s\), the displayed contraction kills
precisely the summand \(\operatorname{Sym}^2M_s\) and maps the remaining
summands onto \(E\otimes L_s\).  Hence there is an exact sequence
\[
  0\longrightarrow \operatorname{Sym}^2M
  \longrightarrow \operatorname{Sym}^2E\otimes\cO_S
  \longrightarrow E\otimes L
  \longrightarrow0 .
\]
Twisting by \(\cO_S(D)\) and taking global sections gives the map
\[
  \phi_2(D):
  H^0(S,\operatorname{Sym}^2E\otimes\cO_S(D))
  \longrightarrow
  H^0(S,E\otimes L(D)).
\]
Again, for \(a,b\ge2\), the source and target line bundles have no \(H^1\),
so the long exact sequence gives
\[
  \coker\phi_2(D)
  \simeq
  H^1(S,\operatorname{Sym}^2M(D)).
\]
Thus
\[
  q_2(a,b)=h^1(S,\operatorname{Sym}^2M(a-2,b-2)).
\]

The remaining calculation is now ordinary cohomology on
\(\bP^1_x\times\bP^1_y\).  The pushforward splittings of
\(M\) and \(\operatorname{Sym}^2M\) are given in
Appendix~\ref{app:7707-technical}; applying them gives exactly the two
generating functions \(\mathcal H_1\) and \(\mathcal H_2\) displayed above.
The finite-field minors in the certificate serve only to certify that the
required splitting type occurs on a nonempty characteristic-zero open set.
The Euler characteristic then determines the second middle entry, and the
formula for \(k=(a,b,-m)\) is the Serre-dual one.
\end{proof}

\subsubsection*{\texorpdfstring{\(k=(-a,-b,3)\) and \(k=(a,b,-3)\)}{k=(-a,-b,3) and k=(a,b,-3)}}
The next \(z\)-degree is the first place where the cubic \(z\)-part of
the \((1,0,3)\)-equation also enters the active map.  The resulting
module is no longer governed by the small kernel bundle above.  What we
use here is a finite-width theorem: it gives an infinite formula along the
first ten vertical strips and the first ten horizontal strips in the
\((a,b)\)-plane.  This is the part of the calculation where the paper uses
exact finite-field commutative algebra as the proof device rather than a
uniform hand-derived splitting argument.

\begin{theorem}[CICY \(7707\), \(k=(-a,-b,3)\) on finite-width strips]
\label{thm:7707-range-z3-strip}
Let \(X=X_{7707}\) be a general smooth member of CICY \(7707\), and set
\[
  k_3=(-a,-b,3),\qquad a,b\ge2.
\]
The formula below holds on the finite-width region
\[
  \mathcal S_3=
  \{(a,b):2\le a\le10,
  b\ge2\}\cup
  \{(a,b):a\ge2,
  2\le b\le10\}.
\]
For \(2\le a\le10\), define \(q_3(a,b)\) along the vertical strip by
\[
  \sum_{b\ge2}q_3(a,b)s^{b-2}
  =
  \frac{N^{\mathrm v}_a(s)}{(1-s)^2}.
\]
For \(2\le b\le10\), define it along the horizontal strip by
\[
  \sum_{a\ge2}q_3(a,b)s^{a-2}
  =
  \frac{N^{\mathrm h}_b(s)}{(1-s)^2}.
\]
On the overlap \(2\le a,b\le10\), the vertical and horizontal
certificates give the same values.
The normalized numerators are
\[
\begin{array}{c|l@{\qquad}c|l}
a & N^{\mathrm v}_a(s) & b & N^{\mathrm h}_b(s)\\
\hline
2  & 42-40s+2s^{12} & 2  & 42-31s\\
3  & 53-60s+2s^6+6s^7 & 3  & 44-42s+2s^{15}\\
4  & 64-80s+16s^5 & 4  & 46-53s+3s^7+4s^8\\
5  & 75-100s+25s^4 & 5  & 48-64s+16s^4\\
6  & 86-120s+16s^3+18s^4 & 6  & 50-75s+25s^3\\
7  & 97-140s+32s^3+11s^4 & 7  & 52-86s+16s^2+18s^3\\
8  & 108-160s+48s^3+4s^4 & 8  & 54-95s+28s^2+13s^3\\
9  & 119-180s+3s^2+58s^3 & 9  & 56-98s+28s^2+14s^3\\
10 & 130-200s+10s^2+60s^3 & 10 & 58-101s+28s^2+15s^3 .
\end{array}
\]
Equivalently, for \((a,b)\in\mathcal S_3\),
\[
q_3(a,b)=
\begin{cases}
4b+12,&a=2,\ b\ge14,\\
b+5,&a=3,\ b\ge8,\\
4a+8,&b=3,\ a\ge17,\\
[-\chi_3(a,b)]_+,&\text{otherwise},
\end{cases}
\]
where
\[
  \chi_3(a,b)=20(a-1)(b-1)-a(11b+9).
\]
On \(\mathcal S_3\), the cohomology of \(\cO_X(k_3)\) is
\[
  h^\bullet(X,\cO_X(-a,-b,3))=(0,q_3(a,b),q_3(a,b)+\chi_3(a,b),0),
\]
and Serre duality gives, for the companion range \(k=(a,b,-3)\),
\[
  h^\bullet(X,\cO_X(a,b,-3))=(0,q_3(a,b)+\chi_3(a,b),q_3(a,b),0).
\]
\end{theorem}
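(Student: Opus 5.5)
We would first reduce the statement to a single active Koszul map, exactly as in the proof of Theorem~\ref{thm:7707-range-m12-analytic}. For \(k_3=(-a,-b,3)\) with \(a,b\ge2\), the Koszul shifts give the ambient bundles
\[
\cO_A(-a,-b,3),\qquad \cO_A(-a-1,-b,0),\qquad \cO_A(-a-1,-b-2,2),\qquad \cO_A(-a-2,-b-2,-1).
\]
The last bundle has \(z\)-degree \(-1\), so it is acyclic by \eqref{eq:overview-projective-space-cohomology-table}. Hence no \(d_2\) occurs.

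Every other term sits in the \(H^2\) row, which is top cohomology in both \(\bP^1\)-factors and bottom cohomology in \(\bP^3_z\). The spectral sequence therefore degenerates to a single \(d_1\):
\[
H^2\big(\cO(-a-1,-b,0)\big)\oplus H^2\big(\cO(-a-1,-b-2,2)\big)\longrightarrow H^2\big(\cO(-a,-b,3)\big).
\]
This map is induced by the \(z\)-cubic part of \(F\) and the \(z\)-quadratic-times-\(G\) part. Its kernel is \(H^1(X,\cO_X(k_3))\) and its cokernel is \(H^2\). The source minus target dimension is \(-\chi_3(a,b)\), which gives \(h^2=q_3+\chi_3\) once \(q_3\) is identified with the kernel dimension. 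Serre duality on both \(\bP^1\)-factors turns the transpose of this map into the \((a-2,b-2)\)-bigraded piece of a homogeneous map of free \(\bC[x_0,x_1,y_0,y_1]\)-modules with fixed \(z\)-data. The kernel of \(d_1\) is then the cokernel Hilbert function \(\HF(M,(a-2,b-2))\) of a fixed bigraded module \(M\), via \eqref{eq:sec2-rank-defect} and \eqref{eq:sec2-kernel-from-cokernel}.

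Next we would fix one bidegree coordinate. For fixed \(a\) (respectively fixed \(b\)), the relevant piece \(M^{\mathrm v}_a=\bigoplus_b M_{(a-2,b-2)}\) is a finitely generated graded module over the two-variable ring \(\bC[y_0,y_1]\) (respectively \(\bC[x_0,x_1]\)). Its Hilbert series therefore has the form \(N(s)/(1-s)^2\). For each of the nine values \(a=2,\dots,10\) and each \(b=2,\dots,10\), we would compute an initial module of the image submodule at a random integer specialization of \((A_i,B_i)\), working over \(\bF_{32003}\).

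Because Buchberger's criterion is verified on a finite set of S-pairs, this Gröbner basis controls all degrees at once. It yields the numerators \(N^{\mathrm v}_a\) and \(N^{\mathrm h}_b\) in the table, with the factors \((1-s)^2\) normalized out. The overlap \(2\le a,b\le10\) is a finite consistency check between the two families of certificates.

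The main obstacle is transferring these finite-field Hilbert functions to a nonempty Zariski-open set of characteristic-zero coefficients, uniformly in the unbounded direction. For the maximal-rank part, rank is lower semicontinuous. A nonzero minor modulo \(p\) of an integer matrix gives a nonzero integer minor, so in degrees up to a regularity bound read off from the Gröbner data the generic cokernel is at most the certified value. The same lifting applied to the Gröbner basis, via a lucky-prime argument or replay over \(\bQ\), would extend this bound to all higher degrees.

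The extra terms are the delicate part, for example \(2s^{12}\) for \(a=2\) (giving \(4b+12\) for \(b\ge14\)), \(2s^6+6s^7\) for \(a=3\), and \(2s^{15}\) for \(b=3\). Here we would follow the strategy of Theorem~\ref{thm:223-q3-hidden-line}. We would exhibit the corresponding free summand of the fixed-strip module over \(\bC[s,t]\), or equivalently kernel elements of the presentation that exist identically in the coefficients. This supplies a lower bound on the defect. An integer minor in the first degree where further kernel could appear then rules out additional contributions, so upper and lower bounds agree on an open set.

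Finally, coefficient extraction from \(N^{\mathrm v}_a(s)/(1-s)^2\) and \(N^{\mathrm h}_b(s)/(1-s)^2\) gives the piecewise closed form for \(q_3\). The generic rows collapse to \([-\chi_3(a,b)]_+\), and the listed exceptional rays are the extracted free-summand contributions. Serre duality, \(K_X\simeq\cO_X\), then gives the companion formula for \(k=(a,b,-3)\).
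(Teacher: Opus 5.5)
Your proposal follows the paper's proof in its structure. The \(z\)-degree \(-1\) Koszul term is acyclic, so only \(d_1\) in the \(H^2\) row survives. Serre duality on both \(\bP^1\)-factors identifies its transpose with the map \(\phi_3:W_3^\vee\otimes U(-2,-2)\to W_0^\vee\otimes U(-1,-2)\oplus W_2^\vee\otimes U(-1,0)\) of Appendix~\ref{app:7707-technical}, with \(q_3=\dim\coker(\phi_3)_{a,b}\). Fixing \(a\) or \(b\) gives the eighteen strip presentations over \(\bC[y_0,y_1]\) and \(\bC[x_0,x_1]\), whose numerators the paper obtains from standard bases at one integer specialization modulo a prime.

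The difference is in the passage to characteristic zero. The appendix argues that nonvanishing at the seed of the recorded leading coefficients and pivot minors gives the same Hilbert functions on an open set, for all strips at once. You separate the two inequalities, which is the more careful treatment. A specialization only yields \(q_3^{\rm gen}\le q_3^{\rm seed}\), in every degree, because \(\bQ\)-rank dominates \(\bF_p\)-rank and rank is lower semicontinuous. Uniformity in the free variable then follows from generic flatness of the cokernel family, not from a lucky prime. Since \(q_3=\dim\ker d_1\ge-\chi_3\) holds trivially, this settles every degree where the certified value is \([-\chi_3]_+\). Only the rays \(a=2,\ b\ge14\); \(a=3,\ b\ge8\); \(b=3,\ a\ge17\) need an independent lower bound.

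That lower bound is the step you leave as a plan. Your phrase ``free summand over \(\bC[s,t]\)'' is borrowed from \(7880\); here the excess sits in the kernel, or rank deficit, of the strip presentation. It can be supplied geometrically.

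For \(a=2\), the fibre of \(\phi_3\) at \(y\in\bP^1_y\) kills exactly those \(\xi\in W_3^\vee\) which vanish on \(A_0,A_1\) and on \((B_0(y,\cdot),B_1(y,\cdot))_3\). The latter is a subspace of codimension \(4\) in \(W_3\), so the fibrewise kernel has dimension at least \(2\) for every member. For a general member, the kernel of \(\phi_3\) on this strip is therefore the section module of the rank-two bundle \(\ker\bigl((\operatorname{Sym}^3Q)^\vee\to\cO^2\bigr)\otimes\cO(-2)\) on \(\bP^1_y\). Here \(Q\) is the quotient of \(W_1\otimes\cO\) by the span of \(B_0(y,\cdot),B_1(y,\cdot)\). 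This bundle has degree \(-28\), so Riemann--Roch gives \(\dim\ker(\phi_3)_{(2,b)}\ge2(b-13)\). Hence \(q_3(2,b)\ge(2b+38)+2(b-13)=4b+12\), which is exactly the \(2s^{12}\) term.

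For \(a=3\) the excess is a drop of the generic rank from \(33\) to \(32\). The fibre \(S_y=\{F=G(y)=0\}\subset\bP^1_x\times\bP^3_z\) is a divisor in \(\operatorname{Bl}_{\ell_y}\bP^3\), where \(\ell_y=\{B_0(y,\cdot)=B_1(y,\cdot)=0\}\). On it \(\cO(-3,3)|_{S_y}\simeq\cO(3E)\), with \(E\) the exceptional divisor. Since \(\coker\phi_3(y)\) is dual to \(H^0(S_y,\cO(-3,3))\), it is nonzero for every \(y\). Obtaining \(b+5\) still requires the degree of the resulting line bundle on \(\bP^1_y\), and \(b=3\) needs analogous input over \(\bP^1_x\). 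Without an argument of this kind, these three rays are only bounded from above.
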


The proof constructs the
bigraded active map in the \(m=3\) ambient cohomology row, computes the relevant
finite-field Hilbert numerators on the boundary strips, and provides the
rank certificates which lift the statement to a nonempty characteristic-zero
open set; for details please see Appendix~\ref{app:7707-technical}.

\subsubsection*{\texorpdfstring{\(k=(0,-b,c)\) and \(k=(0,b,-c)\)}{k=(0,-b,c) and k=(0,b,-c)}}
We next turn on the \(z\)-degree while keeping the first entry of \(k\)
equal to zero.  In this boundary range the spectral sequence has only one
active rank problem: a moving-resultant map in the \(z\)-variables.  Thus
the formula is controlled by a one-variable Hilbert series in \(c\), with
\(b\) restricted to the finite certified box \(2\le b\le10\).

\begin{theorem}[CICY \(7707\), \(k=(0,-b,c)\) in the finite \(b\)-box]
\label{thm:7707-range-0bc-finite-b}
Let \(X=X_{7707}\) be a general smooth member of CICY \(7707\).  Consider
\[
  k=(0,-b,c),\qquad 2\le b\le10,
  \qquad c\ge4.
\]
For a numerator \(N(t)\), write
\[
  \HF_N(c)=[t^c]\frac{N(t)}{(1-t)^4},
\]
and let \(P_3(n)=\binom{n+3}{3}\) for \(n\ge0\), and \(P_3(n)=0\)
for \(n<0\).  The cokernel Hilbert series of the moving-resultant map is
generated by
\[
  R_2=t^6+3t^8-2t^9,
  \qquad
  R_3=2t^{10}+4t^{12}-4t^{13},
\]
\[
  R_b=t^6R_{b-2}
  +t^{4b}\bigl((b+1)-(2b-2)t+(b-3)t^2\bigr)
  \qquad(4\le b\le10),
\]
\[
  N_b(t)=(b-1)-(b+1)t^4+R_b(t).
\]
Set
\[
  \gamma_b(c)=\HF_{N_b}(c),
  \qquad
  \kappa_b(c)=\gamma_b(c)+(b+1)P_3(c-4)-(b-1)P_3(c).
\]

Then
\[
  h^\bullet(X,\cO_X(0,-b,c))=(\kappa_b(c),\gamma_b(c),0,0),
\]
and Serre duality gives the companion formula
\[
  h^\bullet(X,\cO_X(0,b,-c))=(0,0,\gamma_b(c),\kappa_b(c)).
\]
\end{theorem}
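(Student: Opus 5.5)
First I set up the Koszul spectral sequence for $k=(0,-b,c)$ with $b\ge2$, $c\ge4$. The four shifts $0$, $D_1=(1,0,3)$, $D_2=(1,2,1)$, $D_1+D_2=(2,2,4)$ give ambient twists $(0,-b,c)$, $(-1,-b,c-3)$, $(-1,-b-2,c-1)$, $(-2,-b-2,c-4)$. The middle column has $\bP^1_x$-degree $-1$ and is acyclic. The $p=0$ term contributes $H^0(\bP^1_x,\cO)\otimes H^1(\bP^1_y,\cO(-b))\otimes S_c$, with $S=\bC[z_0,\ldots,z_3]$, in total degree $q=1$. The $p=2$ term contributes $H^1(\bP^1_x,\cO(-2))\otimes H^1(\bP^1_y,\cO(-b-2))\otimes S_{c-4}$ in degree $q=2$. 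As in the $d_2$ example of Subsection~\ref{subsec:explicit-d2-example} and the two-pencil boundary of Subsection~\ref{subsec:sec3-larfors-five-p1p4}, the only possible differential is $d_2:E_2^{-2,2}\to E_2^{0,1}$. Its \v{C}ech computation gives multiplication by $\Delta=A_0B_1-A_1B_0\in H^0(\cO(2,4))$, now acting on the $y$-variables through $H^1$ classes.

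After Serre duality on $\bP^1_y$, $H^1(\bP^1_y,\cO(-b))\simeq H^0(\cO(b-2))^\vee$ has dimension $b-1$, and $H^1(\bP^1_y,\cO(-b-2))\simeq H^0(\cO(b))^\vee$ has dimension $b+1$. The map is then the degree-$c$ piece of an $S$-module map
\[
  \Psi_b:S(-4)^{b+1}\longrightarrow S^{b-1},
\]
a banded Sylvester-type matrix whose rows are shifted copies of the three $y$-coefficients $(\Delta_0,\Delta_1,\Delta_2)$ of $\Delta$, each a quartic in $z$. This is the ``moving-resultant map''. The spectral sequence gives $h^0=\dim\ker(\Psi_b)_c$ and $h^1=\dim\coker(\Psi_b)_c=\HF(\coker\Psi_b,c)$, with $h^2=h^3=0$. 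By the kernel identity \eqref{eq:sec2-kernel-from-cokernel}, $\kappa_b(c)=\gamma_b(c)+(b+1)P_3(c-4)-(b-1)P_3(c)$, provided $\gamma_b=\HF(\coker\Psi_b,\cdot)$. Serre duality with $K_X$ trivial then gives the companion formula for $(0,b,-c)$.

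The whole statement therefore reduces to the identity $\HS(\coker\Psi_b;t)=N_b(t)/(1-t)^4$ for $2\le b\le 10$. The terms $(b-1)-(b+1)t^4$ are the free part, and $R_b$ encodes the syzygies. Naively one would use Buchsbaum--Rim as for the $\bP^1\times\bP^n$ hypersurfaces, with Hilbert series \eqref{eq:sec3-p1pn-Mr} for $d=2$, $e=4$, $n=3$. That requires $\Delta_0,\Delta_1,\Delta_2$ to be a regular sequence in $S$. They are not general quartics, however: $\Delta$ has the determinantal structure $A_0B_1-A_1B_0$, with $B_i$ linear in $z$ and quadratic in $y$, and this is exactly why the $t^6$ and $t^{4b}$ corrections in $R_b$ appear.

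My plan is to certify $R_b$ in two steps. First, for each $b\le 10$, I compute a Gr\"obner basis of $\operatorname{im}\Psi_b$ at one integer specialization over $\bF_{32003}$. This gives the initial module and hence the numerator, from which I check the recursion $R_b=t^6R_{b-2}+t^{4b}(\cdots)$. Second, I lift the result to characteristic zero on an open set. Since the module is generated in degree $0$, its Hilbert function is determined by ranks in finitely many degrees up to the regularity read off from the numerator (degree at most $4b+2$); beyond that it is governed by the numerator. Rank is lower semicontinuous, so a nonzero integer minor modulo $p$ in each critical degree gives the lower bound on rank generically. The upper bound comes from the explicit syzygies of $\coker\Psi_b$. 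These are induced by the relation $\Delta$ has with $F$ (the $t^6$ shift relating $b$ to $b-2$) together with the top $t^{4b}$ Sylvester-type syzygy, and I exhibit them as universal polynomial identities in the coefficients of $A_i,B_i$. I expect this matching of upper and lower bounds in the critical degrees to be the main obstacle, particularly proving that the $t^6R_{b-2}$ syzygies exist for all coefficients rather than only at the specialization. If a universal construction fails, I fall back on the paper's certificate format: exhibit a full kernel basis at the specialization, and use semicontinuity in both directions at the degrees where the Hilbert function is exactly predicted.
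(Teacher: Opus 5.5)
Your reduction is the same as the paper's treatment of \(k=(0,-b,c)\) in the \(7707\) appendix. The \(\bP^1_x\)-acyclic middle column leaves only \(d_2:E_2^{-2,2}\to E_2^{0,1}\), and this is the moving-resultant Toeplitz map \(S(-4)^{b+1}\to S^{b-1}\) built from the \(y\)-coefficients of \(\Delta=A_0B_1-A_1B_0\). Then \(h^0=\dim\ker\), \(h^1=\dim\coker\), \(\kappa_b\) follows from the kernel identity, and Serre duality gives the companion range. Your certification plan is also the paper's: a standard basis at an integer seed modulo a prime, a check of the recursion, and a lift to an open set.

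Your explanation of why Buchsbaum--Rim fails is right and can be made precise. Write \(B_i=\sum_j B_{ij}(z)\,y_0^{2-j}y_1^j\). Then \(\Delta_j=A_0B_{1j}-A_1B_{0j}\) lies in the height-two ideal \((A_0,A_1)\). The \(t^6\) term, however, is not a relation with \(F\). For \(b=2\) it is the universal kernel element \(B_1\times B_0\), the \(2\times2\) minors of the \(3\times2\) linear matrix \((B_{1j},B_{0j})\), which pairs to zero with \(A_0B_1-A_1B_0\).

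For \(b=3\), let \(v^{(1)},\dots,v^{(4)}\) be the rows of the adjugate of the \(4\times4\) Sylvester matrix with columns \(B_1,wB_1,B_0,wB_0\). Then \(A_1v^{(1)}+A_0v^{(3)}\) and \(A_1v^{(2)}+A_0v^{(4)}\) are kernel elements of degree \(10\), matching the \(2t^{10}\) in \(R_3\). The \((b+1)t^{4b}\) terms are the Buchsbaum--Rim maximal-minor syzygies, which are kernel elements for every choice of coefficients.

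The gap is in the lift, which you flag but do not close. A seed computation, with or without pivot minors, gives only \(\rank_{\rm gen}\ge\rank_{\rm seed}\), i.e.\ generic \(h^1\le\gamma_b\). This already holds in every degree, because the \(\bF_p\)-rank of the reduced seed matrix never exceeds the generic rank, so this direction needs no critical-degree minors.

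Your fallback cannot give the reverse inequality. Semicontinuity runs in one direction only, and a kernel basis exhibited at the seed bounds the seed, not the general member. Your step ``finitely many degrees up to the regularity read off from the numerator'' is circular for the same reason: that regularity belongs to the seed module, not the generic one.

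What works is a module-level version of your universal-syzygy step. Let \(K^u\subset\ker\Psi_b\) be generated by kernel elements given by universal polynomial formulas in the coefficients of \(A_i,B_i\). Verify by a standard basis at the seed that their specializations generate the whole seed kernel, \(\HS(K^u_{\rm seed})=\HS(\ker\Psi_{b,{\rm seed}})\). Since \(\dim K^u_c\) is a matrix rank, for general coefficients and every \(c\) you get
\(\dim(\ker\Psi_b)_c\ge\dim K^u_c\ge\dim(K^u_{\rm seed})_c=\dim(\ker\Psi_{b,{\rm seed}})_c\ge\dim(\ker\Psi_b)_c\).
This gives \(\gamma_b\) in all degrees at once.

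Until such generators are produced for every \(b\le10\), including those behind \(t^6R_{b-2}\), your argument proves only \(h^1\le\gamma_b\) and \(h^0\le\kappa_b\). The paper's appendix does not treat the reverse inequality separately either: it asserts that the recorded leading terms, reductions and pivot minors define open conditions. So on this point a completed version of your plan would be more rigorous than the written proof.
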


Here \(\gamma_b(c)\) is the cokernel dimension of the moving-resultant map,
and \(\kappa_b(c)\) is the corresponding kernel dimension, obtained from
Euler characteristic.  The finite-field certificate proves the Hilbert
series for \(2\le b\le10\) uniformly in all \(c\ge4\); for details please see
Appendix~\ref{app:7707-technical}.

\subsubsection*{\texorpdfstring{\(k=(-a,-b,c)\) and \(k=(a,b,-c)\)}{k=(-a,-b,c) and k=(a,b,-c)}}
Finally we allow both \(\bP^1\)-entries to be negative.  The active row now
has two endpoints.  The right endpoint is still governed by the same
resultant tail as in the \(b\)-box, shifted by the \(x\)-degree.  The left
endpoint is a separate finite-field Hilbert-series computation, certified
for the \(81\) pairs \(2\le a,b\le10\).

\begin{theorem}[CICY \(7707\), \(k=(-a,-b,c)\) in the finite \((a,b)\)-box]
\label{thm:7707-range-abc-finite-box}
Let \(X=X_{7707}\) be a general smooth member of CICY \(7707\).  Consider
\[
  k=(-a,-b,c),
  \qquad
  2\le a,b\le10,
  \qquad
  c\ge4.
\]
For a numerator \(N(t)\), write
\[
  \HF_N(c)=[t^c]\frac{N(t)}{(1-t)^4},
\]
and let \(P_3(n)=\binom{n+3}{3}\) for \(n\ge0\), and \(P_3(n)=0\)
for \(n<0\).  Let \(R_b(t)\) be the resultant tail from the finite
\(b\)-box above.  The right endpoint contributes
\[
  N^{(0)}_{a,b}(t)=t^{3a}R_b(t).
\]
We write
\[
  q_{\mathrm R}(a,b,c)=\HF_{N^{(0)}_{a,b}}(c)
\]
for the corresponding right-endpoint dimension.  The finite-field
certificate contains a left-endpoint numerator \(Q_{a,b}(t)\).  In the
theorem we use only its Hilbert function; for \(c\ge4\) this is
\[
q_{\mathrm L}(a,b,c)=
\begin{cases}
\bigl[(12a-34)b-50a+36\bigr]_+,&c=4,\\
\bigl[(15a-52)b-77a+60\bigr]_+,&c=5,\\
54,&c=6,\ (a,b)=(10,10),\\
0,&\text{otherwise},
\end{cases}
\]
where \([r]_+=\max\{r,0\}\).

Put
\[
\chi_{a,b}(c)=
(a-1)(b-1)P_3(c)
-a(b-1)P_3(c-3)
-a(b+1)P_3(c-1)
+(a+1)(b+1)P_3(c-4).
\]
Then
\[
  h^\bullet\!\left(X,\cO_X(-a,-b,c)\right)
  =
  \left(q_{\mathrm R}(a,b,c),
  q_{\mathrm R}(a,b,c)+q_{\mathrm L}(a,b,c)-\chi_{a,b}(c),
  q_{\mathrm L}(a,b,c),
  0\right),
\]
and Serre duality gives, for \(k=(a,b,-c)\),
\[
  h^\bullet\!\left(X,\cO_X(a,b,-c)\right)
  =
  \left(0,
  q_{\mathrm L}(a,b,c),
  q_{\mathrm R}(a,b,c)+q_{\mathrm L}(a,b,c)-\chi_{a,b}(c),
  q_{\mathrm R}(a,b,c)\right).
\]
\end{theorem}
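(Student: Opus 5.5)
To prove the theorem for \(k=(-a,-b,c)\), I would first write down the \(E_1\)-page of the Koszul spectral sequence. On \(A=\bP^1_x\times\bP^1_y\times\bP^3_z\) the four Koszul terms have degrees
\[
  (-a,-b,c),\quad (-a-1,-b,c-3),\quad (-a-1,-b-2,c-1),\quad (-a-2,-b-2,c-4).
\]
For \(a,b\ge2\) and \(c\ge4\), every \(x\)- and \(y\)-degree is at most \(-2\). Hence each term contributes only in the row \(H^1(\bP^1_x)\otimes H^1(\bP^1_y)\otimes H^0(\bP^3_z)\), that is, in ambient degree \(q=2\). The \(E_1\)-page is therefore a single row, so the spectral sequence degenerates at \(E_2\) and all higher differentials vanish.

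After Serre duality on both \(\bP^1\)-factors, this row becomes a three-step complex of free \(S=\bC[z_0,\dots,z_3]\)-modules taken in degree \(c\):
\[
  C_2\to C_1\to C_0,\qquad
  C_2=S(-4)^{(a+1)(b+1)},\quad
  C_1=S(-3)^{a(b-1)}\oplus S(-1)^{a(b+1)},\quad
  C_0=S^{(a-1)(b-1)}.
\]
The ranks come from \(h^0(\bP^1,\cO(n-2))=n-1\). The alternating sum of the ranks in degree \(c\) is exactly \(\chi_{a,b}(c)\), and this fixes the middle entry.

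The three cohomology groups of this row are \(h^2=\) the left-endpoint kernel (the degree-\(c\) kernel of \(C_2\to C_1\)), \(h^1=\) the middle homology, and \(h^0=\) the right-endpoint cokernel \(\coker(C_1\to C_0)_c\), after matching the orientation with the Serre-dual convention used in Section~\ref{subsec:sec3-larfors-p2p3}. So I only need two graded modules over \(S\).

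\textbf{Right endpoint.} Restrict the map \(C_1\to C_0\) to the \(y\)-direction. For fixed \(x\)-degree it is the moving-resultant map of Theorem~\ref{thm:7707-range-0bc-finite-b}, tensored with the \(x\)-band of \(F,G\). I would show that its cokernel admits a filtration, in the style of the two-linear band module in Appendix~\ref{app:two-linear-band-module}, whose graded pieces are the \(b\)-box resultant cokernel shifted by \(z\)-degree \(3a\). This shift is plausibly explained by the \(A_i\) of degree three being eliminated \(a\) times. The result would be \(\HS=t^{3a}R_b(t)/(1-t)^4\), giving \(q_{\mathrm R}\).

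Concretely, I would compare \(\HF\) of the cokernel with \(\HF\) of \(t^{3a}R_b/(1-t)^4\) in the finitely many degrees up to the regularity bound. I would then certify the equality by a rank certificate at one integer specialization, which lifts to a nonempty Zariski-open set, using \(\HF(M,c)=\HF(\operatorname{in}(M),c)\) as in Section~\ref{sec:koszul-to-hilbert}.

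\textbf{Left endpoint.} Here \(\ker(C_2\to C_1)\) is a second syzygy-type module. I would compute its Hilbert numerator \(Q_{a,b}\) for each of the \(81\) pairs by a Gröbner/initial-module computation over \(\bF_p\), and lift it by minors. Since the kernel of a map of free modules has depth at least \(2\), its Hilbert function is determined by finitely many degrees together with the degree bound coming from the regularity of \(Q_{a,b}\). The remaining step is to read off \(q_{\mathrm L}\) for \(c\ge4\). It should vanish for \(c\ge7\), because the map becomes injective once \(c\) is large relative to the generators. The linear expressions at \(c=4,5\) are then just \([\dim C_2-\dim C_1\text{-image}]_+\) maximal-rank values, and \(c=6\) contributes only the exceptional value \(54\) at \((10,10)\).

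\textbf{Main obstacle.} The hard step is proving that the left-endpoint kernel is zero for all \(c\ge7\) and that the map has maximal rank in degrees \(4,5\) uniformly over the box. My first attempt would be a single critical-degree injectivity check, as in the bicubic argument: injectivity in one degree propagates downward by multiplying by \(S\), and freeness plus a Castelnuovo–Mumford bound handles the higher degrees. The isolated defect at \((10,10)\) shows that naive maximal rank fails there, so that pair would need its own exact kernel replay.
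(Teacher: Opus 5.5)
Your skeleton is the paper's: a single ambient row \(H^1(\bP^1_x)\otimes H^1(\bP^1_y)\otimes H^0(\bP^3_z)\), degeneration at \(E_2\), two endpoint modules, and the middle entry fixed by \(\chi_{a,b}(c)\). However, in the complex you actually wrote down, the two endpoints are attached to the wrong modules, and both endpoint arguments then aim at false statements. Your row \(C_2=S(-4)^{(a+1)(b+1)}\to C_1\to C_0=S^{(a-1)(b-1)}\) is the undualized \(E_1^{-2,2}\to E_1^{-1,2}\to E_1^{0,2}\). Since \(E_1^{-p,2}\) contributes to \(H^{2-p}(X)\),
\[
  h^0=q_{\mathrm R}=\dim\ker(C_2\to C_1)_c,\qquad
  h^2=q_{\mathrm L}=\dim\coker(C_1\to C_0)_c,
\]
which is the reverse of your assignment. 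In the paper's dualized row these are \(\coker d_1\) and \(\ker d_0\), the latter computed as \(\coker d_0^\vee\).

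Your versions cannot hold. The module \(\coker(C_1\to C_0)\) is a quotient of \(S^{(a-1)(b-1)}\) generated in degree \(0\), so once it vanishes in some degree it vanishes in all higher ones. It therefore cannot have Hilbert series \(t^{3a}R_b(t)/(1-t)^4\), which is zero at \(c=4\) but grows cubically because \(R_b(1)=2\). Conversely, \(\ker(C_2\to C_1)\) is a submodule of a free module over the domain \(S\). A nonzero element in degree \(4\) stays nonzero after multiplication by linear forms in every higher degree. So this kernel cannot have dimension \(q_{\mathrm L}(10,10,4)=396\) at \(c=4\) and vanish for \(c\ge7\); indeed \(396\) already exceeds \(\dim C_2(4)=121\). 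In fact it is the rank-two module (the alternating rank sum of the row is \(2\)) with Hilbert series \(t^{3a}R_b(t)/(1-t)^4\). Your ``main obstacle'' plan, killing this kernel in high degree by downward propagation of injectivity plus a regularity bound, therefore targets an impossible statement.

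With the roles swapped back, the mechanisms are the ones the paper uses. The right endpoint is the kernel at the \(K_2\) end. The paper identifies it, via the Sylvester identities \(B_1F-A_1G=x_0\Delta\) and \(-B_0F+A_0G=x_1\Delta\), with the determinant module of the \(b\)-box shifted by \(z\)-degree \(3a\). Note that \(R_b(t)/(1-t)^4\) is the Hilbert series of the \emph{kernel} of the moving-resultant map (its cokernel numerator is \(N_b=(b-1)-(b+1)t^4+R_b\)), not of a resultant cokernel as you wrote.

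The left endpoint is a cokernel generated in degree \(0\), so the relevant propagation is upward surjectivity, as in Lemma~\ref{lem:critical-two-ranks-all-degrees}. Its values at \(c=4,5,6\) are exactly \([\chi_{a,b}(c)]_+\). In these degrees \(C_2\to C_1\) is injective, since \(q_{\mathrm R}(a,b,c)=0\) for \(c<3a+3b\), and the induced map \(C_1/\operatorname{im}C_2\to C_0\) has maximal rank. In particular \(54=\chi_{10,10}(6)\), the only positive value of \(\chi_{a,b}(6)\) on the box. So \((10,10)\) is not a failure of maximal rank and needs no separate kernel replay. The paper certifies \(Q_{a,b}\) for all \(81\) pairs by finite-field standard bases of \(\coker d_0^\vee\), lifted to characteristic zero through nonzero leading coefficients and minors.
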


Thus \(q_{\mathrm R}\) gives the right-endpoint cokernel,
\(q_{\mathrm L}\) gives the left-endpoint kernel after dualizing, and
the middle entry is forced by Euler characteristic.  The finite-field
standard-basis certificates prove the two endpoint Hilbert series on the
stated finite box; see
Appendix~\ref{app:7707-technical}.

The formulae above cover the ranges with nontrivial active maps.  The remaining
boundary cases, the possible \(d_2\)-maps, and the compatibility of the
finite-field verifications on a common open set introduce no new formula
mechanism; we collect them in Appendix~\ref{app:7707-technical}.

\subsection{\texorpdfstring{New formulae: a \(7863\) case study with codimension \(3\)}{New formulae: a 7863 case study with codimension 3}}
\label{subsec:new-7863}

We next give a codimension-three example.   Let
\[
X_{7863}=
\left[\begin{array}{c|ccc}
\bP^3_x&2&1&1\\
\bP^3_y&2&1&1
\end{array}\right],
\]
and write the three defining equations as
\[
Q\in H^0(\bP^3_x\times\bP^3_y,\cO(2,2)),\qquad
L_1,L_2\in H^0(\bP^3_x\times\bP^3_y,\cO(1,1)).
\]
Let
\[
  A=\bC[x_0,\ldots,x_3],\qquad S=\bC[y_0,\ldots,y_3],
\]
and let \(A_r,S_r\) denote the homogeneous degree-\(r\) pieces.
For a general member \(Q,L_1,L_2\) form a regular sequence.  The Koszul
shifts are
\[
  0;\qquad
  (2,2),(1,1),(1,1);\qquad
  (3,3),(3,3),(2,2);\qquad
  (4,4).
\]
Set
\[
P(n)=
\begin{cases}
\binom{n+3}{3},& n\geq0,\\
0,& n<0,
\end{cases}
\qquad
T(n)=
\begin{cases}
\binom{-n-1}{3},& n\leq -4,\\
0,& n>-4.
\end{cases}
\]
The pure bottom Hilbert function is the coefficient function
\begin{equation}
B(a,b)=
\sum_{p=0}^3(-1)^p\sum_{s\in \Sigma_p}P(a-s)P(b-s),
\qquad
\Sigma_0=\{0\},\quad \Sigma_1=\{2,1,1\},\quad
\Sigma_2=\{3,3,2\},\quad \Sigma_3=\{4\}.
\label{eq:7863-Bab}
\end{equation}
Here \(\Sigma_p\) is a multiset: the repeated entries represent the two independent
\((1,1)\)-equations \(L_1,L_2\).
Equivalently,
\[
\sum_{a,b}B(a,b)u^av^b
=
\frac{(1-u^2v^2)(1-uv)^2}{(1-u)^4(1-v)^4}.
\]
We use the following intervals to get regions which we study individually
\[
\begin{array}{lll}
\mathcal I_1=\{\le -4\},&
\mathcal I_2=\{-3\},&
\mathcal I_3=\{-2\},\\
\mathcal I_4=\{-1\},&
\mathcal I_5=\{0\},&
\mathcal I_6=\{1\},\\
\mathcal I_7=\{2\},&
\mathcal I_8=\{3\},&
\mathcal I_9=\{\ge4\}.
\end{array}
\]
Cell \(C_{ij}\) means \(a\in\mathcal I_i\) and
\(b\in\mathcal I_j\), and we number cells by
\[
  N(C_{ij})=9(i-1)+j .
\]
The resulting \(9\times9\) regions can be summarized in the following table:
\begin{center}
\begingroup
\setlength{\arraycolsep}{3pt}
\renewcommand{\arraystretch}{0.9}
\(
\begin{array}{c|ccccccccc}
a\backslash b
&\le-4&-3&-2&-1&0&1&2&3&\ge4\\
\hline
\le-4&{\rm PT}&{\rm PT}&{\rm PT}&{\rm PT}&{\rm D}&{\rm B}&{\rm M}_7&{\rm M}_8&{\rm M}_9\\
-3&{\rm PT}&{\rm PT}&{\rm PT}&{\rm PT}&{\rm D}&{\rm D}&{\rm S}_{16}&{\rm M}_{17}&{\rm M}_{18}\\
-2&{\rm PT}&{\rm PT}&{\rm PT}&{\rm PT}&{\rm D}&{\rm D}&{\rm D}&{\rm S}_{26}&{\rm M}_{27}\\
-1&{\rm PT}&{\rm PT}&{\rm PT}&{\rm PT}&{\rm D}&{\rm D}&{\rm D}&{\rm D}&{\rm B}^{\vee}\\
0&{\rm D}&{\rm D}&{\rm D}&{\rm D}&{\rm D}&{\rm D}&{\rm D}&{\rm D}&{\rm D}\\
1&{\rm B}&{\rm D}&{\rm D}&{\rm D}&{\rm D}&{\rm PB}&{\rm PB}&{\rm PB}&{\rm PB}\\
2&{\rm M}_{55}&{\rm S}_{56}&{\rm S}_{57}&{\rm D}&{\rm D}&{\rm PB}&{\rm PB}&{\rm PB}&{\rm PB}\\
3&{\rm M}_{64}&{\rm M}_{65}&{\rm M}_{66}&{\rm D}&{\rm D}&{\rm PB}&{\rm PB}&{\rm PB}&{\rm PB}\\
\ge4&{\rm M}_{73}&{\rm M}_{74}&{\rm M}_{75}&{\rm B}^{\vee}&{\rm D}&{\rm PB}&{\rm PB}&{\rm PB}&{\rm PB}.
\end{array}
\)
\endgroup
\end{center}
Here PB and PT denote the pure bottom and pure top regular-sequence rows,
D denotes a direct ambient-cohomology case with no active rank problem, B and
\({\rm B}^{\vee}\) are the two-linear boundary cells, \(S_N\) are small
finite-dimensional cells with closed vectors, and \(M_N\) are mixed cells
given as explicit rank complexes.  We now discuss these mechanisms in the
same range-by-range style as in the \(7707\) case study; the proof-level
reductions behind the labels are collected in
Appendix~\ref{app:7863-technical}.

\subsubsection*{Direct and finite cells}
The PB, PT, and D entries contain no active rank problem.  On the PB
entries the Koszul row is the ordinary degree piece of the regular sequence
\((Q,L_1,L_2)\), so
\[
  h^\bullet(X,\cO_X(a,b))=(B(a,b),0,0,0).
\]
The PT entries are their Calabi--Yau Serre duals,
\[
  h^\bullet(X,\cO_X(a,b))=(0,0,0,B(-a,-b)).
\]
The D entries are direct Koszul bookkeeping: after the acyclic
summands are removed, each surviving ambient term \(E_1^{-p,q}\) contributes
in total degree \(q-p\).  The isolated finite-dimensional cells closed in
the data are
\[
\begin{array}{c|c|c}
\text{cell} & \text{line bundle} & h^\bullet\\
\hline
16 & (-3,2) & (0,0,8,0)\\
26 & (-2,3) & (0,8,0,0)\\
56 & (2,-3) & (0,0,8,0)\\
57 & (2,-2) & (0,1,1,0).
\end{array}
\]
Cell \(66\) is kept as a finite rank entry in the mixed rank-complex list
below.

\subsubsection*{Two-linear boundary cells}
The four B-type cells are controlled by the same two-linear remaining
calculation.  Let \(Y=\bP^3_x\times\bP^3_y\).  For \(k=(a,b)\), the
twisted Koszul sequence is
\[
\begin{aligned}
0\to&\ \cO_Y(a-4,b-4)\\
\to&\ \cO_Y(a-3,b-3)^{\oplus2}\oplus \cO_Y(a-2,b-2)\\
\to&\ \cO_Y(a-2,b-2)\oplus \cO_Y(a-1,b-1)^{\oplus2}\\
\to&\ \cO_Y(a,b)\to \cO_X(a,b)\to0.
\end{aligned}
\]
In Cell \(6\), where \(a\le -4\) and \(b=1\), Bott's formula shows that all \(Q\)-summands
and all higher Koszul summands vanish.  The only surviving differential is the
\(L_1,L_2\)-row
\[
  H^3(\bP^3_x,\cO(a-1))^{\oplus2}
  \longrightarrow
  H^3(\bP^3_x,\cO(a))\otimes S_1 .
\]
It is induced by multiplication with \(L_1,L_2\).  Putting \(k=-a-4\),
Serre duality gives
\[
  H^3(\bP^3_x,\cO(a))^\vee\simeq A_k,\qquad
  H^3(\bP^3_x,\cO(a-1))^\vee\simeq A_{k+1}.
\]
If
\[
  L_j=\sum_{i=0}^3 \ell_{ji}(x)y_i,\qquad j=1,2,
\]
then the dual endpoint map is
\[
  \mu_k:S_1^\vee\otimes A_k\longrightarrow \bC^2\otimes A_{k+1},
  \qquad k\ge0,
\]
given explicitly by
\[
  \eta\otimes f\longmapsto
  \left(
  \sum_i\eta(y_i)\ell_{1i}f,\,
  \sum_i\eta(y_i)\ell_{2i}f
  \right).
\]
Equivalently, \(\mu_k\) is the global-section map of
\[
  S_1^\vee\otimes\cO_{\bP^3_x}(k)
  \longrightarrow
  \bC^2\otimes\cO_{\bP^3_x}(k+1).
\]
For a general pair \(L_1,L_2\), the maximal minors of this \(2\times4\)
linear matrix have codimension \(3\), so the Buchsbaum--Rim complex is exact:
\[
0\to \cO(-4)^2\to \cO(-3)^4\to \cO(-1)^4
\to \cO^2\to \operatorname{coker}\to0.
\]
Therefore the cokernel has Hilbert series
\[
  \frac{2-4t+4t^3-2t^4}{(1-t)^4},
  \qquad
  h^0(\operatorname{coker}(n))=4\quad(n\ge1).
\]
Taking \(n=k+1\), we get \(\dim\coker\mu_k=4\) for every \(k\ge0\).
In Cell \(6\) this gives
\[
  h^2=\dim\ker(d_1)=\dim\coker\mu_k=4
\]
and Euler bookkeeping gives
\[
  h^3=4T(a)-2T(a-1)+4.
\]
For example, after exchanging the two factors and dualizing, Cell \(36\)
has \(a=-1\), \(b\ge4\), and the same calculation appears with
\(k=b-4\).  Hence
\[
  h^1=\dim\coker\mu_{b-4}=4,
  \qquad
  h^0=\dim\ker\mu_{b-4}
  =4P(b-4)-2P(b-3)+4.
\]
The other B-type cells are obtained by the same calculation after exchanging
the two \(\bP^3\)-factors and applying Serre duality.  With \(P,T\) as above,
the cohomology vectors are
\[
\begin{array}{c|c|c}
\text{cell} & \text{range} & h^\bullet(X,\cO_X(a,b))\\
\hline
6 & a\le-4,\ b=1 &
(0,0,4,\,4T(a)-2T(a-1)+4)\\
36 & a=-1,\ b\ge4 &
(4P(b-4)-2P(b-3)+4,\,4,0,0)\\
46 & a=1,\ b\le-4 &
(0,0,4,\,4T(b)-2T(b-1)+4)\\
76 & a\ge4,\ b=-1 &
(4P(a-4)-2P(a-3)+4,\,4,0,0).
\end{array}
\]

\subsubsection*{Mixed rank-complex cells}
Several mixed cells are still one-parameter.  The following entries reduce
to endpoint Hilbert functions:
\begin{center}
\begin{tabular}{c|c|c|c}
cell(s) & range & parameter & cohomology vector\\
\hline
\(7,55\) &
\(a\le-4,\ b=2;\ a=2,\ b\le-4\) &
\(r=-a-4\) or \(r=-b-4\) &
\((0,0,N_2(r)+\Delta_2(r),N_2(r))\)\\
\(27,75\) &
\(a=-2,\ b\ge4;\ a\ge4,\ b=-2\) &
\(r=b-4\) or \(r=a-4\) &
\((N_2(r),N_2(r)+\Delta_2(r),0,0)\)\\
\(8,64\) &
\(a\le-4,\ b=3;\ a=3,\ b\le-4\) &
\(r=-a-4\) or \(r=-b-4\) &
\((0,0,N_3(r)+\Delta_3(r),N_3(r))\)\\
\(18,74\) &
\(a=-3,\ b\ge4;\ a\ge4,\ b=-3\) &
\(r=b-4\) or \(r=a-4\) &
\((N_3(r),N_3(r)+\Delta_3(r),0,0)\).
\end{tabular}
\end{center}
Here \(N_2\) and \(N_3\) are the endpoint Hilbert functions:
\[
  \sum_{r\ge0}N_2(r)t^r=\frac{10t^8-8t^9}{(1-t)^4},
  \qquad
  \sum_{r\ge0}N_3(r)t^r
  =\frac{20t^{14}-20t^{15}+2t^{17}}{(1-t)^4}.
\]
With the convention for \(P\) fixed above, the symbols \(\Delta_2\) and
\(\Delta_3\) in the table mean
\[
  \Delta_2(r)=8P(r+1)-10P(r),
  \qquad
  \Delta_3(r)=20P(r+1)-20P(r)-2P(r+3).
\]
The letter \(r\) is only the local degree parameter along the displayed
one-dimensional boundary ray.  On the rays \(a\le -4\) it agrees with the
earlier Cell~\(9\) notation \(m=-a-4\); on the other rays it is the analogous
distance from the cohomology boundary, such as \(b-4\), \(a-4\), or \(-b-4\).
The endpoint maps and Hilbert-series computations are given in
Appendix~\ref{app:7863-technical}.  Both \(N_2\) and \(N_3\) have
characteristic-zero replays; for \(N_3\) we use a small-integer rational
specialization whose standard basis has the same initial module as the
finite-field search witness.

The remaining mixed entries are either isolated finite-dimensional cells or
the finite-box Cell \(9\)/Cell \(73\) rows:
\[
\begin{array}{c|c|c}
\text{cell} & \text{range} & h^\bullet(X,\cO_X(a,b))\\
\hline
9 & a\le-4,\ b\ge4 & \text{four-term mixed complex; finite-box theorem for }4\le b\le10\\
17 & (-3,3) & (0,0,0,0)\\
65 & (3,-3) & (0,0,0,0)\\
66 & (3,-2) & (0,8,0,0)\\
73 & a\ge4,\ b\le-4 & \text{Serre dual of Cell }9.
\end{array}
\]

For Cell \(9\), write \(a=-m-4\) with \(m\ge0\).  The mixed ambient cohomology row is the
four-term complex
\begin{equation}
  C_3(m,b)\xrightarrow{d_3}C_2(m,b)
  \xrightarrow{d_2}C_1(m,b)\xrightarrow{d_1}C_0(m,b),
  \label{eq:7863-cell9-complex}
\end{equation}
where
\[
\begin{aligned}
C_3(m,b)&=A_{m+4}\otimes S_{b-4},\\
C_2(m,b)&=(A_{m+3}\otimes S_{b-3})^{\oplus2}
          \oplus A_{m+2}\otimes S_{b-2},\\
C_1(m,b)&=A_{m+2}\otimes S_{b-2}
          \oplus(A_{m+1}\otimes S_{b-1})^{\oplus2},\\
C_0(m,b)&=A_m\otimes S_b .
\end{aligned}
\]
The differential is induced by the Koszul action of \(Q,L_1,L_2\), with
\(Q\) acting by bidegree \((-2,+2)\) and each \(L_i\) by bidegree
\((-1,+1)\) in this dualized row.  The finite-box theorem is as follows.
\begin{theorem}[Finite-box Cell \(9\) for \(7863\)]
\label{thm:7863-cell9-finite-box}
For a general member of \(X_{7863}\), the following statement holds.  For each
integer \(b\) with \(4\le b\le10\), and for every \(m\ge0\), the leftmost map
\(d_3\) in \eqref{eq:7863-cell9-complex} is injective, the middle homology
has no extra correction beyond the endpoint and Euler characteristic, and
the right-endpoint cokernel
\[
  N_b(m):=\dim\coker\bigl(d_1(m,b)\bigr)
\]
has Hilbert numerator
\begin{center}
\small
\begin{tabular}{c|l}
\(b\) & \((1-t)^4\sum_{m\ge0}N_b(m)t^m\)\\
\hline
4 & \(36t^{20}-44t^{21}+6t^{22}+4t^{23}\)\\
5 & \(60t^{26}-86t^{27}+24t^{28}+4t^{29}\)\\
6 & \(94t^{32}-152t^{33}+60t^{34}\)\\
7 & \(140t^{38}-248t^{39}+120t^{40}-10t^{41}\)\\
8 & \(200t^{44}-380t^{45}+210t^{46}-28t^{47}\)\\
9 & \(276t^{50}-554t^{51}+336t^{52}-56t^{53}\)\\
10 & \(370t^{56}-776t^{57}+504t^{58}-96t^{59}\).
\end{tabular}
\end{center}
Let
\begin{equation}
  E(m,b)=
  P(m+4)P(b-4)-2P(m+3)P(b-3)
  +2P(m+1)P(b-1)-P(m)P(b),
  \label{eq:7863-cell9-euler}
\end{equation}
and put \(D(m,b)=E(m,b)+N_b(m)\).  Then
\[
\mathcal H_3(m,b)=0,\qquad
\mathcal H_0(m,b)=N_b(m),
\]
\[
\mathcal H_1(m,b)=[D(m,b)]_+,\qquad
\mathcal H_2(m,b)=[-D(m,b)]_+,
\]
where \(\mathcal H_i(m,b)\) denotes the homology of
\eqref{eq:7863-cell9-complex}.  Since this is the dualized Koszul row,
\begin{equation}
  h^\bullet(X,\cO_X(-m-4,b))
  =
  \left(0,\ [-D(m,b)]_+,\ [D(m,b)]_+,\ N_b(m)\right).
  \label{eq:7863-cell9-cohomology}
\end{equation}
\end{theorem}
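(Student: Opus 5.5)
For each fixed $b$ I would regard the Cell $9$ row \eqref{eq:7863-cell9-complex} as the degree-$m$ piece of one complex of graded modules over $A=\bC[x_0,\ldots,x_3]$, with $b$ fixed. Serre duality on $\bP^3_x$ makes every $C_i(\cdot,b)$ a free $A$-module with finite multiplicity $\dim S_{b-\cdot}$: the $x$-grading of each Koszul term shifts by the $x$-degree of the equation it involves, and the $y$-factors are fixed vector spaces. The differentials are $A$-linear matrices whose entries are the $x$-coefficients of $Q,L_1,L_2$, contracted against the $y$-monomials. With this setup the theorem becomes three separate statements about one graded complex for each $b\in\{4,\ldots,10\}$, and each statement is uniform in $m$.

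\textbf{Step 1: injectivity of $d_3$.} $C_3=A(\cdot)\otimes S_{b-4}$ is free and $A$ is a domain, so the kernel of $d_3$ is a torsion-free submodule. It vanishes as soon as $d_3$ is injective after tensoring with the fraction field, i.e.\ as soon as a single maximal minor of the matrix over $A$ is nonzero. I would certify this with one nonzero minor at a rational specialization. The conclusion then holds on a Zariski-open set and for all $m$ at once.

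\textbf{Step 2: the right endpoint.} $N_b(m)=\HF(\coker d_1,m)$ follows the rank-defect principle. I would compute a Gr\"obner/standard basis of $\operatorname{im}d_1$, first over $\bF_{32003}$ and then replayed over $\bQ$ as the paper does for $N_3$. This gives the initial module and so the Hilbert numerator in the table. Semicontinuity lifts the result: the initial module at a specialization gives an upper bound on $\HF$ on an open set, and matching the general lower bound from the rank count gives equality. The same applies to the rank of $d_1$, so a single characteristic-zero witness fixes $N_b$ for all $m$. Since $\HF$ only depends on the initial module, one computation covers every $m\ge0$.

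\textbf{Step 3: the middle homology.} This is the main obstacle. Injectivity of $d_3$ and the Euler characteristic give
\[
  \mathcal H_1-\mathcal H_2=D(m,b).
\]
The claim is that the complex is as exact as possible in the middle, so that $\mathcal H_1\mathcal H_2=0$. I would prove this by computing the module $\ker d_1/\operatorname{im}d_2$ from a Gr\"obner presentation, together with $\ker d_2/\operatorname{im}d_3$, using syzygy modules over $A$ in the same certified way. The claim reduces to a statement about their Hilbert functions, which can be checked against $[\pm D]_+$ as numerators.

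If full syzygy computations are too heavy, the fallback is critical-rank checks. $D(m,b)$ changes sign at most at finitely many $m$ (it is eventually polynomial of fixed sign by the numerators). For $m$ beyond a regularity bound, the $A$-module structure propagates surjectivity or injectivity, exactly as in the bicubic argument. This leaves finitely many $m$ where a rank computation closes the gap. Finally, reversing the dualized row through the spectral sequence places $\mathcal H_0,\mathcal H_1,\mathcal H_2$ in $h^3,h^2,h^1$, which gives \eqref{eq:7863-cell9-cohomology}. The finitely many open conditions for $b=4,\ldots,10$ intersect in a nonempty open set of the coefficient space.
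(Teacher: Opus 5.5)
Your outline (left injectivity, right-endpoint Hilbert series, Euler characteristic plus a middle check, intersection of open conditions) matches the paper's, but two steps fail as written.

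First, the module structure is reversed. Since $\bP^3_x$ sits in top cohomology, the $x$-factor of each $C_i$ is $H^3(\bP^3,\cO(-n-4))\cong A_n^\vee$, and $Q,L_\nu$ act on it by contraction. This is why $L_\nu$ has bidegree $(-1,+1)$ in \eqref{eq:7863-cell9-complex}. It is also why your setup cannot work: a degree-preserving $A$-linear map $A(4)\to A(3)$ is zero. Only the transposed row $F_0\to F_1\to F_2\to F_3$ consists of free $A$-modules; it has multiplication in $x$, contraction in $y$, and its first map is the paper's $\alpha_b$. So in Step~1, injectivity of $d_3$ in level $m$ means surjectivity of $d_3^\vee\colon F_2\to A(4)\otimes S_{b-4}^\vee$ in degree $m$. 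A nonzero maximal minor over $\operatorname{Frac}(A)$ only shows that $\coker d_3^\vee$ is torsion. Concretely, the test would still pass if every $x$-coefficient of $Q,L_1,L_2$ were divisible by $x_0$, yet then every $u$ killed by contraction with $x_0$ lies in $\ker d_3$. The correct certificate is, for each $b$, one full-column-rank check at $m=0$. It propagates to all $m\ge0$ because contracting a nonzero kernel vector of level $m$ by a suitable $x_j$ gives a nonzero kernel vector of level $m-1$. For the same reason $N_b$ is the Hilbert function of the syzygy module $\ker\alpha_b$, not of a cokernel of a free map into $A\otimes S_b$. Such a quotient is generated in degree $0$, so $N_b(0)=0$ would force $N_b\equiv0$.

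Second, your lifting in Step~2 does not close. A specialization bounds $N_b(m)$ for the general member only from above. The source--target count gives only the trivial lower bound $0$, whereas $N_b(m)$ grows cubically, because $d_1\circ d_2=0$ caps the rank of $d_1$. For $m\ge 6b-4$ the homology occupies positions $0$ and $1$, which have opposite parity, so the Euler characteristic cannot squeeze either: a special member could have $N_b$ and $\mathcal H_1$ both too large by the same amount.

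What is missing is a generic lower bound, and the paper gets it structurally rather than from Gr\"obner data. It shows $N_b(m)=0$ for $m<6b-4$ and $N_b(6b-4+s)=\delta_{s,0}P(b-4)+\dim\coker\bigl(M_{s-2,b-2}\xrightarrow{Q}M_{s,b}\bigr)$ with $M=R/(L_1,L_2)$. On the open set where $(L_1,L_2):Q=(L_1,L_2)$, $Q$ is a nonzerodivisor on $M$. Then $\HS(M;u,v)=(1-uv)^2/\bigl((1-u)^4(1-v)^4\bigr)$ gives the numerators in closed form for all $m$ at once, and the certificates only replay this.

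With $N_b$ and $\mathcal H_3=0$ fixed, the middle claim reduces to $\min(\mathcal H_1,\mathcal H_2)=0$, which the paper checks by direct block ranks. Your bicubic-style fallback does not transfer here, since that propagation argument is for a single map of free modules, not for middle homology.
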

The proof of Theorem~\ref{thm:7863-cell9-finite-box} is given in
Appendix~\ref{app:7863-technical}.

The numerator table has a compact uniform form on this finite box:
\begin{equation}
\begin{aligned}
(1-t)^4\sum_{m\ge0}N_b(m)t^m
=&\ \frac{b(b^2+11)}{3}\,t^{6b-4}
  +(-b^3+3b^2-8b+4)t^{6b-3}\\
&\ +(b-1)(b-2)(b-3)t^{6b-2}
  -\frac{(b-1)(b-2)(b-6)}{3}\,t^{6b-1}.
\end{aligned}
\label{eq:7863-cell9-uniform-numerator}
\end{equation}
Thus the endpoint contribution first appears at
\[
  m=6b-4,\qquad\text{equivalently}\qquad a=-6b .
\]
This is the delayed endpoint wall in the finite box.  The middle
Euler-switch wall is different.  Factoring \eqref{eq:7863-cell9-euler} gives
\begin{equation}
  E(m,b)=
  \frac{(b-m-4)(b^2-8bm-32b+m^2+8m+27)}{3}.
  \label{eq:7863-cell9-euler-factor}
\end{equation}
For \(4\le b\le10\) and \(m\ge0\), the only zero of \(D(m,b)\) is
\[
  m=b-4,\qquad\text{equivalently}\qquad a=-b .
\]
It is negative for \(0\le m<b-4\) and positive for \(m>b-4\); for \(b=4\)
the negative side is empty and the zero lies on the boundary of the ray.
More explicitly, the finite-box formula can be read as
\[
h^\bullet(X,\cO_X(-m-4,b))=
\begin{cases}
(0,-E(m,b),0,0),&0\le m<b-4,\\
(0,0,0,0),&m=b-4,\\
(0,0,E(m,b),0),&b-4<m<6b-4,\\
\left(0,0,\dfrac{4b(14b^2-11)}{3},N_b(m)\right),
  &m\ge6b-4.
\end{cases}
\]
In the last line we used the identity
\[
  E(m,b)+N_b(m)=\frac{4b(14b^2-11)}{3},\qquad m\ge6b-4,
\]
which follows by substituting the numerator
\eqref{eq:7863-cell9-uniform-numerator}.
Hence the finite box contains two structural walls: the Euler wall
\(a=-b\), where the middle cohomology switches from \(h^1\) to \(h^2\), and
the hidden endpoint wall \(a=-6b\), where \(h^3=N_b(m)\) turns on.  Cell \(73\) is obtained from
Theorem~\ref{thm:7863-cell9-finite-box} by Serre duality.

\section{Conclusions and future directions}
\label{sec:conclusions}

The main point of this paper is that chamber formulae for CICY line-bundle cohomology have a concrete algebraic source.  Once the Koszul spectral sequence is written in explicit bases, the non-trivial maps become multiplication maps between ambient cohomology groups, and their rank defects are measured by Hilbert functions of cokernel or kernel modules.  This turns many previously empirical chamber formulae into explicit rank-defect statements: in some chambers the relevant graded modules admit analytic Hilbert-series computations, while in others the ranks are fixed by finite-box certificates.

We have developed this viewpoint through a range of examples.  In the cases where formulae were already available in the literature, the Hilbert-function description recovers many of them from a uniform algebraic mechanism.  In other examples, the same method reveals refined wall and subwall structures which are not visible from ambient cohomology regions or from polynomial fitting on finite data.  These structures are not introduced as additional ansatzes; they arise from the same endpoint modules, cokernel modules, and kernel modules that control the Koszul ranks.

The method also produces new line-bundle cohomology formulae for CICY examples which had not previously been treated by closed-form formula libraries.  Some of these formulae follow from analytic Hilbert-series arguments, while others are obtained as finite-box formulae in regions where a uniform analytic resolution is not yet available.  In both situations, the essential object is the Hilbert function attached to the non-trivial Koszul map, rather than the individual finite-dimensional rank computation for each line bundle.

The computational consequence is direct.  A certified formula replaces large Koszul matrix construction and rank computation by coefficient extraction from a Hilbert series, by a closed expression, or by a replayable finite certificate.  Thus even a partial formula library is useful for scans: every certified chamber removes an infinite family, or a specified finite region, of matrix-rank computations from future line-bundle calculations.  The results of this paper therefore suggest that Hilbert functions are natural objects to study in the construction of line-bundle cohomology formula libraries for CICYs and related Calabi--Yau geometries.

Several natural directions follow from the method of this paper.  First, it
would be useful to build machine-readable CICY line-bundle cohomology
libraries along these lines, even if many entries are finite-box theorems
rather than global chamber formulae.  Second, one should adapt the method to
toric Calabi--Yau threefolds, where Cox rings are intrinsic and line-bundle
cohomology is again governed by graded pieces of Koszul-type maps, but where
the chamber structure, irrelevant ideals, and saturation issues are richer.
Third, it would be interesting to incorporate freely acting quotients:
equivariant Koszul maps should lead to representation-valued versions of the
same Hilbert-function calculation, allowing invariant cohomology downstairs
to be computed from equivariant data upstairs.  Such CICY and toric
Calabi--Yau cohomology libraries, together with their quotient versions,
would provide faster and more transparent input for string model-building
searches based on line-bundle constructions.  Finally, it remains an
interesting question whether each CICY admits line-bundle cohomology
generating functions in the sense of \cite{Constantin2024}.  This direction
has its own mathematical interest, independently of the finite-box formulae
emphasized here.

\appendix

\section{General algebraic tools}

\paragraph{Buchsbaum--Rim exactness.}
\label{app:buchsbaum-rim}

All polynomial rings in this appendix are over \(\bC\).  Whenever a sequence
of forms is assumed to be regular, the corresponding statement applies to a
sufficiently general choice of the defining equations used in the main text.

Let
\[
  \phi:F\longrightarrow G
\]
be a homogeneous map of free modules over a polynomial ring \(R\), with
\[
  \operatorname{rank}F=f,\qquad
  \operatorname{rank}G=g,\qquad f\ge g,
\]
and put \(M=\operatorname{coker}\phi\).  Let \(I_g(\phi)\subset R\) be
the ideal generated by the maximal minors of a matrix for \(\phi\).
The closed subscheme defined by \(I_g(\phi)\) is the maximal-rank
degeneracy locus of \(\phi\); set-theoretically, it is the locus where
\(\phi\) fails to have rank \(g\).  Its expected codimension is \(f-g+1\).

Write \(p=f-g\).  The Buchsbaum--Rim complex resolving \(M\), when it is
exact, has the form
\begin{equation}
0\to C_{p+1}\to C_p\to\cdots\to C_2
\to F\stackrel{\phi}{\longrightarrow}G\to M\to0,
\label{eq:app-br-complex}
\end{equation}
with
\begin{equation}
C_{i+1}
=
\bigwedge^{g+i}F
\otimes
\operatorname{Sym}^{i-1}(G^\vee)
\otimes
\bigwedge^g G^\vee,
\qquad 1\le i\le p.
\label{eq:app-br-terms}
\end{equation}
We do not need explicit formulas for the differentials; for Hilbert-series
computations the free modules in \eqref{eq:app-br-terms} suffice.

The Buchsbaum--Rim exactness criterion
\cite{BuchsbaumRim1964,Eisenbud1995} states:
\begin{quote}
\emph{If \(\operatorname{grade} I_g(\phi)\ge f-g+1\), then the
Buchsbaum--Rim complex is exact and gives a free resolution of
\(\operatorname{coker}\phi\).}
\end{quote}
Since \(R\) is a polynomial ring, this is equivalent in our applications to
showing that \(I_g(\phi)\) has height \(f-g+1\).  By the determinantal height
bound, this is the expected and maximal possible height.  In the applications
below we verify this height condition by identifying the degeneracy locus, up
to radical, with the zero locus of an explicit regular sequence.
Once this condition is verified, the Hilbert numerator of
\(\operatorname{coker}\phi\) is obtained directly from the alternating
sum of the free modules in the Buchsbaum--Rim resolution.

We now apply this criterion to the Toeplitz-type maps that arise in our
Koszul computations.  Assume \(r\ge2\) and \(d\ge1\).  Let
\(S=\bC[y_0,\dots,y_n]\) and consider the map
\[
  \Phi_r:S(-e)^{r+d-1}\longrightarrow S^{r-1},
\]
where the matrix has the Toeplitz form determined by coefficients
\(f_0,\dots,f_d\in S_e\).  The target rank is \(g=r-1\), the source rank is
\(f=r+d-1\); hence
\[
  f-g+1=d+1.
\]
We must therefore check that the locus where \(\Phi_r\) fails to have rank
\(r-1\) has codimension \(d+1\).

Evaluate the coefficients \(f_0,\dots,f_d\) at a point, obtaining
numbers \(a_0,\dots,a_d\).  The rows of the resulting numerical matrix
are the coefficient vectors of
\[
  F(z),\ zF(z),\ \dots,\ z^{r-2}F(z),
  \qquad
  F(z)=a_0+a_1z+\cdots+a_dz^d .
\]
If these rows had a linear relation, then for some scalars
\(c_0,\dots,c_{r-2}\) we would have
\[
  \bigl(c_0+c_1z+\cdots+c_{r-2}z^{r-2}\bigr)F(z)=0 .
\]
When \(F\neq0\), this forces
\(c_0+\cdots+c_{r-2}z^{r-2}=0\), hence all \(c_i=0\).  Thus the rows are
linearly independent whenever \(F\neq0\).  Equivalently, the rank drops at
a point if and only if
\[
  f_0=f_1=\cdots=f_d=0.
\]
Hence
\[
  \sqrt{I_{r-1}(\Phi_r)}
  =
  \sqrt{(f_0,\ldots,f_d)} .
\]
If \(f_0,\dots,f_d\) form a regular sequence, this ideal has height
\(d+1=f-g+1\).  Therefore the Buchsbaum--Rim criterion applies.

In the Toeplitz case, the exact complex is
\begin{equation}
0\to B_{d+1}\to B_d\to\cdots\to B_2
\to S(-e)^{r+d-1}\stackrel{\Phi_r}{\longrightarrow}
S^{r-1}\to M_r\to0,
\label{eq:app-toeplitz-br-complex}
\end{equation}
where, for \(1\le i\le d\),
\begin{equation}
B_{i+1}
=
\bigwedge^{r-1+i}S(-e)^{r+d-1}
\otimes
\operatorname{Sym}^{i-1}(S^{r-1})^\vee
\otimes
\bigwedge^{r-1}(S^{r-1})^\vee .
\label{eq:app-toeplitz-br-terms}
\end{equation}
Equivalently,
\[
  B_{i+1}\cong
  S\!\left(-e(r-1+i)\right)^{
  \binom{r+d-1}{r-1+i}\binom{r+i-3}{i-1}} .
\]
The alternating sum of these free modules is the Hilbert numerator used in
the formulae above.

\paragraph{A Hilbert-series calculation for two linear rows.}
\label{app:two-linear-band-module}

We now prove the Hilbert-series formula used for the two
linear-in-\(\bP^1\) active rows in
\eqref{eq:ls-linear-HS}.  Let \(R\) be a polynomial ring and let
\[
  A_0,A_1\in R_a,\qquad B_0,B_1\in R_b
\]
be a homogeneous regular sequence.  For \(\ell\ge1\), define
\[
  \psi_\ell:
  R(-a)^\ell\oplus R(-b)^\ell\longrightarrow R^{\ell-1}
\]
by the matrix
\[
  \psi_\ell=
  \begin{pmatrix}
    A_0&A_1&0&\cdots&0&B_0&B_1&0&\cdots&0\\
    0&A_0&A_1&\cdots&0&0&B_0&B_1&\cdots&0\\
    \vdots&&\ddots&\ddots&\vdots&\vdots&&\ddots&\ddots&\vdots\\
    0&\cdots&0&A_0&A_1&0&\cdots&0&B_0&B_1
  \end{pmatrix},
\]
with \(\ell-1\) rows.  Put
\[
  M_\ell(a,b)=\coker\psi_\ell .
\]
For \(\ell=1\), the target is zero and \(M_1(a,b)=0\); the formula below
has an empty sum and gives the same result.

\begin{lemma}
\label{lem:two-linear-band-HS}
With the notation above,
\[
  \HS(M_\ell(a,b);t)
  =
  \frac{(1-t^a)^2(1-t^b)^2}{(1-t)^{\dim R}}
  \sum_{\substack{i,j\ge0\\ i+j\le\ell-2}}
  (\ell-1-i-j)t^{ai+bj}.
\]
For the cases in the main text, \(\dim R=5\).
\end{lemma}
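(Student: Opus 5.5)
The plan is induction on \(\ell\), peeling off one row of \(\psi_\ell\) at a time. Write \(I=(A_0,A_1,B_0,B_1)\) and \(Q=R/I\), so that \(\HS(Q;t)=(1-t^a)^2(1-t^b)^2/(1-t)^{\dim R}\). The case \(\ell=1\) is trivial. For the inductive step, let \(\pi:R^{\ell-1}\to R^{\ell-2}\) forget the first coordinate. Deleting the first row of \(\psi_\ell\) leaves two columns (the \(A_0\)- and \(B_0\)-columns of the first row) that become zero. The remaining columns form exactly \(\psi_{\ell-1}\). Hence \(\pi(\operatorname{im}\psi_\ell)=\operatorname{im}\psi_{\ell-1}\), and \(\pi\) induces a surjection \(M_\ell(a,b)\to M_{\ell-1}(a,b)\) with kernel
\[
  K_\ell=(Re_1+\operatorname{im}\psi_\ell)/\operatorname{im}\psi_\ell\cong R/J_\ell,
  \qquad
  J_\ell=\{c\in R:\ ce_1\in\operatorname{im}\psi_\ell\}.
\]
The triangular weights satisfy
\[
  \sum_{i+j\le\ell-2}(\ell-1-i-j)t^{ai+bj}-\sum_{i+j\le\ell-3}(\ell-2-i-j)t^{ai+bj}=\sum_{i+j\le\ell-2}t^{ai+bj},
\]
so by additivity and induction the lemma reduces to the single claim
\begin{equation*}
  \HS(R/J_\ell;t)=\HS(Q;t)\sum_{\substack{i,j\ge0\\ i+j\le\ell-2}}t^{ai+bj}.
\end{equation*}

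To identify \(J_\ell\), I would use the polynomial model from the banded-Sylvester discussion. Columns of \(\psi_\ell\) correspond to \(u(z)=\sum u_iz^i\) and \(v(z)=\sum v_iz^i\) of degree \(\le\ell-1\), and row \(k\) reads off the coefficient of \(z^{k+1}\) in \(u(z)(A_0z+A_1)+v(z)(B_0z+B_1)\), after reindexing. Up to this reindexing, \(c\in J_\ell\) means that the product has prescribed vanishing middle coefficients except one equal to \(c\). Eliminating successively, this shows that \(J_\ell\) contains \(I\). Moreover, \(J_\ell/I\) should be filtered by \(Q\)-modules generated by the products
\[
  A_0^iB_0^j\cdot(\text{cofactors of }A_1,B_1)
\]
arising from dividing by \(A_0z+A_1\) and \(B_0z+B_1\) "modulo \(z^{\ell-1}\)". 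I expect the outcome to be a chain
\[
  I=J^{(0)}\subset J^{(1)}\subset\cdots
\]
whose successive quotients inside \(R/I\) are free \(Q\)-modules \(Q(-ai-bj)\), one for each \((i,j)\) with \(i+j\le\ell-2\). This chain is exactly the associated-graded statement
\[
  \operatorname{gr}N_r\simeq\bigoplus Q(-\alpha i-\beta j)^{\oplus(r-1-i-j)}
\]
quoted in Subsection~\ref{subsec:sec3-larfors-five-p1p4}, and it yields the displayed series.

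The main obstacle is showing that these successive quotients are genuinely free over \(Q\), i.e.\ that no unexpected syzygies occur. The first tool I would try is the regular-sequence hypothesis: \(A_0,A_1,B_0,B_1\) regular makes the Koszul complex on \(I\) exact, and each step of the filtration should reduce to a colon computation of the form \((I':A_0)\) or \((I':B_0)\) for intermediate ideals \(I'\). Exactness of the Koszul complex identifies these colons.

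As a cross-check, and as an alternative route if the ideal bookkeeping becomes unwieldy, I would use a geometric identification. Since \(H^0\) of all the relevant negative twists on \(\bP^1\) vanishes, \(M_\ell(a,b)\) is \(H^1\) of \(\cO(-\ell)\) restricted to the subscheme
\[
  Z=\{p=q=0\}\subset\bP^1\times\operatorname{Spec}R,\qquad p=x_0A_0+x_1A_1,\quad q=x_0B_0+x_1B_1.
\]
This \(H^1\) is supported over \(V(I)\), where the fibre is all of \(\bP^1\). Filtering by powers of the ideal of \(\bP^1\times V(I)\) in \(Z\) should reproduce the same free \(Q\)-pieces \(Q(-ai-bj)\), with multiplicity \(h^1(\bP^1,\cO(-\ell+i+j))=\ell-1-i-j\). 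This matches the triangular weights exactly.
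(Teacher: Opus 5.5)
Your row-peeling reduction is correct and differs from the paper's route. You have \(\pi(\operatorname{im}\psi_\ell)=\operatorname{im}\psi_{\ell-1}\), the kernel of \(M_\ell\to M_{\ell-1}\) is the cyclic module \(R/J_\ell\), and the telescoping of the triangular weights reduces the lemma to \(\HS(R/J_\ell;t)=\HS(Q;t)\sum_{i+j\le\ell-2}t^{ai+bj}\). The argument then goes wrong at the identification of \(J_\ell\). Your assertion that \(J_\ell\) contains \(I\) is false for \(\ell\ge3\). If \(I\subseteq J_\ell\), then \(R/J_\ell\) is a quotient of \(Q\), so \(\HS(R/J_\ell;t)\) is coefficientwise at most \(\HS(Q;t)\), contradicting the identity you need. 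Already for \(\ell=3\), \(A_1e_1\notin\operatorname{im}\psi_3\).

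The inclusion goes the other way. Modulo \(\operatorname{im}\psi_\ell\) you have \(A_0e_1\equiv B_0e_1\equiv0\), \(A_1e_s\equiv-A_0e_{s+1}\), \(B_1e_s\equiv-B_0e_{s+1}\) and \(A_1e_{\ell-1}\equiv B_1e_{\ell-1}\equiv0\). These give \((A_0,B_0)+(A_1,B_1)^{\ell-1}\subseteq J_\ell\); in fact equality holds, so \(J_\ell\subseteq I\). The correct filtration runs from \(J_\ell\) up to \(I\), with pieces \((A_1,B_1)^k/(A_1,B_1)^{k+1}\) over \(R/(A_0,B_0)\), each isomorphic to \(\bigoplus_{i+j=k}Q(-ai-bj)\). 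An increasing chain from \(I\) to \(J_\ell\) with quotients \(Q(-ai-bj)\), as you propose, cannot exist: it would subtract these terms rather than add them. Also, the decomposition of \(\operatorname{gr}N_r^{(\alpha,\beta)}\) quoted in the main text is not available as input, since it is stated there as a consequence of this very lemma.

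The real content is the reverse inclusion \(J_\ell\subseteq(A_0,B_0)+(A_1,B_1)^{\ell-1}\). This is the ``no unexpected syzygies'' statement you single out as the main obstacle, and the proposal only hopes some colon computations will settle it; nothing there is proved. The paper settles exactly this point by passing to the universal ring \(U=\bC[\alpha_0,\alpha_1,\beta_0,\beta_1]\). There the cokernel \(N_\ell\) is finite-dimensional, and rewriting with the relations shows it is spanned by classes \(\alpha_1^i\beta_1^je_s\). An explicit \(U\)-module \(V_\ell\) with basis \(v_{i,j,s}\), \(i+j+s\le\ell-2\), on which the relations hold, shows the surviving classes are independent. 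Regularity of \(A_0,A_1,B_0,B_1\) then gives \(\operatorname{Tor}^U_{>0}(R,\bC)=0\), so the composition series specializes to \(R\) without correction.

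You could graft this onto your induction. In \(V_\ell\) the annihilator of \(v_{0,0,0}\) (the image of your \(e_1\)) is \((\alpha_0,\beta_0)+(\alpha_1,\beta_1)^{\ell-1}\). Tor-vanishing keeps \(0\to Ue_1\to N_\ell\to N_{\ell-1}\to0\) exact after \(\otimes_UR\), and this identifies \(J_\ell\).

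Your geometric cross-check has the same gap. Let \(\mathcal I_W\) be the ideal of \(W=\bP^1\times V(I)\) in \(Z\). For \(k<\ell\) the graded pieces of the \(\mathcal I_W\)-adic filtration of \(\cO_Z(-\ell)\) have no \(H^0\), so \(H^1\) is additive there. But the filtration is infinite, and you would still need to prove \(H^1(Z,\mathcal I_W^{\ell}\cO_Z(-\ell))=0\) and justify your description of the graded pieces.
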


\begin{proof}
It is useful first to prove a universal statement.  Let
\[
  U=\bC[\alpha_0,\alpha_1,\beta_0,\beta_1]
\]
with
\[
  \deg\alpha_i=a,\qquad \deg\beta_i=b.
\]
Let \(N_\ell\) be the cokernel of the same two-band matrix, with
\(\alpha_0,\alpha_1,\beta_0,\beta_1\) in place of
\(A_0,A_1,B_0,B_1\).  Write the target generators as
\[
  e_0,\ldots,e_{\ell-2}.
\]
Thus \(N_\ell\) is generated by the \(e_s\), subject to the relations
\[
  \alpha_0e_s+\alpha_1e_{s-1}=0,\qquad
  \beta_0e_s+\beta_1e_{s-1}=0,
  \label{eq:app-two-linear-relations}
\]
for \(s=0,\ldots,\ell-1\), where \(e_{-1}=e_{\ell-1}=0\).

We now give an explicit normal form.  Let \(V_\ell\) be the graded
vector space with basis
\[
  v_{i,j,s},
  \qquad i,j,s\ge0,\qquad i+j+s\le\ell-2,
  \label{eq:app-two-linear-standard-classes}
\]
where
\[
  \deg v_{i,j,s}=ai+bj .
\]
Define a \(U\)-module structure on \(V_\ell\) by
\[
  \alpha_1v_{i,j,s}=
  \begin{cases}
    v_{i+1,j,s},& i+j+s\le\ell-3,\\
    0,& i+j+s=\ell-2,
  \end{cases}
\]
\[
  \beta_1v_{i,j,s}=
  \begin{cases}
    v_{i,j+1,s},& i+j+s\le\ell-3,\\
    0,& i+j+s=\ell-2,
  \end{cases}
\]
and
\[
  \alpha_0v_{i,j,s}=
  \begin{cases}
    -v_{i+1,j,s-1},& s>0,\\
    0,& s=0,
  \end{cases}
  \qquad
  \beta_0v_{i,j,s}=
  \begin{cases}
    -v_{i,j+1,s-1},& s>0,\\
    0,& s=0.
  \end{cases}
\]
These formulae commute with each other, so they indeed define an action
of the polynomial ring \(U\).  The assignment
\[
  e_s\longmapsto v_{0,0,s}
\]
respects the relations \eqref{eq:app-two-linear-relations}.  For
example, if \(1\le s\le\ell-2\), then
\[
  \alpha_0v_{0,0,s}+\alpha_1v_{0,0,s-1}
  =
  -v_{1,0,s-1}+v_{1,0,s-1}=0,
\]
while for \(s=0\) and \(s=\ell-1\) the same relation is exactly one of
the two boundary rules.  The \(\beta\)-relations are identical.  Hence
there is a well-defined surjection
\[
  N_\ell\longrightarrow V_\ell .
\]
Conversely, the relations \eqref{eq:app-two-linear-relations} reduce any
monomial multiple of any \(e_s\) to the form
\[
  \alpha_1^i\beta_1^j e_s,
  \qquad i+j+s\le\ell-2,
\]
and kill it if \(i+j+s>\ell-2\).  Indeed, each occurrence of
\(\alpha_0e_s\) or \(\beta_0e_s\) can be replaced by
\(-\alpha_1e_{s-1}\) or \(-\beta_1e_{s-1}\); repeating this lowers the
target index until only \(\alpha_1,\beta_1\) remain, and the boundary
relations \(\alpha_1e_{\ell-2}=\beta_1e_{\ell-2}=0\) then kill all
classes beyond the triangular range.  Thus the corresponding classes span
\(N_\ell\).  Their images are the basis vectors \(v_{i,j,s}\), so they
are linearly independent.  Therefore
\[
  N_\ell\simeq V_\ell
\]
as graded \(U\)-modules.

Finally filter \(V_\ell\) by the decreasing lexicographic order of the pair
\((s,\ell-2-i-j-s)\).  Multiplication by each of
\(\alpha_0,\alpha_1,\beta_0,\beta_1\) lowers this filtration or gives
zero.  Hence every basis vector gives, in the associated graded module,
one copy of
\[
  U/(\alpha_0,\alpha_1,\beta_0,\beta_1)
\]
shifted by \(ai+bj\).  Therefore
\[
  \operatorname{gr}N_\ell
  \simeq
  \bigoplus_{\substack{i,j\ge0\\ i+j\le\ell-2}}
  \left(U/(\alpha_0,\alpha_1,\beta_0,\beta_1)\right)(-ai-bj)
  ^{\oplus(\ell-1-i-j)} .
  \label{eq:app-two-linear-gr}
\]
Since a filtered module and its associated graded module have the same
Hilbert series, this proves the universal Hilbert series.

Now specialize
\[
  \alpha_i\mapsto A_i,\qquad \beta_i\mapsto B_i .
\]
In other words, make \(R\) a graded \(U\)-algebra by this assignment.  The
only point to check is that specialization does not change the Hilbert
series.  The graded pieces in
\eqref{eq:app-two-linear-gr} are copies of
\[
  U/(\alpha_0,\alpha_1,\beta_0,\beta_1).
\]
After tensoring with \(R\), these become
\[
  R/(A_0,A_1,B_0,B_1).
\]
Since \(A_0,A_1,B_0,B_1\) is an \(R\)-regular sequence, the Koszul complex
on \(\alpha_0,\alpha_1,\beta_0,\beta_1\), after tensoring with \(R\), is the
exact Koszul complex on \(A_0,A_1,B_0,B_1\).  Hence
\[
  \operatorname{Tor}_q^U
  \left(R,U/(\alpha_0,\alpha_1,\beta_0,\beta_1)\right)=0
  \qquad(q>0).
\]
Therefore the filtration specializes without additional Tor correction
terms.  Thus
\[
  \operatorname{gr}M_\ell(a,b)
  \simeq
  \bigoplus_{\substack{i,j\ge0\\ i+j\le\ell-2}}
  \left(R/(A_0,A_1,B_0,B_1)\right)(-ai-bj)
  ^{\oplus(\ell-1-i-j)} .
\]
Finally,
\[
  \HS(R/(A_0,A_1,B_0,B_1);t)
  =
  \frac{(1-t^a)^2(1-t^b)^2}{(1-t)^{\dim R}},
\]
again because the four forms are a regular sequence.  Multiplying by the
triangular shift sum gives the claimed formula.
\end{proof}

\section{\texorpdfstring{Endpoint arguments for the \(\bP^2\times\bP^3\) examples}{Endpoint arguments for the P2 x P3 examples}}
\label{app:ls-7844-endpoints}
\label{app:ls-7844-left-kernel}

\paragraph{Left endpoint kernel.}
We first justify the free resolution used in
\eqref{eq:ls-7844-left-kernel-resolution}.  In the notation of
Section~\ref{subsec:sec3-larfors-p2p3}, we specialize to \(k=2\), so
\[
  F\in T_2\otimes S_2,\qquad G\in T_1\otimes S_2.
\]
For \(r\ge3\), put
\[
  M_r=\bigl((T\otimes S)/(G,F)\bigr)_{x\text{-degree }r}.
\]
The fixed \(x\)-degree piece of the Koszul resolution is
\begin{equation}
0\to
T_{r-3}\otimes S(-4)
\to
T_{r-2}\otimes S(-2)\oplus T_{r-1}\otimes S(-2)
\to
T_r\otimes S
\to
M_r
\to0 .
\label{eq:app-7844-Mr-resolution}
\end{equation}
For a general pair \((G,F)\), the above complex is exact and its cokernel is
a rank-two \(S\)-module.  The generic rank is
\[
  \binom{r+2}{2}-r^2+\binom{r-1}{2}=2.
\]
We next verify reflexivity.  Let
\(U=T_1\) and \(W=T_2\), and consider the incidence locus
\[
  \Sigma=
  \bigl(\{0\}\times W\bigr)
  \cup
  \{(\ell,q)\in U\times W:\ell\ne0,\ q\in \ell U\}.
\]
Since \(\dim U=3\) and \(\dim W=6\), both components of \(\Sigma\) have
dimension \(6\), hence \(\operatorname{codim}_{U\oplus W}\Sigma=3\).
For each fixed \(y\ne0\), the evaluation map
\[
  (G,F)\longmapsto (G_y,F_y)\in U\oplus W
\]
is surjective on the coefficient space.  A standard incidence-dimension
argument therefore shows that, for a general pair \((G,F)\), the inverse
image of \(\Sigma\) in \(\operatorname{Spec}S=\mathbb A^4\) has dimension at
most \(1\).  Outside this inverse image, \(G_y\ne0\) and
\(F_y\notin G_yT_1\), so
\((G_y,F_y)\) is a complete intersection of degrees \((1,2)\) in \(T\), and
\[
  \dim_\mathbb C(T/(G_y,F_y))_r=2
  \qquad(r\ge1).
\]
Thus \(M_r\) is locally free away from a subset of codimension at least \(3\);
equivalently, it is locally free in codimension \(2\).

Serre's \(S_2\) condition follows from a similar incidence argument applied
to the origin in \(U\oplus W\): for a general pair, \(G_y=F_y=0\) only at
\(y=0\).  Hence the left map in \eqref{eq:app-7844-Mr-resolution} can fail to
be fiberwise injective only at the origin.  Dualizing the two-step free
resolution shows that \(\operatorname{Ext}^2_S(M_r,S)\) is supported in
codimension \(4\).  After localization, Auslander--Buchsbaum over the
regular local ring \(S_{\mathfrak p}\) gives depth at least \(2\) at primes
of height \(3\) and \(4\), while local freeness covers primes of height at
most \(2\).  Since \(S\) is regular, \(M_r\) is reflexive.

Its determinant is read from the free resolution:
\[
  \det M_r
  \simeq
  S\!\left(
    2r^2-4\binom{r-1}{2}
  \right)
  =
  S(6r-4).
\]
For a rank-two reflexive module,
\[
  M_r^\vee\simeq M_r\otimes(\det M_r)^\vee,
\]
and hence
\[
  M_r^\vee\simeq M_r(-(6r-4)).
\]

Apply \(\operatorname{Hom}_S(-,S(-4))\) to
\eqref{eq:app-7844-Mr-resolution}.  The beginning of the dual complex is
\[
0\to
\operatorname{Hom}_S(M_r,S(-4))
\to
T_r^\vee\otimes S(-4)
\to
T_{r-2}^\vee\otimes S(-2)\oplus T_{r-1}^\vee\otimes S(-2).
\]
Thus the graded left-endpoint module
\[
  K_r^{(2)}
  =
  \ker\!\left(
  T_r^\vee\otimes S(-4)
  \to
  T_{r-2}^\vee\otimes S(-2)\oplus T_{r-1}^\vee\otimes S(-2)
  \right)
\]
is
\[
  \operatorname{Hom}_S(M_r,S(-4))
  =
  M_r^\vee(-4)
  \simeq
  M_r(-6r).
\]
Twisting \eqref{eq:app-7844-Mr-resolution} by \(-6r\) therefore gives
\[
0\to
S(-6r-4)^{\binom{r-1}{2}}
\to
S(-6r-2)^{r^2}
\to
S(-6r)^{\binom{r+2}{2}}
\to
K_r^{(2)}
\to0,
\]
which is \eqref{eq:ls-7844-left-kernel-resolution}.

The same reflexivity argument and determinant computation for the general
configuration \(X_k\) give
\[
  \det M_r^{(k)}
  \simeq
  S\bigl((8-k)r-4\bigr).
\]
Consequently the corresponding left-endpoint module is shifted by
\((8-k)r\).  This is the source of the main wall
\(m_1=-(8-k)m_0\) for \(k=1,2,3\).

\paragraph{Right endpoint rank checks.}
We explain the finite rank statement used in
\eqref{eq:ls-7844-right-end-numerator}.  In this paragraph
\(C_2\to C_1\to C_0\) denotes the dual row from the main text; equivalently,
with the shorter notation of Section~\ref{subsec:sec3-larfors-p2p3}, it is
\[
  D(C_0)_b\longrightarrow D(C_1)_b\longrightarrow D(C_2)_b .
\]
Since the right-endpoint map annihilates the image of the previous arrow, it
factors through
\[
  \bar\delta_b:
  Q_b:=C_1(b)/\operatorname{im}\!\left(C_2(b)\to C_1(b)\right)
  \longrightarrow C_0(b).
\]
Since
\[
  \dim Q_b=\dim C_1(b)-\dim C_2(b)+K_{r,2}(b),
  \qquad
  \dim C_0(b)-\dim Q_b=G_r(b)-K_{r,2}(b),
\]
maximal rank of \(\bar\delta_b\) gives
\[
  \dim\coker\bar\delta_b
  =
  \max\{G_r(b)-K_{r,2}(b),0\}.
\]
For \(3\le r\le15\), the verification data exhibits nonzero minors proving
this maximal-rank statement in every degree \(0\le b\le B_r\), where
\[
  B_r=\max\{b\ge0:G_r(b)-K_{r,2}(b)>0\}.
\]
It also gives a surjectivity, or full-row-rank, check in degree \(B_r+1\).
Since the right-endpoint module \(E_r\) is generated in degree zero, this proves
\((E_r)_b=0\) for all \(b\ge B_r+1\).

\paragraph{Machine-readable verification data for \(7833,7844,7883\).}
\label{app:ls-p2p3-certificates}
The finite black-line statements for \(7833\) and \(7883\) use the same
nonzero-minor principle.  The companion repository
\url{https://github.com/jtwangbimsa-dragon/cicylinecoh_alpha}
stores replay data organized by CICY number.  Each entry contains a
deterministic specialization, a prime field, and pivot/minor data for a
specific full-rank minor.  The verifier rebuilds the corresponding integer
matrix, reduces it modulo the specified prime, and replays the pivot/minor
check.  A successful replay proves that the same integer minor is nonzero, so
the rank condition holds over \(\bQ\) and on a nonempty Zariski-open subset
of the complex coefficient space.

For \(7844\), the archive contains the right-endpoint checks for
\(3\le r\le15\) and \(0\le b\le B_r+1\) described above.  For \(7833\), it
contains \(46\) right-endpoint checks on
\[
  r=2b-1,\quad 2\le b\le25,
  \qquad\text{and}\qquad
  r=2b+1,\quad 4\le b\le25.
\]
For \(7883\), it contains the boundary check at \(q=1\) and \(66\)
quotient-pivot checks for
\[
  (r,b)=(3q,2q),\qquad 2\le q\le23,
\]
three bottom layers for each \(q\).

For \(7833\) and \(7883\), the verifier exploits the special structure of the
endpoint matrices to reduce the amount of stored pivot data.

After the endpoint ranks are replayed, they are used exactly as in the main
text.  For \(7833\), the verified right endpoint vanishes on the two listed
black lines, while the left endpoint vanishes there by the shifted endpoint
numerator.  For \(7883\), the left endpoint vanishes on
\((r,b)=(3q,2q)\), the right endpoint has dimension \(1\) at \(q=1\), and
vanishes for \(2\le q\le23\).  In every case, the remaining middle
cohomology entry is then determined by the row Euler characteristic.

\section{\texorpdfstring{CICY \(7880\): hidden walls and finite-box checks}{CICY 7880: hidden walls and finite-box checks}}
\label{app:critical-rank-rays}

\paragraph{\texorpdfstring{The \(q=2\) hidden wall.}{The q=2 hidden wall.}}
\label{app:7880-q2-hidden-wall}

We prove Theorem~\ref{thm:223-q2-hidden-wall}.  Let
\[
  B=\bC[s,t],\qquad S=\bC[z_0,z_1,z_2],
  \qquad R=B\otimes_\bC S.
\]
Write the \(7880\) defining polynomial as
\[
  F=\sum_{\alpha,\beta=0}^2 x_\alpha y_\beta F_{\alpha\beta}(z),
  \qquad F_{\alpha\beta}\in S_3,
\]
and set
\[
  G_\beta=s^2F_{0\beta}+stF_{1\beta}+t^2F_{2\beta}
  \in B_2\otimes S_3,\qquad \beta=0,1,2.
\]
For \(q=2\), the transpose of the active map in \(S\)-degree \(6\)
is the \(B\)-degree \(p-2\) part of contraction by the three forms
\(G_\beta\).  Thus, if
\[
  Q=R/(G_0,G_1,G_2),
\]
then
\[
  \ker\Theta_{p,2}(6)^T=\operatorname{Hom}_B(Q_6,B)_{p-2}.
\]
Since \(\HF(M_{p,2},6)=\dim\ker\Theta_{p,2}(6)^T\), it remains to compute
the graded \(B\)-module \(\operatorname{Hom}_B(Q_6,B)\).

For coefficients \(F_{\alpha\beta}\) in a nonempty Zariski-open subset of the
full coefficient space, the three forms \(G_0,G_1,G_2\) form a regular
sequence after localization at every height-one prime of \(B\).  Taking the
\(S\)-degree \(6\) part of the Koszul complex gives, after these
localizations,
\[
  0\longrightarrow
  B(-4)^3
  \longrightarrow
  S_3^{\oplus3}\otimes B(-2)
  \longrightarrow
  S_6\otimes B
  \longrightarrow
  Q_6
  \longrightarrow0 .
\]
Hence \(Q_6\) has rank
\[
  \dim S_6-3\dim S_3+3=28-30+3=1
\]
over \(B\), and its determinant is
\[
  \det Q_6
  =
  B(60-12)
  =
  B(48).
\]
Thus \(Q_6\) is locally free of rank one in codimension one and
\(\det Q_6=B(48)\).  Hence its dual is a graded rank-one reflexive
\(B\)-module with determinant \(B(-48)\).  Since \(B=\bC[s,t]\) is a UFD,
every graded rank-one reflexive \(B\)-module is free up to shift, and hence
\[
  \operatorname{Hom}_B(Q_6,B)\simeq B(-48).
\]
Consequently
\[
  \HF(M_{p,2},6)
  =
  \dim B(-48)_{p-2}
  =
  \dim B_{p-50}
  =
  \max\{p-49,0\}.
\]
This is the Hilbert-function statement in
Theorem~\ref{thm:223-q2-hidden-wall}; the displayed cohomology formulae
then follow from the spectral-sequence row used in the main text.

\paragraph{The finite-box maximal-rank check.}

We spell out the finite-box certificate used for the top-\(\bP^2\)
chamber of CICY \(7880\).  The key observation is that, for each fixed pair
\((p,q)\), two rank checks at the critical degrees imply the positive-part
Hilbert function in all \(S\)-degrees.  Throughout this section we assume \(p,q\ge2\).

Let
\[
  S=\bC[z_0,z_1,z_2],
  \qquad
  a=(p+1)(q+1),\qquad b=(p-1)(q-1),
\]
and consider the graded map
\begin{equation}
  \Theta_{p,q}:S(-3)^a\longrightarrow S^b .
  \label{eq:app-critical-theta}
\end{equation}
In the \(7880\) application this is the convolution matrix built from the
nine cubic coefficients \(F_{\alpha\beta}\).  Put
\[
  M_{p,q}=\coker\Theta_{p,q}.
\]
In degree \(c\),
\[
  \Theta_{p,q}(c):
  S_{c-3}^{a}\longrightarrow S_c^{b},
\]
where \(S_m=0\) for \(m<0\).  Thus
\[
  \dim S_{c-3}^{a}=a\binom{c-1}{2},
  \qquad
  \dim S_c^{b}=b\binom{c+2}{2}.
\]

The ratio
\[
  \frac{\dim S_{c-3}}{\dim S_c}
  =
  \frac{(c-1)(c-2)}{(c+1)(c+2)}
\]
is increasing for \(c\ge3\).  Since \(p,q\ge2\),
\[
  \frac{a}{b}
  =
  \frac{(p+1)(q+1)}{(p-1)(q-1)}
  \le9
  <10
  =
  \frac{\dim S_3}{\dim S_0}.
\]
Thus the source is no larger than the target at \(c=3\); since \(a/b>1\),
there is a unique crossing.  Define
\begin{equation}
  R=R(p,q)
  =
  \max\left\{
  c\ge3:
  a\dim S_{c-3}\le b\dim S_c
  \right\}.
  \label{eq:app-critical-R}
\end{equation}
Then the source is no larger than the target for \(c\le R\), and no smaller
for \(c\ge R+1\).

\begin{lemma}
\label{lem:critical-two-ranks-all-degrees}
Assume that
\[
  \Theta_{p,q}(R):S_{R-3}^{a}\to S_R^b
\]
is injective and that
\[
  \Theta_{p,q}(R+1):S_{R-2}^{a}\to S_{R+1}^b
\]
is surjective.  Then \(\Theta_{p,q}(c)\) has maximal rank for every
\(c\ge0\).  Equivalently,
\[
  \HF(M_{p,q},c)
  =
  \max\left\{
  b\binom{c+2}{2}
  -
  a\binom{c-1}{2},
  0
  \right\}
\]
for all \(c\), with the convention \(\binom{c-1}{2}=0\) for \(c<3\).
\end{lemma}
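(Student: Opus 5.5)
We propagate the two critical rank conditions to all degrees with the module-structure argument already used for the bicubic in Subsection~\ref{subsec:sec3-bicubic}. Injectivity moves downward: a kernel element in degree $c<R$ becomes, after multiplication by a nonzero form, a kernel element in degree $R$. Surjectivity moves upward: the cokernel $M_{p,q}$ is generated in degree $0$. These two facts, combined with the monotone crossing of source and target dimensions established just before the lemma, give maximal rank everywhere.

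\emph{Degrees $c\le R$.} Take $v\in\ker\Theta_{p,q}(c)$ with $c\le R$. If $c<3$ the source is zero and there is nothing to show. Otherwise choose a nonzero $u\in S_{R-c}$, for example $u=z_0^{R-c}$. Since $\Theta_{p,q}$ is $S$-linear,
\[
  \Theta_{p,q}(R)(uv)=u\,\Theta_{p,q}(c)(v)=0 .
\]
Injectivity in degree $R$ gives $uv=0$ in $S_{R}^{a}$. The module $S(-3)^a$ is free over the domain $S$, so $v=0$. Hence $\Theta_{p,q}(c)$ is injective for all $c\le R$. By the definition \eqref{eq:app-critical-R} of $R$, in these degrees $\dim$ source $\le\dim$ target, so the map has maximal rank and
\[
  \HF(M_{p,q},c)=b\binom{c+2}{2}-a\binom{c-1}{2}\ge0 .
\]

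\emph{Degrees $c\ge R+1$.} The module $M_{p,q}$ is a quotient of $S^b$, so it is generated in degree $0$. It follows that $(M_{p,q})_{c}=S_{c-R-1}\,(M_{p,q})_{R+1}$ for $c\ge R+1$. The hypothesis says $(M_{p,q})_{R+1}=0$, so $(M_{p,q})_c=0$ and $\Theta_{p,q}(c)$ is surjective. In these degrees the source is at least as large as the target, by the unique-crossing remark preceding \eqref{eq:app-critical-R}. So surjectivity is maximal rank, and $\HF=0=\max\{b\binom{c+2}{2}-a\binom{c-1}{2},0\}$.

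The two ranges together give the stated positive-part formula for every $c\ge0$. For $c<3$ this is consistent with the convention $\binom{c-1}{2}=0$: there $\HF=b\binom{c+2}{2}$.

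The only point needing care is the sign bookkeeping: in $c\le R$ the positive part equals the difference itself, and in $c\ge R+1$ it vanishes. This is exactly where the unique-crossing property of the ratio $\dim S_{c-3}/\dim S_c$ enters, and that property has already been established before the lemma. With it in hand, I do not expect a genuine obstacle in the argument.
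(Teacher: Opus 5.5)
Your proof is correct and follows the paper's argument essentially step for step. Injectivity is pulled down from degree \(R\) by multiplying a kernel vector by a nonzero form, using that \(S\) is a domain and the source is free. Surjectivity is pushed up from degree \(R+1\) because \(M_{p,q}\) is generated in degree \(0\).

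One small remark: the unique-crossing property is not needed for the sign bookkeeping. An injective map already forces \(\dim\) source \(\le\dim\) target, and a surjective map forces the reverse. So the positive-part formula follows directly from the injective/surjective dichotomy.
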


\begin{proof}
If \(c<R\), any nonzero kernel vector in degree \(c\) can be multiplied by a
nonzero form of degree \(R-c\).  Since \(S\) is a domain and the source is
free, this gives a nonzero kernel vector in degree \(R\), contradicting the
assumed injectivity there.  Hence all maps in degrees \(c\le R\) are
injective.

For \(c\ge R+1\), surjectivity in degree \(R+1\) is the statement
\((M_{p,q})_{R+1}=0\).  Since \(M_{p,q}\) is a quotient of \(S^b\), it is
generated in degree \(0\), and each higher graded piece is obtained by
multiplying the previous ones by linear forms.  Therefore
\((M_{p,q})_c=0\) for every \(c\ge R+1\), so all maps in these degrees are
surjective.  This gives maximal rank in every degree, and the displayed
Hilbert function is the target dimension minus the rank.
\end{proof}

The pivot certificates are characteristic-zero certificates, not merely
finite-field evidence.  After deterministic integer coefficients are chosen
for the cubics \(F_{\alpha\beta}\), the two critical maps become integer
matrices.  If sparse elimination modulo a prime \(\ell\) finds a minor of
the required size which is nonzero in \(\mathbb F_\ell\), then that minor is
a nonzero integer, so the same rank statement holds over \(\bQ\).
Equivalently, this minor is a nonzero polynomial condition on the
coefficients of the \(F_{\alpha\beta}\), and hence holds on a nonempty
Zariski-open subset.  Intersecting the finitely many certified open
conditions with the smoothness open gives the claimed generic statement.

Therefore, for every fixed \((p,q)\) with certificates at \(R\) and \(R+1\),
Lemma~\ref{lem:critical-two-ranks-all-degrees} proves the positive-part
Hilbert function for all \(c\).  The certificates for
\[
  2\le p,q\le14
\]
cover \(169\) infinite \(c\)-rays in the top-\(\bP^2\) chamber of CICY
\(7880\), not merely \(169\) isolated line bundles.  
This statement has two important limits.  First, it is a generic statement:
special complex structures may have lower ranks and hence larger
cohomology.  Second, it is a statement in the full coefficient space of the
hypersurface.  If one restricts to an invariant or symmetric subfamily of
defining polynomials, the same minor-nonvanishing argument must be repeated
inside that smaller coefficient space.

\section{\texorpdfstring{Details for the CICY \(7707\) formulae}{Details for the CICY 7707 formulae}}
\label{app:7707-technical}

This appendix contains the proof-level material supporting
Section~\ref{subsec:new-7707}.  The companion repository containing the
computational evidence for the finite-field, standard-basis, and direct-rank
computations used in this paper is available at
\begin{center}
  \url{https://github.com/jtwangbimsa-dragon/cicylinecoh_alpha}.
\end{center}
We use the notation
\[
X_{7707}=
\left[\begin{array}{c|cc}
\bP^1_x&1&1\\
\bP^1_y&0&2\\
\bP^3_z&3&1
\end{array}\right],
\]
and
\[
  F=x_0A_0(z)+x_1A_1(z),\qquad
  G=x_0B_0(y,z)+x_1B_1(y,z).
\]

\begin{lemma}[The one-top \(\bP^1_x\) row for \(7707\)]
\label{lem:7707-p1x-row}
Let \(k_x=-\ell\le-2\).  After Serre duality on \(\bP^1_x\), the
right-endpoint cokernel of the one-top \(\bP^1_x\) row is the
\((u,v)\)-graded piece of
\[
  M_{\ell,3}
  =
  \coker\left(
  R(0,-3)^\ell\oplus R(-2,-1)^\ell
  \longrightarrow R^{\ell-1}\right),
  \qquad R=\bC[y_0,y_1,z_0,\ldots,z_3],
\]
where \(u\) is the \(\bP^1_y\)-degree and \(v\) is the \(\bP^3_z\)-degree.
For a general member,
\begin{equation}
  \HS(M_{\ell,3};u,v)
  =
  \frac{(1-v^3)^2(1-u^2v)^2}{(1-u)^2(1-v)^4}
  \sum_{\substack{i,j\geq0\\ i+j\leq \ell-2}}
  (\ell-1-i-j)u^{2j}v^{3i+j}.
  \label{eq:7707-p1x-HS}
\end{equation}
\end{lemma}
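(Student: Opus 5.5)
The plan is to reduce the statement to Lemma~\ref{lem:two-linear-band-HS}, run in bigraded form. First I identify the active row. For \(k=(-\ell,b,c)\) with \(\ell\ge2\), the Koszul terms are
\[
  \cO_A(-\ell,b,c),\qquad \cO_A(-\ell-1,b,c-3)\oplus\cO_A(-\ell-1,b-2,c-1),\qquad \cO_A(-\ell-2,b-2,c-4).
\]
On the top-\(\bP^1_x\) row I use \(H^1(\bP^1,\cO(-n))^\vee\simeq\bC[x_0,x_1]_{n-2}\). The dual spaces then have dimensions \(\ell-1\), \(\ell\), \(\ell\) in the relevant positions.

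Multiplication by \(F=x_0A_0+x_1A_1\) and \(G=x_0B_0+x_1B_1\) becomes the two-band matrix \(\psi_\ell\) of Appendix~\ref{app:two-linear-band-module}. Its \(A\)-block consists of the cubics \(A_0,A_1\in R_{(0,3)}\), and its \(B\)-block consists of \(B_0,B_1\in R_{(2,1)}\). The contraction and sign conventions match those of the \(7858\)/\(7885\) computation, with the plus-shift in indices coming from the Serre-dual transpose. This identifies the right-endpoint cokernel with the \((u,v)\)-graded piece of \(M_{\ell,3}=\coker\psi_\ell\), where \(R=\bC[y_0,y_1,z_0,\dots,z_3]\) is bigraded with \(\deg y=(1,0)\) and \(\deg z=(0,1)\).

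Next I rerun the proof of Lemma~\ref{lem:two-linear-band-HS} with bigraded degrees. Take the universal ring \(U=\bC[\alpha_0,\alpha_1,\beta_0,\beta_1]\) with \(\deg\alpha_i=(0,3)\) and \(\deg\beta_i=(2,1)\). The normal-form basis \(v_{i,j,s}\), the explicit \(U\)-action, and the filtration by \((s,\ell-2-i-j-s)\) are all homogeneous for any grading in which the \(\alpha\)'s share one degree and the \(\beta\)'s share another. The argument therefore gives, verbatim, a bigraded associated graded module
\[
  \operatorname{gr}N_\ell\simeq\bigoplus_{i+j\le\ell-2}\bigl(U/(\alpha,\beta)\bigr)(-(0,3)i-(2,1)j)^{\oplus(\ell-1-i-j)}.
\]
The monomial attached to the shift is \(u^{2j}v^{3i+j}\). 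Specializing \(\alpha_i\mapsto A_i\) and \(\beta_i\mapsto B_i\) introduces no Tor correction, provided \(A_0,A_1,B_0,B_1\) is an \(R\)-regular sequence. The factor in front is then
\[
  \HS\bigl(R/(A_0,A_1,B_0,B_1)\bigr)=\frac{(1-v^3)^2(1-u^2v)^2}{(1-u)^2(1-v)^4},
\]
and multiplying by the triangular sum gives \eqref{eq:7707-p1x-HS}.

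The main obstacle is verifying that \(A_0,A_1,B_0,B_1\) is a regular sequence for a general member, since the forms mix two gradings. Because \(R\) is a bigraded polynomial ring, this is equivalent to \(\operatorname{ht}(A_0,A_1,B_0,B_1)=4\). I would argue it in two steps.
\begin{enumerate}
  \item General cubics \(A_0,A_1\) in \(z\) define a curve \(C\subset\bP^3_z\). In \(\operatorname{Spec}R\), their zero set is \(\mathbb A^2_y\times\widehat C\), which has dimension 4.
  \item On \(\bP^1_y\times C\), the sections \(B_0,B_1\) of \(\cO(2,1)\) are general sections of a basepoint-free linear system, so by Bertini their common zero locus is finite. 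It remains to control the cone loci \(y=0\) and \(z=0\). On \(z=0\), the \(A_i\) vanish, but the \(B_i\) restrict to \(0\) only because they are bihomogeneous of positive \(z\)-degree. The dimension count therefore has to be done in the affine cone.
\end{enumerate}
I would carry out that affine-cone count explicitly, bounding the component \(\{y=0\}\cup\{z=0\}\) by \(2+0\) resp.\ \(4-2\). If the affine cone loci cause trouble, the fallback is to verify regularity by exhibiting one explicit specialization, for instance monomial choices, at which the four forms are regular, and then invoke openness of the regular-sequence condition.
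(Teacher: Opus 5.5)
Your proposal is correct and is essentially the paper's proof. The paper also identifies the top-\(\bP^1_x\) right-endpoint map with the two-band matrix and applies Lemma~\ref{lem:two-linear-band-HS} with \(\alpha=(0,3)\), \(\beta=(2,1)\); the bigraded version is legitimate for exactly the homogeneity reason you give, and your affine-cone count (the strata \(y=0\), \(z=0\), \(y,z\neq0\) each have dimension \(2=6-4\)) makes explicit the regular-sequence hypothesis on \(A_0,A_1,B_0,B_1\) that the paper leaves implicit in ``for a general member''.
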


\begin{proof}
The two columns of \(F,G\) have degrees in the remaining variables \((0,3)\) and
\((2,1)\) on \(\bP^1_y\times\bP^3_z\).  After taking top cohomology on
\(\bP^1_x\), the right-endpoint cokernel is the adjacent two-band module of
Appendix~\ref{app:two-linear-band-module}.  Applying
Lemma~\ref{lem:two-linear-band-HS} with
\(\alpha=(0,3)\) and \(\beta=(2,1)\) gives
\eqref{eq:7707-p1x-HS}.
\end{proof}

\begin{lemma}[The one-top \(\bP^1_y\) row]
\label{lem:7707-p1y-row}
Let
\[
  B=\{x_0A_0(z)+x_1A_1(z)=0\}
  \subset \bP^1_x\times\bP^3_z .
\]
Write
\[
  G=C_0(x,z)y_0^2+C_1(x,z)y_0y_1+C_2(x,z)y_1^2
\]
on \(B\times\bP^1_y\), where each \(C_i\) is a section of
\(\cO_B(1,1)\).  For \(k_y=-\ell\le-2\), the one-top
\(\bP^1_y\) row is the corresponding binary-quadratic band module on
\(B\):
\[
  \mathcal T_{\ell,F}
  =
  \coker\!\left(
  \cO_B(-1,-1)^{\ell+1}
  \longrightarrow
  \cO_B^{\ell-1}\right).
\]
For a general member, this sheaf is resolved by
\begin{equation}
\begin{aligned}
0\to{}&
\cO_B(-\ell-1,-\ell-1)^{\ell-1}
\to
\cO_B(-\ell,-\ell)^{\ell+1} \\
\to{}&
\cO_B(-1,-1)^{\ell+1}
\to
\cO_B^{\ell-1}
\to
\mathcal T_{\ell,F}
\to0 .
\end{aligned}
\label{eq:7707-p1y-resolution}
\end{equation}
Thus every one-top \(\bP^1_y\) rank defect is the Hilbert function of
\(\mathcal T_{\ell,F}\), twisted by the remaining \((x,z)\)-degree dictated
by the range.
\end{lemma}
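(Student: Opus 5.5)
The plan is to reduce the one-top \(\bP^1_y\) row to the Toeplitz band-matrix situation already treated in Appendix~\ref{app:buchsbaum-rim}, but now over the surface \(B\) rather than over a polynomial ring. First I will identify the active map. On the hypersurface \(B\times\bP^1_y\subset \bP^1_x\times\bP^1_y\times\bP^3_z\) (cut out by \(F\), which does not involve \(y\)), the remaining equation \(G\) has bidegree \((1,1)\) in \((x,z)\) and degree \(2\) in \(y\). For \(k_y=-\ell\le-2\) I take top cohomology on \(\bP^1_y\) and apply Serre duality with \(H^1(\bP^1,\cO(-\ell-2))^\vee\simeq H^0(\bP^1,\cO(\ell))\) and \(H^1(\bP^1,\cO(-\ell))^\vee\simeq H^0(\bP^1,\cO(\ell-2))\). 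As in the \(\bP^1\times\bP^n\) computation of Subsection~\ref{subsec:sec3-p1pn}, contraction by \(C_0y_0^2+C_1y_0y_1+C_2y_1^2\) then becomes the \((\ell-1)\times(\ell+1)\) banded Sylvester matrix whose rows are shifted copies of \((C_0,C_1,C_2)\). This gives exactly the sheaf map \(\cO_B(-1,-1)^{\ell+1}\to\cO_B^{\ell-1}\), with cokernel \(\mathcal T_{\ell,F}\).

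Next I will establish exactness of the Buchsbaum--Rim complex of this map on \(B\). The relevant numbers are \(f=\ell+1\), \(g=\ell-1\), and expected codimension \(f-g+1=3\). The pointwise argument of Appendix~\ref{app:buchsbaum-rim} (rows are \(z^iF(z)\) for a scalar polynomial) shows that the rank drops exactly at the common zeros of \(C_0,C_1,C_2\) on \(B\). I then need this locus to have codimension \(3\), and since \(\dim B=3\) this means that for general \(G\) the sections \(C_0,C_1,C_2\in H^0(B,\cO_B(1,1))\) have only finitely many common zeros, or none. I will argue this by Bertini, using that \(\cO_B(1,1)\) is globally generated and the \(C_i\) are independent general sections. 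Locally the map is a matrix over a regular (Cohen--Macaulay) local ring, where grade equals height. The Buchsbaum--Rim criterion therefore applies stalkwise, giving an exact complex of sheaves.

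Finally I will read off the terms from \eqref{eq:app-toeplitz-br-terms} with \(d=2\) and \(e=(1,1)\). The term \(B_2\) is \(\cO_B(-\ell,-\ell)^{\binom{\ell+1}{\ell}\binom{\ell-1}{0}}=\cO_B(-\ell,-\ell)^{\ell+1}\), and \(B_3\) is \(\cO_B(-\ell-1,-\ell-1)^{\binom{\ell+1}{\ell+1}\binom{\ell}{1}}\). The second exponent should come out as \(\ell-1\) to match \eqref{eq:7707-p1y-resolution}, so I will recheck the Sym index convention, since \(\operatorname{Sym}^1\) of a rank-\((\ell-1)\) module has rank \(\ell-1\).

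The main obstacle I expect is the codimension statement. Degenerate common-zero curves must be excluded on the threefold \(B\). Because \(B\) itself depends on \(F\) and the \(C_i\) are constrained by the form of \(G\), this needs a genuine generality argument: I would use an incidence variety over the coefficient space of \(G\) with \(F\) fixed and general. A second, smaller point is to turn "rank defects are Hilbert functions of \(\mathcal T_{\ell,F}\)" into a statement about global sections. The plan is to twist the resolution and check vanishing of the relevant higher cohomology on \(B\) in the chamber \(a\ge2,\ c\ge4\), using the ambient Koszul sequence of \(B\) in \(\bP^1_x\times\bP^3_z\).
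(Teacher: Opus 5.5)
Your argument for the resolution itself is the paper's: Serre duality on \(\bP^1_y\) gives the \((\ell-1)\times(\ell+1)\) band matrix in \(C_0,C_1,C_2\); its rank drops exactly on \(Z=\{C_0=C_1=C_2=0\}\); \(Z\) has codimension \(3\) by Bertini for the base-point-free system \(|\cO_B(1,1)|\); and Buchsbaum--Rim then applies stalkwise on the Cohen--Macaulay threefold \(B\). The step you call the main obstacle is no harder than that. The coefficients of \(y_0^2,y_0y_1,y_1^2\) in \(G\) are three independent arbitrary \((1,1)\)-forms, unrelated to \(F\). Moreover, restriction \(H^0(\bP^1_x\times\bP^3_z,\cO(1,1))\to H^0(B,\cO_B(1,1))\) is an isomorphism: its kernel is \(H^0(\cO(0,-2))=0\) and its cokernel lies in \(H^1(\cO(0,-2))=0\). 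So a general \(G\) restricts to three general sections, and \(Z\) consists of \(c_1(\cO_B(1,1))^3=10\) reduced points. Your \(B_3\) exponent is only a substitution slip: the appendix formula has \(r=\ell\), giving \(\binom{\ell-1}{1}=\ell-1\).

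The step that would fail is your last one. The twisted resolution terms do not have vanishing higher cohomology throughout \(a\ge2\), \(c\ge4\). If they did, the global-section complex would be exact, and the rank defect would equal \(h^0(B,\mathcal T_{\ell,F}(a,c))\). But the band map is surjective off \(Z\), so \(\mathcal T_{\ell,F}\) has finite support. Hence \(h^0(\mathcal T_{\ell,F}(a,c))\) is its length \(10\binom{\ell+1}{3}\) for every \((a,c)\); each point contributes the length \(\binom{\ell+1}{3}\) of the local band module, as in the \(\bP^1\times\bP^n\) case with \(d=2\), \(e=1\). The rank defect, however, is at most the target dimension \((\ell-1)h^0(\cO_B(a,c))\). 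For example, at \((a,c)=(2,4)\) this is \(97(\ell-1)<10\binom{\ell+1}{3}\) once \(\ell\ge8\). So \(\cO_B(a-\ell,c-\ell)\) or \(\cO_B(a-\ell-1,c-\ell-1)\) carries higher cohomology in part of the chamber.

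The lemma's last sentence must therefore be read the way the paper's one-line conclusion uses it. It is the Hilbert function of the graded cokernel of the band map on the section ring \(\bigoplus_{a,c}H^0(\cO_B(a,c))\), which equals \(\bC[x_0,x_1,z_0,\ldots,z_3]/(F)\) in degrees \(a,c\ge0\); this is the rank defect by definition. Extracting numbers from the sheaf resolution then requires tracking the higher cohomology of the twisted terms through the hypercohomology spectral sequence. That is what produces positive-part branches such as \(\max\{0,-\chi_r\}\) for the tetraquadric. A vanishing theorem cannot do it. Nor does the Buchsbaum--Rim complex help at module level, because it is not exact there: every \(C_i\) vanishes on the divisor \(\{x_0=x_1=0\}\) of \(\operatorname{Spec}\bC[x,z]/(F)\).
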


\begin{proof}
On \(B\), the row is multiplication by a binary quadratic
\(C_0+C_1w+C_2w^2\).  Assume \(B\) is smooth and that the restrictions
\(C_0,C_1,C_2\in H^0(B,\cO_B(1,1))\) are general.  Since
\(\cO_B(1,1)\) is base-point free, the common zero locus
\[
  Z=(C_0=C_1=C_2)\subset B
\]
has the expected codimension \(3\).  The Toeplitz map
\[
  \cO_B(-1,-1)^{\ell+1}\longrightarrow \cO_B^{\ell-1}
\]
has maximal rank away from \(Z\), and its rank-drop locus is precisely
\(Z\).  Since \(B\) is Cohen--Macaulay, the Buchsbaum--Rim exactness
criterion recalled in Appendix~\ref{app:buchsbaum-rim} gives
\eqref{eq:7707-p1y-resolution}.  Taking the required \((x,z)\)-graded pieces
gives the one-top \(\bP^1_y\) rank defects.
\end{proof}

\begin{theorem}[The one-top \(\bP^3_z\) row for \(7707\)]
\label{thm:7707-p3-row}
Let \(W=\bC[z_0,\ldots,z_3]\) and
\(U=\bC[x_0,x_1,y_0,y_1]\), with \(U\) bigraded by \((x,y)\)-degree.
Put
\[
  P_3(m)=
  \begin{cases}
    \binom{m+3}{3},&m\ge0,\\
    0,&m<0.
  \end{cases}
\]
For \(m\in \bZ\), define
\[
  N_m=\left((W\otimes U)/(F,G)\right)_{z\text{-degree}=m}.
\]
Then \(N_m=0\) for \(m<0\), while for \(m\ge0\) it has the \(U\)-free
resolution
\begin{equation}
0\to
W_{m-4}\otimes U(-2,-2)
\to
W_{m-3}\otimes U(-1,0)\oplus W_{m-1}\otimes U(-1,-2)
\to
W_m\otimes U
\to
N_m\to0 ,
\label{eq:7707-p3-resolution}
\end{equation}
where \(W_j=0\) for \(j<0\).  Hence
\begin{equation}
  \HS_U(N_m;u,v)
  =
  \frac{
  P_3(m)-P_3(m-3)u-P_3(m-1)uv^2+P_3(m-4)u^2v^2
  }{(1-u)^2(1-v)^2}.
  \label{eq:7707-p3-HS}
\end{equation}
For the top-\(\bP^3_z\) ambient cohomology rows used in the \(7707\) audit,
write the original \(z\)-degree as \(k_z=-m\).  For a Koszul summand with
\(z\)-shift \(d\), Serre duality gives
\[
  H^3(\bP^3,\cO_{\bP^3}(-m-d))\simeq W_{m+d-4}^{\vee},
\]
with \(W_j=0\) for \(j<0\).  Thus Serre duality replaces the fixed-degree
piece by the graded complex
\begin{equation}
\begin{aligned}
0\to&
  W_m^\vee\otimes U(-2,-2)
\\
\to&
  W_{m-3}^\vee\otimes U(-1,-2)
  \oplus
  W_{m-1}^\vee\otimes U(-1,0)
\\
\to&
  W_{m-4}^\vee\otimes U
\to0 .
\end{aligned}
\label{eq:7707-p3-dual-complex}
\end{equation}
Thus the row contribution is the corresponding remaining \((x,y)\)-graded
cohomology of \eqref{eq:7707-p3-dual-complex}.  If
\[
  P_1(n)=\max\{n+1,0\},
\]
then the bidegree in the remaining variables \((a,b)\) piece is
\[
\begin{aligned}
0\to&
  W_m^\vee\otimes U_{a-2,b-2}
\\
\to&
  W_{m-3}^\vee\otimes U_{a-1,b-2}
  \oplus
  W_{m-1}^\vee\otimes U_{a-1,b}
\\
\to&
  W_{m-4}^\vee\otimes U_{a,b}
\to0,
\end{aligned}
\]
where \(\dim U_{p,q}=P_1(p)P_1(q)\) and \(W_j=0\) for \(j<0\).  These finite
complexes are the formula rules for the one-top \(\bP^3_z\) entries;
closed numerical formulae require the separate finite-rank certificates.
\end{theorem}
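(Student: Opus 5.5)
The plan is to obtain \eqref{eq:7707-p3-resolution} as a graded slice of the Koszul complex of the pair \((F,G)\) over the full trigraded ring \(W\otimes U=\bC[x_0,x_1,y_0,y_1,z_0,\ldots,z_3]\). I will then read off \eqref{eq:7707-p3-HS} by additivity, and finally Serre-dualize degreewise in \(z\).

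\emph{Step 1: regularity.} Grade \(W\otimes U\) by \((x,y,z)\)-degree. Then \(F\) has degree \((1,0,3)\) and \(G\) has degree \((1,2,1)\). Since \(W\otimes U\) is a UFD, \(F,G\) form a regular sequence as soon as they have no common factor. For general \(A_0,A_1\) without common factor, \(F=x_0A_0+x_1A_1\) is irreducible, because it is linear in \(x\) with coprime coefficients. It therefore suffices to see that \(F\nmid G\), which is clear because \(G\) is nonzero of \(z\)-degree \(1<3\). The Koszul complex
\[
0\to (W\otimes U)(-2,-2,-4)\to (W\otimes U)(-1,0,-3)\oplus(W\otimes U)(-1,-2,-1)\xrightarrow{(F\ G)} W\otimes U\to (W\otimes U)/(F,G)\to0
\]
is then exact.

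\emph{Step 2: slicing in \(z\).} All differentials are homogeneous, so the \(z\)-degree-\(m\) part of this exact complex is again exact. It is a complex of \(U\)-modules, since \(U\) acts in \(z\)-degree zero. The \(z\)-degree-\(m\) part of \((W\otimes U)(-d_x,-d_y,-d_z)\) is \(W_{m-d_z}\otimes U(-d_x,-d_y)\). This is a free \(U\)-module because \(W_{m-d_z}\) is finite-dimensional, and it vanishes when \(m<d_z\). This gives exactly \eqref{eq:7707-p3-resolution}: the \(F\)-summand contributes \(W_{m-3}\otimes U(-1,0)\) and the \(G\)-summand contributes \(W_{m-1}\otimes U(-1,-2)\). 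It also gives \(N_m=0\) for \(m<0\). Taking alternating sums of the \(U\)-Hilbert series, with \(\HS(U;u,v)=1/((1-u)^2(1-v)^2)\) and \(\dim W_j=P_3(j)\), yields \eqref{eq:7707-p3-HS}.

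\emph{Step 3: Serre duality.} For \(k_z=-m\), the Koszul summands have \(z\)-degrees \(-m,-m-3,-m-1,-m-4\). Bott--Borel--Weil and Serre duality give
\[
H^3(\bP^3,\cO(-m-d))\simeq W_{m+d-4}^\vee,
\]
so the dual spaces are \(W_{m-4}^\vee,\ W_{m-1}^\vee,\ W_{m-3}^\vee,\ W_m^\vee\). The induced maps are the \(\bC\)-linear transposes in \(z\) of multiplication by the \(z\)-parts of \(F,G\), while the \((x,y)\)-parts still act by \(U\)-multiplication. Because of this, the arrows reverse and the \(U\)-shifts are those of the original Koszul terms. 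Bookkeeping of these shifts, normalizing the target \(W_{m-4}^\vee\otimes U\) to be unshifted, produces \eqref{eq:7707-p3-dual-complex}. The concrete \((a,b)\)-pieces then follow from \(\dim U_{p,q}=P_1(p)P_1(q)\).

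\emph{Main obstacle.} I expect the main obstacle to be Step 3, not the exactness in Steps 1--2. There I must check carefully that the \v{C}ech/Serre-duality identification is compatible with the \(U\)-module structure and with the signs and shifts, so that the dualized row is literally the displayed \(U\)-graded complex. I would do this by fixing monomial bases in \(z\) and verifying that the transpose of multiplication by a \(z\)-monomial is the corresponding contraction, exactly as was done for the \(\bP^2_z\) convolution \(\Theta_{p,q}\) in the \(7880\) section. The theorem explicitly does not claim the ranks of the resulting maps, so no further genericity input is needed beyond Step 1.
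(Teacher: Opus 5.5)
Your proposal is correct and follows the paper's own argument essentially step for step. You establish regularity of \(F,G\), take the exact \(z\)-degree-\(m\) slice of their Koszul complex to get the \(U\)-free resolution and, by additivity, the Hilbert series, and then apply Serre duality on \(\bP^3_z\) to produce the dual row; your coprimality argument in the UFD \(W\otimes U\) simply makes explicit the regular-sequence claim that the paper asserts for a general member. The only thing to tidy is the phrase ``the arrows reverse'': only the \(z\)-factors are transposed, while the complex keeps the Koszul direction from the \(U(-2,-2)\) term to the unshifted \(U\) term, exactly as in the displayed dual complex.
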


\begin{proof}
For a general member, \(F,G\) form a regular sequence, so their two-step
Koszul resolution remains exact after taking fixed \(z\)-degree \(m\), giving
\eqref{eq:7707-p3-resolution} and the Hilbert series above.  When the
\(\bP^3_z\)-factor is in top cohomology, Serre duality identifies the
row with the dual fixed-degree piece, namely
\eqref{eq:7707-p3-dual-complex}.  Taking the required remaining
\((x,y)\)-graded piece gives the displayed finite complex.
\end{proof}

\paragraph{\texorpdfstring{\(k=(-a,-b,m)\), \(m=1,2\).}{k=(-a,-b,m), m=1,2.}}
Put \(S=\bP^1_x\times\bP^1_y\) and \(L=\cO_S(1,2)\).  The \(z\)-linear
equation can be written as
\[
  G=\sum_{i=0}^3 z_iG_i(x,y),\qquad G_i\in H^0(S,L),
\]
and gives, on a nonempty open locus, the map
\[
  \Gamma_G:\cO_S^4\longrightarrow L,
  \qquad
  \Gamma_G(f_0,\ldots,f_3)=\sum_{i=0}^3 f_iG_i,
\]
with kernel sequence
\[
0\to M\to \cO_S^4\xrightarrow{\Gamma_G}L\to0 .
\]
On this open locus the four \(G_i\) have no common zero on \(S\), so
\(\Gamma_G\) is surjective.
For \(D=(a-2,b-2)\), the \(m=1\) map is the induced map on global sections
after twisting by \(\cO_S(D)\).  For \(m=2\), put
\(\mathcal F=\operatorname{Sym}^2M\).  The active map is induced by the polarized
contraction
\[
  uv\longmapsto u\otimes \Gamma_G(v)+v\otimes \Gamma_G(u),
\]
equivalently by
\[
0\to \mathcal F\to \operatorname{Sym}^2\bC^4\otimes\cO_S
\to \bC^4\otimes L\to0 .
\]
Indeed, on each fibre one may choose a lift of the one-dimensional quotient
of the surjection \(E=\bC^4\to L_s\), so that
\(E\simeq \ker(\Gamma_G)_s\oplus \widetilde L_s\).  In this splitting the
displayed contraction kills exactly the
\(\operatorname{Sym}^2(\ker(\Gamma_G)_s)\) summand and is surjective onto
\(E\otimes L_s\).
If \(\phi_m(D)\) denotes the resulting global-section map after twisting by
\(\cO_S(D)\), then the rank defect appearing in the Koszul row is
\[
  q_m(a,b)=\dim\coker\phi_m(D).
\]
For the range used in Theorem~\ref{thm:7707-range-m12-analytic}, namely
\(a,b\ge2\), the ambient bundles occurring in the two displayed sequences
have no \(H^1\).  Hence the cokernels of the global-section maps are
identified with
\[
  \coker\phi_1(D)\simeq H^1(S,M(D)),
  \qquad
  \coker\phi_2(D)\simeq H^1(S,\mathcal F(D)).
\]
The required dimensions reduce to line-bundle cohomology on \(\bP^1\) from
the pushforward splittings below; here \(p_x\) and \(p_y\) denote the
projections to \(\bP^1_x\) and \(\bP^1_y\), respectively:
\[
  p_{y*}M\simeq \cO_y(-2)^2,
  \qquad
  p_{x*}M\simeq \cO_x(-3),
\]
\[
  p_{y*}\mathcal F\simeq \cO_y(-4)^3,
  \qquad
  R^1p_{y*}\mathcal F\simeq \cO_y(4),
\]
\[
  p_{x*}\mathcal F\simeq \cO_x(-6),
  \qquad
  R^1p_{x*}\mathcal F\simeq \cO_x(2)^3,
\]
and
\[
  p_{x*}(\mathcal F(0,1))\simeq \cO_x(-4)^4,
  \qquad
  R^1p_{x*}(\mathcal F(0,1))=0.
\]
These splittings give the Hilbert numerators displayed in
Theorem~\ref{thm:7707-range-m12-analytic}.  Nonemptiness of the splitting
locus is certified at the deterministic seed modulo \(1000003\): the
certificate contains enough full-rank fibre and global-section minors to fix
the dimensions of the pushforwards and their low twists.  On \(\bP^1\) those
dimensions determine the displayed splitting types.  The same minors are
nonzero integer polynomials in the coefficients, so the splittings hold on a
nonempty characteristic-zero open set.

\paragraph{\texorpdfstring{\(k=(-a,-b,3)\) on finite-width strips.}{k=(-a,-b,3) on finite-width strips.}}
The active one-top \(\bP^3_z\) row is
\[
\phi_3:
W_3^\vee\otimes U(-2,-2)
\longrightarrow
W_0^\vee\otimes U(-1,-2)\oplus W_2^\vee\otimes U(-1,0),
\]
where the two components are induced by the \((1,0,3)\) and \((1,2,1)\)
equations.  Thus
\[
  q_3(a,b)=\dim\coker(\phi_3)_{a,b},
\]
and Euler characteristic gives
\[
  h^\bullet(X,\cO_X(-a,-b,3))
  =(0,q_3,q_3+\chi_3,0).
\]
For fixed \(a\), the map becomes a one-variable graded presentation over
\(R_y=\bF_p[y_0,y_1]\),
\[
  R_y(-2)^{20(a-1)}
  \longrightarrow
  R_y(-2)^a\oplus R_y^{10a},
\]
and for fixed \(b\), it becomes a presentation over
\(R_x=\bF_p[x_0,x_1]\),
\[
  R_x(-2)^{20(b-1)}
  \longrightarrow
  R_x(-1)^{11b+9}.
\]
The strip certificate constructs these presentations for
\(2\le a\le10\) and \(2\le b\le10\) at the seed-\(7707\) specialization
modulo \(1000003\).  Singular computes standard bases and Hilbert numerators
for the \(18\) cokernel modules; the checker compares the resulting Hilbert
functions with the stated strip formula in every chamber degree of the free
variable.  The certificates record the leading terms, reduction identities,
and pivot minors used in these standard-basis computations; their leading
coefficients are nonzero at the seed, so the same Hilbert functions hold on a
nonempty characteristic-zero open set.

\paragraph{\texorpdfstring{\(k=(0,-b,c)\) in the finite \(b\)-box.}{k=(0,-b,c) in the finite b-box.}}
The only nonzero second-page map is the moving-resultant map
\[
  \Phi_{b,c}:R_{c-4}^{b+1}\longrightarrow R_c^{b-1},
  \qquad R=\bC[z_0,\ldots,z_3],
\]
represented by the Toeplitz matrix
\[
  \begin{pmatrix}
  f_0&f_1&f_2&0&\cdots&0\\
  0&f_0&f_1&f_2&\cdots&0\\
  \vdots&&\ddots&\ddots&\ddots&\vdots\\
  0&\cdots&0&f_0&f_1&f_2
  \end{pmatrix},
\]
where
\[
  \Delta=f_0y_0^2+f_1y_0y_1+f_2y_1^2,\qquad f_i\in R_4,
\]
is the \(x\)-resultant.
In this range,
\[
  h^0=\dim\ker\Phi_{b,c},
  \qquad
  h^1=\dim\coker\Phi_{b,c},
  \qquad h^2=h^3=0.
\]
The source has dimension \((b+1)P_3(c-4)\) and the target has dimension
\((b-1)P_3(c)\), so Euler characteristic determines \(h^0\) once
\(h^1\) is known.  The finite \(b\)-box verifier constructs
\(M_b=\coker\Phi_b\) for every \(2\le b\le10\) over the seed modulo
\(1000003\), computes
\[
  \HS(M_b;t)=\frac{N_b(t)}{(1-t)^4},
\]
and checks the recurrence for \(N_b\) against direct finite-field ranks in
the initial chamber degrees.  The recorded leading terms, reductions, and
pivot minors define nonempty open conditions, so the Hilbert series and the
displayed cohomology formula hold on a nonempty characteristic-zero open set.

\paragraph{\texorpdfstring{\(k=(-a,-b,c)\) in the finite \((a,b)\)-box.}{k=(-a,-b,c) in the finite (a,b)-box.}}
After Serre duality on the two \(\bP^1\) factors, the active row is
\[
0\to
  W_c^\vee\otimes U_{a-2,b-2}
\xrightarrow{d_0}
  W_{c-3}^\vee\otimes U_{a-1,b-2}
  \oplus
  W_{c-1}^\vee\otimes U_{a-1,b}
\xrightarrow{d_1}
  W_{c-4}^\vee\otimes U_{a,b}
\to0 .
\]
Here \(W=\bC[z_0,\ldots,z_3]\) and
\(U_{p,q}=H^0(\bP^1_x\times\bP^1_y,\cO(p,q))\).  The endpoint terms control
the cohomology:
\[
  h^0=\dim\coker(d_1),
  \qquad
  h^2=\dim\ker(d_0),
\]
and \(h^1\) is determined by the Euler characteristic.  For the right
endpoint, using the notation \(F=x_0A_0+x_1A_1\) and
\(G=x_0B_0+x_1B_1\) from the main text, the Sylvester
identities
\[
  B_1F-A_1G=x_0\Delta,
  \qquad
  -B_0F+A_0G=x_1\Delta
\]
identify \(\coker(d_1)\) in \(x\)-degree \(a\) with the remaining determinant
module of the finite \(b\)-box, shifted by \(z\)-degree \(3a\).  Hence
\[
  \sum_{c\ge4}\dim\coker(d_1)_c\,t^c
  =
  \frac{t^{3a}R_b(t)}{(1-t)^4}.
\]
For the left endpoint, dualizing \(d_0\) gives
\[
  \dim\ker(d_0)_c=\dim\coker(d_0^\vee)_c.
\]
For all \(81\) pairs \(2\le a,b\le10\), the endpoint certificate constructs
the graded image of \(d_0^\vee\), computes a Singular standard basis, and
obtains
\[
  \HS(\coker(d_0^\vee);t)=\frac{Q_{a,b}(t)}{(1-t)^4}.
\]
The checker compares these endpoint Hilbert functions with the stated
piecewise formula for \(q_{\mathrm L}(a,b,c)\) on the finite chamber and checks
additional direct finite-field rank matches, including Serre-dual samples.
As before, the recorded leading terms, reductions, and modular pivot minors
give a generic characteristic-zero certificate.

\paragraph{Boundary, \(d_2\), and certificate closure.}
Boundary rows not covered by the one-top modules introduce no new rank model:
they are either Serre-dual to ranges already treated, saturated
zero-dimensional complete intersections in the remaining variables of length \(10\), or
irrelevant-span rows whose positive-degree
quotients vanish.  The only possible second-page maps are the entries
listed in the ancillary data: the source-zero, target-zero, and
split-boundary entries vanish, while the determinant entries are
multiplication by
\[
  \Delta=A_0B_1-A_1B_0 .
\]
Thus their rank defect in remaining \((y,z)\)-degree \((b,c)\) is
\[
  \dim (R_{yz}/(\Delta))_{(b,c)},\qquad
  \HS(R_{yz}/(\Delta);u,v)
  =
  \frac{1-u^2v^4}{(1-u)^2(1-v)^4}.
\]
Since the Koszul complex has length two, there are no higher differentials
after \(d_2\).  All rank, saturation, and nonvanishing conditions used in the
\(7707\) audit are finitely many minor, leading-coefficient, or
standard-basis reduction conditions, so the certificate data place them on a
common nonempty Zariski open subset of smooth members.  For one-top
\(\bP^3_z\) ranges outside the displayed
\(7707\) formula ranges, the rule remains the finite complex
\eqref{eq:7707-p3-dual-complex}; closed numerical formulae in any
finite region require the corresponding finite-rank certificate.

\section{\texorpdfstring{Details for the CICY \(7863\) formulae}{Details for the CICY 7863 formulae}}
\label{app:7863-technical}

This appendix contains the proof-level material behind
Section~\ref{subsec:new-7863}.  Write
\[
  A=\bC[x_0,\ldots,x_3],\qquad
  S=\bC[y_0,\ldots,y_3],\qquad
  R=A\otimes S,
\]
and let \(Q,L_1,L_2\) have bidegrees \((2,2),(1,1),(1,1)\).

\paragraph{The mixed Koszul row.}
The Koszul shifts are
\[
  (0,0),\qquad (2,2),\qquad (1,1),\qquad (1,1),
\]
so the relevant mixed row is obtained by taking ambient cohomology of the
twisted Koszul complex
\[
0\to\cO(a-4,b-4)
\to \cO(a-3,b-3)^{\oplus2}\oplus\cO(a-2,b-2)
\]
\[
\to \cO(a-2,b-2)\oplus\cO(a-1,b-1)^{\oplus2}
\to \cO(a,b)\to\cO_X(a,b)\to0.
\]
When the \(x\)-factor is in top cohomology, put \(m=-a-4\) and dualize by
Serre duality.  The full Cell \(9\) row is then
\begin{equation}
\begin{aligned}
A_{m+4}\otimes S_{b-4}
&\longrightarrow
(A_{m+3}\otimes S_{b-3})^{\oplus2}
  \oplus A_{m+2}\otimes S_{b-2}  \\
&\longrightarrow
A_{m+2}\otimes S_{b-2}
  \oplus (A_{m+1}\otimes S_{b-1})^{\oplus2}
\longrightarrow
A_m\otimes S_b .
\end{aligned}
\label{eq:7863-operator-mixed-row}
\end{equation}
The maps are induced by contraction with \(Q,L_1,L_2\) in the \(x\)-variables
and multiplication in the \(y\)-variables.  The homology of this row is
denoted by \(\mathcal H_i(m,b)\) in the main text.

\paragraph{One-parameter endpoint rows.}
We now specialize \eqref{eq:7863-operator-mixed-row} to the one-parameter
boundary rows used in the main table.  The variable \(r\) below is the local
endpoint degree on that boundary ray.  Thus \(r=m=-a-4\) on the
\(a\le -4\) rays, while after exchanging the two \(\bP^3\)-factors or applying
Serre duality it becomes the corresponding shifted coordinate listed in the
main table.  Put
\[
  V_y=H^0(\bP^3_y,\cO(1)).
\]
For \(j=2,3\), and with the convention
\(\operatorname{Sym}^dV_y^\vee=0\) for \(d<0\), the active row has the form
\[
  C^{(j)}_2(r)\longrightarrow C^{(j)}_1(r)
  \longrightarrow C^{(j)}_0(r),
\]
where
\[
\begin{aligned}
C^{(j)}_2(r)
&=
\left(A_{r+3}\otimes\operatorname{Sym}^{j-3}V_y^\vee\right)^{\oplus2}
\oplus A_{r+2}\otimes\operatorname{Sym}^{j-2}V_y^\vee,\\
C^{(j)}_1(r)
&=
A_{r+2}\otimes\operatorname{Sym}^{j-2}V_y^\vee
\oplus
\left(A_{r+1}\otimes\operatorname{Sym}^{j-1}V_y^\vee\right)^{\oplus2},\\
C^{(j)}_0(r)
&=
A_r\otimes\operatorname{Sym}^jV_y^\vee .
\end{aligned}
\]
For Cell \(7\) one has \(j=2\) and \(r=-a-4\); for Cell \(8\) one has
\(j=3\) and \(r=-a-4\).  The \(x/y\)-swapped cells and the Serre-dual cells
use the same endpoint calculation with \(r\) as stated in the main table.

The endpoint contribution is the cokernel of the rightmost map
\[
  \beta^{(j)}_r:C^{(j)}_1(r)\longrightarrow C^{(j)}_0(r).
\]
All vector spaces here are finite-dimensional, so
\[
  \dim\coker\beta^{(j)}_r=\dim\ker\left(\beta^{(j)}_r\right)^\vee .
\]
Thus we compute the cokernel by dualizing the map and computing the kernel
of the dual map.  Write
\[
  L_\nu=\sum_{i=0}^3 y_i\ell_{\nu i}(x)\quad(\nu=1,2),
  \qquad
  Q=\sum_{0\le i\le h\le3}y_iy_hq_{ih}(x),
\]
with \(\ell_{\nu i}\in A_1\) and \(q_{ih}\in A_2\).  If \(\iota_i\)
denotes contraction by \(y_i\), the dual endpoint map
\(\alpha^{(j)}_r=(\beta^{(j)}_r)^\vee\) is
\[
\alpha^{(j)}_r:
A_r\otimes\operatorname{Sym}^jV_y^\vee
\longrightarrow
\left(A_{r+1}\otimes\operatorname{Sym}^{j-1}V_y^\vee\right)^{\oplus2}
\oplus A_{r+2}\otimes\operatorname{Sym}^{j-2}V_y^\vee ,
\]
\[
  f\otimes\eta\longmapsto
  \left(
  \sum_i \ell_{1i}f\otimes\iota_i\eta,\,
  \sum_i \ell_{2i}f\otimes\iota_i\eta,\,
  \sum_{i\le h} q_{ih}f\otimes\iota_i\iota_h\eta
  \right),
\]
with the evident symmetric convention in the last summand.  We use the
divided-power normalization on symmetric powers; equivalently, in ordinary
monomial bases the same formula holds up to the standard nonzero scalar
factors, which are included in the matrices used in the certificates.  This
is just the transpose rule: multiplication by \(y_i\) on symmetric powers
dualizes to contraction \(\iota_i\), and multiplication by a quadratic
monomial dualizes to the corresponding double contraction.  Thus the
endpoint cokernel has dimension
\[
  N_j(r)=\dim\ker\alpha^{(j)}_r .
\]
The other endpoint of this three-term row is not determined by Euler
characteristic alone; the needed input is left-endpoint injectivity.  In the
\(j=2\) row the source is just \(A_{r+2}\), coming from the Koszul summand
\(L_1\wedge L_2\).  The left map sends \(v\in A_{r+2}\) to the two products
\[
  (L_2v,\,-L_1v)
  \in
  \left(A_{r+1}\otimes V_y^\vee\right)^{\oplus2}.
\]
Here multiplication by \(L_\nu=\sum_i y_i\ell_{\nu i}(x)\) means the four
components \(\ell_{\nu i}v\).  Since \(A\) is an integral domain and a
general \(L_\nu\) is nonzero, \(L_\nu v=0\) forces \(v=0\).  Thus the
\(j=2\) left endpoint is injective.  In the \(j=3\) row the source has three
parts.  Write an element as \((u_1,u_2,v)\), where \(u_1,u_2\) come from the
two \(Q\wedge L_\nu\) summands and \(v\) comes from \(L_1\wedge L_2\).  The
kernel equations are, up to the Koszul signs,
\[
  L_1u_1+L_2u_2=0,\qquad
  Qu_1+L_2v=0,\qquad
  Qu_2-L_1v=0.
\]
For a fixed value of \(r\), these equations are an ordinary finite matrix
after monomial bases are chosen.  Injectivity means exactly that this matrix
has full column rank.  The endpoint package checks this full-column-rank
condition on an explicit specialization; equivalently, one maximal minor is
nonzero there, so the same injectivity holds on a nonempty Zariski-open set
of triples \((Q,L_1,L_2)\).  Thus, for a general member, the leftmost map
\[
  C^{(j)}_2(r)\longrightarrow C^{(j)}_1(r)
\]
is injective, so \(H_2(C^{(j)}_\bullet(r))=0\).  Therefore, once the right
endpoint \(N_j(r)=\dim H_0(C^{(j)}_\bullet(r))\) is known, the middle
homology is forced by the Euler characteristic of the three-term complex:
\[
  \dim H_1(C^{(j)}_\bullet(r))
  =
  N_j(r)-\bigl(\dim C^{(j)}_0(r)-\dim C^{(j)}_1(r)+\dim C^{(j)}_2(r)\bigr).
\]
With \(P(n)=\dim A_n\), this gives
\[
  \dim H_1(C^{(2)}_\bullet(r))=N_2(r)+8P(r+1)-10P(r),
\]
and
\[
  \dim H_1(C^{(3)}_\bullet(r))
  =
  N_3(r)+20P(r+1)-20P(r)-2P(r+3).
\]
For the Hilbert-series computation we regard the maps \(\alpha^{(j)}_r\), for
all \(r\ge0\), as the degree pieces of one graded module map over
\(A=\mathbb C[x_0,\ldots,x_3]\).  For \(j=2\), after choosing monomial bases
of \(\operatorname{Sym}^2V_y^\vee\), \(V_y^\vee\), and \(A_r\), the two
\(L_\nu\)-blocks have entries \(\ell_{\nu i}(x)\) and the \(Q\)-block has
entries \(q_{ih}(x)\).  We compute
\[
  K_2=\ker\alpha^{(2)}
\]
as the syzygy module of this graded matrix.  For the rational specialization
recorded in the repository, Singular gives the minimal free resolution
\[
  0\longrightarrow A(-9)^{\oplus8}
  \longrightarrow A(-8)^{\oplus10}
  \longrightarrow K_2\longrightarrow0,
\]
and the certificate records the corresponding leading terms and reduction
identities.  Their leading coefficients are nonzero at the specialization;
hence the same initial module, and therefore the same Hilbert series, holds
on a nonempty Zariski-open subset of the parameter space.  Thus, for a
general member,
\[
\HS(K_2;t)=\frac{10t^8-8t^9}{(1-t)^4}.
\]
Thus \(N_2(r)=\dim (K_2)_r\).  For \(K_3=\ker\alpha^{(3)}\), we use the same
split-linear construction.  A small-integer rational specialization gives a
standard-basis replay over \(\mathbb Q\), again with recorded leading terms
and reduction identities.  Singular gives the minimal free resolution
\[
  0\longrightarrow
  A(-16)^{\oplus4}\oplus A(-17)^{\oplus2}
  \longrightarrow
  A(-15)^{\oplus20}\oplus A(-16)^{\oplus4}
  \longrightarrow
  A(-14)^{\oplus20}
  \longrightarrow K_3\longrightarrow0.
\]
The same open-condition argument gives, for a general member,
\[
\HS(K_3;t)=\frac{20t^{14}-20t^{15}+2t^{17}}{(1-t)^4}.
\]
These are the \(N_2\) and \(N_3\) series used in the table.  The other
nonzero entry of the cohomology vector is then fixed by the Euler
characteristic of the three-term row, giving the displayed additions by
\(\Delta_2\) and \(\Delta_3\).

\paragraph{Finite-box rows.}
We now prove the finite-box theorem for Cell \(9\).  Write \(a=-m-4\),
\(m\ge0\).  The reduced object is the
four-term complex
\[
\begin{aligned}
C_3&=A_{m+4}\otimes S_{b-4},\\
C_2&=(A_{m+3}\otimes S_{b-3})^{\oplus2}
  \oplus A_{m+2}\otimes S_{b-2},\\
C_1&=A_{m+2}\otimes S_{b-2}
  \oplus (A_{m+1}\otimes S_{b-1})^{\oplus2},\\
C_0&=A_m\otimes S_b .
\end{aligned}
\]
The differentials are those of \eqref{eq:7863-operator-mixed-row}, induced
by \(L_1,L_2,Q\).  Its Euler characteristic is
\begin{equation}
  E(m,b)=
  P(m+4)P(b-4)-2P(m+3)P(b-3)
  +2P(m+1)P(b-1)-P(m)P(b).
  \label{eq:7863-mixed-euler}
\end{equation}

The right endpoint is the map \(C_1\to C_0\).  After dualizing, its cokernel
dimension is the kernel dimension of
\[
A_m\otimes\operatorname{Sym}^b(\bC^4_y)^\vee
\longrightarrow
(A_{m+1}\otimes\operatorname{Sym}^{b-1}(\bC^4_y)^\vee)^{\oplus2}
\oplus A_{m+2}\otimes\operatorname{Sym}^{b-2}(\bC^4_y)^\vee .
\]
We denote this right-endpoint Hilbert function by \(N_b(m)\).  For fixed
\(b\), the maps above are the
degree-\(m\) pieces of one graded \(A=\bC[x_0,\ldots,x_3]\)-linear map
\[
  \alpha_b:
  A\otimes\operatorname{Sym}^b(\bC^4_y)^\vee
  \longrightarrow
  \bigl(A(1)\otimes\operatorname{Sym}^{b-1}(\bC^4_y)^\vee\bigr)^{\oplus2}
  \oplus
  A(2)\otimes\operatorname{Sym}^{b-2}(\bC^4_y)^\vee ,
\]
with \(A(r)_m=A_{m+r}\).  Hence
\[
  N_b(m)=\dim(\ker\alpha_b)_m,
\]
and the required data are exactly the Hilbert numerator of the graded module
\(\ker\alpha_b\).

This numerator is not obtained by extrapolating from finitely many values of
\(m\).  Let \(M=R/(L_1,L_2)\), and write \(M_{s,b}\) for its bidegree
\((s,b)\) part.  The same reduction gives
\[
  N_b(m)=0\qquad\text{for }m<6b-4.
\]
For \(m=6b-4+s\), \(s\ge0\), it gives
\[
  N_b(6b-4+s)
  =
  \delta_{s,0}P(b-4)
  +
  \dim\coker\left(M_{s-2,b-2}\xrightarrow{\;Q\;}M_{s,b}\right),
  \qquad s\ge0,
\]
with the convention that negative bidegrees contribute zero.  On the open
locus where
\[
  (L_1,L_2):Q=(L_1,L_2),
\]
the quadratic row is a nonzerodivisor on \(M\), so the displayed map is
injective in every degree.  Since \(L_1,L_2\) form a regular sequence of
bidegree \((1,1)\),
\[
  \operatorname{HS}_M(u,v)
  =
  \frac{(1-uv)^2}{(1-u)^4(1-v)^4}.
\]
Taking the coefficient of \(v^b\) gives
\[
  \operatorname{HS}(M_{*,b};t)
  =
  \frac{P(b)-2P(b-1)t+P(b-2)t^2}{(1-t)^4}.
\]
Therefore
\[
\begin{aligned}
\sum_{s\ge0}N_b(6b-4+s)t^s
&=
P(b-4)+\operatorname{HS}(M_{*,b};t)
-t^2\operatorname{HS}(M_{*,b-2};t)\\
&=
\frac{
n_{b,0}+n_{b,1}t+n_{b,2}t^2+n_{b,3}t^3
}{(1-t)^4},
\end{aligned}
\]
where
\[
\begin{aligned}
n_{b,0}&=P(b)+P(b-4),\\
n_{b,1}&=-2P(b-1)-4P(b-4),\\
n_{b,2}&=6P(b-4),\\
n_{b,3}&=2P(b-3)-4P(b-4).
\end{aligned}
\]

Equivalently, for \(b=4,\ldots,10\),
\begin{equation}
  \sum_{m\ge0}N_b(m)t^m
  =
  \frac{t^{6b-4}
  \left(n_{b,0}+n_{b,1}t+n_{b,2}t^2+n_{b,3}t^3\right)}
  {(1-t)^4},
  \label{eq:7863-endpoint-box-numerators}
\end{equation}
with coefficients
\[
\begin{array}{c|rrrr}
b & n_{b,0} & n_{b,1} & n_{b,2} & n_{b,3}\\
\hline
4 & 36 & -44 & 6 & 4\\
5 & 60 & -86 & 24 & 4\\
6 & 94 & -152 & 60 & 0\\
7 & 140 & -248 & 120 & -10\\
8 & 200 & -380 & 210 & -28\\
9 & 276 & -554 & 336 & -56\\
10 & 370 & -776 & 504 & -96.
\end{array}
\]
The accompanying endpoint certificates replay this quotient-rank calculation:
for each \(b=4,\ldots,10\), they rebuild the relevant matrices over
\(\mathbb F_{1000003}\), check the stored hashes, leading terms, reduction
identities, and pivot minors, and recover the numerator in
\eqref{eq:7863-endpoint-box-numerators}.

The next input is the left end of the reduced Koszul complex.  For every
\(b=4,\ldots,10\), the reduced leftmost map corresponding to
\[
  d_3:C_3\longrightarrow C_2
\]
has a full-column-rank certificate.  The
certificate lists pivot rows for a square minor using all columns and
checks that its determinant is nonzero modulo \(1000003\).  Hence the
leftmost map is injective on the certified open set.

For each \(b=4,\ldots,10\), we rebuild all relevant matrix blocks in the
reduced Koszul complex, compute their exact ranks, verify that adjacent
compositions vanish, and then compute the middle homology directly from
\(\dim H=\dim\ker-\dim\operatorname{im}\).  The result is that no
simultaneous middle correction remains, and
\[
  \mathcal H_3=0,\qquad
  \mathcal H_0=N_b,\qquad
  \mathcal H_1=[E+N_b]_+,\qquad
  \mathcal H_2=[-E-N_b]_+ .
\]

All matrices above have entries in the integer polynomial ring generated by
the coefficients of \(Q,L_1,L_2\).  A pivot minor or standard-basis leading
coefficient which is nonzero modulo \(1000003\) is the reduction of a
nonzero integer polynomial in those coefficients.  Thus each recorded rank
or initial-module condition defines a nonempty Zariski-open condition in
characteristic zero.  Intersecting the finitely many opens from the endpoint
numerator, left injectivity, and middle-rank certificates with the smoothness
open for the complete intersection gives the general smooth member claimed in
Theorem~\ref{thm:7863-cell9-finite-box}.  Cell \(73\) follows by
Calabi--Yau Serre duality.

\bibliographystyle{unsrt}
\bibliography{references}

\end{document}